\documentclass[12pt]{amsart}

\usepackage{geometry}
\usepackage{amsaddr}
\usepackage{amsmath} 
\usepackage{amsthm} 
{
	\newtheorem{theorem}{Theorem}
	\newtheorem{lemma}{Lemma}
    \newtheorem{corollary}{Corollary}
	 \newtheorem{assumption}{Assumption}
	 \newtheorem{remark}{Remark}
  }
\usepackage{amssymb, mathtools}
\usepackage{amsfonts}
\usepackage{multirow}
\usepackage{booktabs}
\usepackage[authoryear]{natbib}
\usepackage{graphicx}
\usepackage{enumitem}

\usepackage{bm}
\usepackage{bbm}
\usepackage{algorithm}
\usepackage{algorithmic}
\usepackage{hyperref}
\usepackage{cleveref}
\usepackage{commath}
\usepackage{color}
\usepackage{soul}
\definecolor{purple}{rgb}{0.5,0,0.5}
\definecolor{orange}{rgb}{1,0.45,0}

\crefname{theorem}{Theorem}{Theorems}
\Crefname{theorem}{Theorem}{Theorems}
\crefname{algorithm}{Algorithm}{Algorithms}
\Crefname{algorithm}{Algorithm}{Algorithms}
\usepackage{comment}
\def\E{\mathbbmss{E}}

\def\R{\mathbbmss{R}}

\def\Var{\mathbbmss{V}\text{ar}}

\def\Cov{\mathbbmss{C}\text{ov}}
\def\Normal{\mathcal{N}}

\def\Ind{\bm{\mathbbm{1}}}

\def\transpose{\prime}

\def\CiD{\Longrightarrow_d}
\def\CiP{\stackrel{p}{\longrightarrow}}

\def\bZ{\mathbf{Z}}

\def\bW{\mathbf{W}}
\def\bX{\mathbf{X}}

\DeclareMathOperator*{\argmin}{argmin}

\def\bT{\bm{T}}
\def\bSigma{\bm{\Sigma}}

\usepackage{bbm}

\title[V-fold jackknife]{The V-fold jackknife for semiparametric inference: variance estimation, confidence intervals, and simultaneous confidence bands}
\author{Yi Li\textsuperscript{1}, Ashkan Ertefaie\textsuperscript{2}, Mark van der Laan\textsuperscript{1}}
\thanks{\textsuperscript{1}Division of Biostatistics, University of California, Berkeley}
\thanks{\textsuperscript{2}University of Pennsylvania}

\begin{document}
\raggedbottom

\begin{abstract}

For decades, the bootstrap has been a default tool for statistical inference because of its broad applicability and minimal analytic requirements. Although its validity is well understood for smooth parametric estimators, its theoretical properties for many modern semiparametric and machine-learning estimators remain largely unstudied. Nevertheless, bootstrap procedures are often used routinely in such settings, even when their validity is unknown and their computational cost is substantial. We develop the $V$-fold jackknife as a computationally efficient and theoretically justified alternative for semiparametric inference. The procedure requires only $V$ leave-fold-out refits and uses the empirical dispersion of jackknife pseudo-values to quantify uncertainty, without requiring analytic derivation or numerical evaluation of an influence function. For regular asymptotically linear estimators of pathwise differentiable parameters, we show that, for fixed $V$, the Studentized $V$-fold jackknife statistic converges to a $t$-distribution with $V-1$ degrees of freedom. Thus, valid confidence intervals are available even though the jackknife variance estimator does not converge in probability.  When $V\to\infty$, we establish consistency of the jackknife variance estimator at rate $V^{-1/2}$, allowing $V$ to diverge slowly, for example at rate $\log n$. We also develop simultaneous confidence bands based on the correct componentwise-Studentized limiting distribution. Finally, we extend the theory to generalized asymptotically linear estimators with diverging influence-function variance and slower-than-$\sqrt n$ convergence; scale invariance of Studentization eliminates the need to know the effective convergence rate. Simulations on the average treatment effect, Kaplan--Meier survival curve, and highly adaptive lasso dose-response curves confirm reliable inference, including where influence-function-based standard errors are anti-conservative or unstable.

\end{abstract}

\maketitle

\section{Introduction}

For decades, resampling methods have been central to modern statistical inference. 
The bootstrap, in particular, has become a default tool for variance estimation and 
confidence interval construction due to its broad applicability and minimal analytic 
requirements \citep{efron1979bootstrap,efron1993bootstrap,vanderVaart1998asymptotic}. 
However, in contemporary applications involving high-dimensional nuisance estimation, 
adaptive model selection, and machine learning algorithms, the validity of the ordinary 
nonparametric bootstrap is often unclear. Bootstrap resampling does not merely perturb 
the empirical distribution; it may also change the data-adaptive estimation path, including 
tuning-parameter selection, selected basis functions, internal cross-validation splits, 
and fitted nuisance estimators. As a result, the estimator computed on a bootstrap sample 
need not satisfy the same first-order expansion as the original estimator, and the bootstrap 
distribution need not consistently approximate the sampling distribution of the estimator.

This concern is especially pronounced for semiparametric and machine learning--based 
estimators, where valid inference typically relies on asymptotic linearity together with 
stable remainder terms. In practice, bootstrap standard errors are often reported for such 
estimators even when the regularity conditions needed for bootstrap validity have not been 
verified. This statistical issue is distinct from, and in many settings more fundamental 
than, computational cost. Repeatedly refitting a complex estimator across hundreds or 
thousands of bootstrap samples can be expensive, but even when such computation is 
feasible, the resulting inference may not be theoretically justified.

These considerations motivate an inferential framework that is both computationally lighter
than the bootstrap and more directly tied to the estimator's own first-order linearization.
We propose to use the grouped, or $V$-fold, jackknife for this purpose. The procedure
evaluates the estimator on genuine leave-fold-out subsamples, rather than on bootstrap
samples generated with replacement, and bases inference on the dispersion of fold-level
pseudo-values. For asymptotically linear estimators, these pseudo-values recover fold-level
influence-function averages up to negligible remainder terms, yielding Studentized inference
under transparent fold-level stability conditions.

In this paper, we develop a systematic theory for this $V$-fold jackknife approach as a
practical inference tool for regular asymptotically linear estimators of pathwise
differentiable parameters --- a class that includes targeted minimum loss-based estimators,
one-step estimators, and augmented IPW estimators --- and, more broadly, for estimators
with non-standard convergence rates whose influence-function variance diverges with sample
size. The proposed approach is attractive for three related reasons. First, it is
computationally efficient: it requires only $V$ refits of the estimator, compared with the
hundreds or thousands of refits typically used by the bootstrap, and valid asymptotic
inference already holds for fixed $V \ge 2$. Second, it avoids analytic influence-function
calculations in implementation. The practitioner needs only to evaluate the estimator on
the full sample and on leave-fold-out subsamples; the fold-level pseudo-values then provide
a data-driven approximation to the sampling variability. Third, Studentizing the $V$ fold-level pseudo-values with $t_{V-1}$ critical values provides the correct fixed-$V$ asymptotic degrees-of-freedom correction and acts as an exact correction for Gaussian fold means. When $n/V$ is moderately large, the fold-level averages are approximately normal, so the resulting Studentized statistic is well approximated by a $t_{V-1}$ distribution. For linear estimators, this is especially transparent because the statistic is built from sample means over $n/V$ i.i.d.\ observations, and the approximation improves as $V$ decreases and each block contains more observations. For asymptotically linear estimators, the same logic applies to the leading linear component, while the second-order remainder may still have a finite-sample impact even though it is asymptotically negligible. The heavier tails of $t_{V-1}$ for small $V$ automatically widen the intervals, yielding more reliable coverage than normal-based methods.
A further advantage is scale invariance. In non-$\sqrt n$ settings where the variance
$\sigma_n^2$ of the influence-function sequence diverges (e.g., dose-response curve estimation), the unknown convergence rate
cancels between the numerator and denominator of the Studentized statistic. Thus the same
procedure can be used without knowing the effective dimension or deriving the growing
influence curve explicitly.

Our theoretical and methodological contributions are sixfold. 
First, we extend the classical fixed-$V$ $t_{V-1}$ result of 
\citet{Brillinger1964} from maximum likelihood estimators to the full class of 
regular asymptotically linear estimators of pathwise differentiable parameters 
(\cref{thm:fixedV}), and show that the resulting confidence interval uses the 
correct fixed-$V$ asymptotic degrees-of-freedom correction 
(\cref{cor:CI}; see also \cref{rem:finite_sample}). 
Second, we establish that, when $V \to \infty$, the jackknife variance estimator 
consistently estimates the asymptotic variance at rate $1/\sqrt{V}$, subject to 
an explicit upper-growth condition on $V$ that ensures fold-level remainder 
stability; in particular, slow choices such as $V=\log n$ are covered under the 
sufficient conditions in Appendix~\ref{app:V_rate} 
(\cref{thm:divergingV}). 
Third, we develop, to our knowledge, the first simultaneous confidence band 
construction based on the grouped jackknife. The band uses a 
componentwise-Studentized limit distribution that accounts for the Wishart 
structure of the fold-level denominators, and requires neither analytic 
influence-function computation 
\citep{vanderlaan2019simultaneous,DiazvanderLaan2013} nor many bootstrap refits 
\citep{MontielOleaPlagborgMoller2019} 
(\cref{thm:multi_t}). 
Fourth, we extend the framework to generalized asymptotically linear estimators 
whose influence-curve variance diverges with $n$, so that convergence occurs at 
a rate slower than $n^{-1/2}$; in this setting, scale invariance ensures that 
the unknown rate cancels in the Studentized statistic 
(\cref{thm:fixedV_slower,thm:divergingV_slower}). 
Fifth, we introduce bias-corrected simultaneous confidence bands to address 
finite-sample bias (\cref{thm:bias_corrected}). 
Finally, we demonstrate the method for TMLE estimation of the average treatment 
effect, the Kaplan--Meier survival curve, and HAL-based dose-response curve 
estimation, including settings where influence-function-based standard errors 
are anti-conservative or numerically unstable.

Bootstrap techniques are frequently used with data-adaptive algorithms even when
the theoretical conditions needed for bootstrap validity are difficult to verify or
are not established. The proposed $V$-fold jackknife offers an alternative route:
its validity is tied directly to asymptotic linearity and fold-level remainder
stability of the original estimator, rather than to a bootstrap approximation of
the full adaptive estimation procedure. At the same time, it is substantially
cheaper computationally, requiring only $V$ refits, and the fixed-$V$
Studentization yields the appropriate $t_{V-1}$ degrees-of-freedom correction. 


\begin{table}[htbp]
\color{black}
\centering
\caption{Comparison of general-purpose inference methods for semiparametric estimators.}
\label{tab:method_comparison}
\small
\begin{tabular}{l ccc}
\toprule
Feature & EIC / Delta method & Bootstrap & $V$-fold Jackknife \\
\midrule
Requires IC computation & Yes & No & No \\
Number of refits & 1 & $B$ (100s--1000s) & $V$ (e.g., 5--20) \\
Valid for non-$\sqrt{n}$ rates & Requires knowing $\sigma_n$ & In principle & Yes (scale-invariant) \\
Finite-sample correction & No (normal quantiles) & No (normal quantiles) & Yes ($t_{V-1}$ quantiles) \\
Simultaneous bands & Via IC stacking & Via bootstrap & Via Studentization \\
Can fail numerically & Yes (up to 6.2\% in HAL) & No & No \\
\bottomrule
\end{tabular}
\end{table}

The paper is organized as follows. Sections~3--5 develop the core fixed-$V$ and diverging-$V$ theory for $\sqrt{n}$-rate estimators. Section~6 constructs the simultaneous confidence band. Section~\ref{sec:slower_rates} extends the framework to non-standard convergence rates, with a detailed application to HAL dose-response curves. Section~8 presents simulations and Section~9 discusses implications and extensions.

\section{Related Work}

The resampling scheme studied in this paper is a grouped, or delete-a-group, jackknife.
Beginning with the foundational jackknife work of \citet{quenouille1949approximate} and
\citet{tukey1958bias}, the grouped variant partitions the sample into a fixed number of
disjoint groups and recomputes the estimator after deleting one group at a time. The
resulting leave-group-out estimates are then combined into pseudo-values, whose empirical
dispersion provides a natural variance estimator. Early theoretical analyses focused mainly
on classical parametric settings, including maximum likelihood estimators
\citep{Brillinger1964}, and \citet{Miller1974} provides a broad review of jackknife
methods. Related grouped and pseudo-value constructions also appear in the
$U$-statistics literature \citep{Arvesen1969}.

Our work adopts the same grouped pseudo-value construction, but gives it a semiparametric
foundation. We develop fixed-$V$ jackknife inference (\cref{thm:fixedV}) for regular asymptotically linear
estimators of pathwise differentiable parameters, a class that includes targeted minimum
loss-based estimators, one-step estimators, augmented inverse probability weighted
estimators, and other procedures for which explicit influence-function-based variance
formulas may be cumbersome to derive or implement. We show that the grouped pseudo-values
are asymptotically equivalent to fold-level averages of the influence function, so the
corresponding Studentized statistic converges to a $t_{V-1}$ limit
(\cref{thm:fixedV}). This yields valid fixed-$V$ confidence intervals even though the grouped
jackknife variance estimator does not concentrate around the true asymptotic variance.
Formal notation for the fold partition, leave-fold-out estimators, and jackknife
pseudo-values is introduced in Section~\ref{sec:vfold}.


A large literature studies \emph{delete-$d$} jackknife schemes, in which a subset
of size $d$ is removed and the statistic is recomputed over many subsets.
\citet{Wu1986} studies jackknife and bootstrap methods in regression settings, and
\citet{Shao1989} develops general conditions under which jackknife variance estimators
are consistent, emphasizing the role of smoothness or differentiability of the underlying
functional. Our diverging-$V$ analysis (\cref{thm:divergingV}) builds on this line of work but targets a different
regime: deleting one fold from a balanced $V$-fold partition gives a structured delete-$d$
scheme with $d=n/V$. Thus, when $V\to\infty$ with $V=o(n)$, the deleted fraction satisfies
$d/n=1/V\to0$ while the deleted block size satisfies $d\to\infty$. We show that, under
fold-level remainder conditions tailored to asymptotic linearity, this single-partition
construction yields a consistent jackknife variance estimator. This differs from much of
the classical delete-$d$ literature in two ways. First, we use a single partition into disjoint
folds, rather than enumerating or averaging over many deleted subsets. Second, our
assumptions are stated directly in terms of the RAL expansion and fold-wise remainder
stability, which are natural for semiparametric estimators.

The classical comparison between the jackknife and the bootstrap is closely tied to
smoothness. \citet{efron1979bootstrap} showed that the ordinary delete-one jackknife
can fail for non-smooth functionals such as the sample median, because the leave-one-out
perturbations are too local to capture the relevant first-order variability. Efron also noted
that deleting observations in larger groups can alleviate this problem; for the median,
his calculation suggests deleting groups of order $n^{1/2}$. For a balanced $V$-fold
jackknife, this corresponds heuristically to groups of size $n/V$.

For the purposes of the present paper, the lesson is therefore not that the jackknife inherently requires stronger regularity than the bootstrap. Rather, both methods are governed by the availability of a reliable first-order linearization, but their respective fragilities manifest differently in modern contexts. In contemporary terms, Efron's smooth finite-support argument aligns closely with the later compact-differentiability framework for bootstrap validity \citep{gill1989non}. However, in modern machine-learning and adaptive-estimation settings, the ordinary (multinomial) nonparametric bootstrap introduces a distinct algorithmic fragility. Resampling with replacement inevitably creates duplicate observations, meaning each bootstrap sample contains only about 63.2\% unique data points. This disruption of the discrete support can severely destabilize the internal cross-validation of data-adaptive estimators. This particular mechanism is an artifact of multinomial resampling: perturbation-type schemes---the Bayesian, multiplier, and other exchangeably weighted bootstraps---instead attach random weights to the \emph{entire} sample and so preserve its support. Such reweighting does not, however, resolve the deeper question of bootstrap validity for adaptive estimators: like the nonparametric bootstrap, its justification rests on compact (Hadamard) differentiability of the functional \citep{gill1989non,PraestgaardWellner1993}, a regularity that machine-learning--based, adaptively-tuned estimators can violate. This phenomenon helps explain the historical turn toward analytic influence-curve-based inference.

Efron's observation provides useful intuition for grouped jackknife methods, but it is not
the basis of our theory. We do not seek to repair arbitrary non-smooth functionals by
choosing large deletion groups. Instead, our analysis relies on asymptotic linearity, or its
generalized slower-rate analogue, together with fold-level remainder stability. Under these
conditions, the grouped pseudo-values recover fold-level averages of the relevant
influence-function sequence. This representation supports both fixed-$V$ Studentized
inference, where a $t_{V-1}$ limit arises, and diverging-$V$ variance consistency. It also
explains why the procedure remains valid in non-$\sqrt n$ settings: the unknown scale of
the influence-function sequence cancels in the Studentized statistic.

A complementary line of work seeks to reduce the computational burden of bootstrap-based
inference. \citet{Lam2022} proposes the Cheap Bootstrap, and \citet{Ohlendorff2025}
develop a related cheap subsampling method based on subsampling without replacement and
Studentization over $B$ Monte Carlo replicates. These methods share our computational
motivation, but they introduce additional tuning choices, most notably the subsample size
or subsample fraction $m/n$. In particular, \citet{Ohlendorff2025} note that keeping
$m/n$ bounded away from one can help control remainder terms
\citep[Remark~1]{Ohlendorff2025}, so finite-sample performance may be sensitive to this
choice. Moreover, the resulting $t_B$ correction reflects Monte Carlo replication error
rather than the finite number of data components, and disappears as $B\to\infty$
\citep[Theorem~2]{Ohlendorff2025}. These methods also do not address the simultaneous
confidence bands, slower-rate inference, or bias-corrected procedures developed here (\cref{thm:multi_t,thm:fixedV_slower,thm:divergingV_slower,thm:bias_corrected}).

Grouped jackknife methods are widely used in complex survey sampling under the name
\emph{delete-a-group jackknife} (DAGJK), where variance estimation is based on a finite set
of replicate estimates; see \citet{Kott1998,Kott2001,KottGarren2011} for practical guidance
and small-sample considerations. DAGJK standard errors have also appeared in semiparametric
applications as a practical alternative when analytic influence-function-based variance
formulas are difficult to characterize. For example, \citet{YangPieperCools2020} use a
delete-a-group jackknife with $V=500$ groups for continuous-time structural failure time
models, following \citet{Kott1998}. We provide a general theory for this type of construction
in modern semiparametric settings.

\section{Notation}

Let $O_1,\dots,O_n$ be i.i.d.\ observations drawn from an unknown distribution $P_0$ on a measurable space $\mathcal{O}$. Let $\Psi$ denote a parameter mapping from a statistical model $\mathcal{M}$ to a real value (i.e., $\Psi: \mathcal{M}\Longrightarrow \R$). Let $\Psi(P_0)$ denote a real-valued target parameter, and suppose that an estimator $ \hat\Psi(P_n)$ admits an asymptotically linear expansion of the form
\[
\hat \Psi(P_n) - \Psi(P_0)
=
P_n \varphi + R(P_n, P_0),
\]
where $P_n$ is the empirical measure, $\varphi=\varphi_{P_0}$ is a mean-zero influence function satisfying $\E_{P_0}[\varphi^2]<\infty$, and $R(P_n, P_0)=o_p(n^{-1/2})$ is a remainder term. In Section~\ref{sec:slower_rates}, we generalize this framework to allow the influence function variance $\sigma_n^2 = \E_{P_0}[\varphi_n^2]$ to diverge with $n$, covering estimators that converge at rates slower than $\sqrt{n}$, such as HAL functional estimators.

To estimate the variance $\sigma^2=\E_{P_0}[\varphi^2]$, we consider a $V$-fold jackknife construction. Let $\{I_1,\dots,I_V\}$ denote a partition of $\{1,\dots,n\}$ into disjoint folds of \textcolor{black}{exactly equal size $|I_v|=n/V$}. \textcolor{black}{Throughout the theoretical development, we assume that $n$ is divisible by $V$ so that the folds are exactly balanced. This preserves the exact algebraic structure of the jackknife pseudo-values. In practice, when $n$ is not a multiple of $V$, approximately balanced folds introduce only negligible bookkeeping terms and do not change the conceptual argument.} Let
\[
P_{n,v}^1 f = \frac{1}{|I_v|}\sum_{i\in I_v} f(O_i)
\]
denote the empirical average over fold $v$, and let $P_{n,-v}$ denote the empirical measure based on the complement of fold $v$.

\section{V-fold Jackknife } \label{sec:vfold}

For an asymptotically linear estimator $\hat\Psi(P_n)$, we have
\[
\hat\Psi(P_n) - \Psi(P_0) = \frac{1}{V} \sum_{v=1}^V P^1_{n,v} \, \varphi + R(P_n, P_0),
\]
and
\[
\hat\Psi(P_{n,-v}) - \Psi(P_0) = \frac{1}{V-1} \sum_{\substack{v'=1 \\ v' \neq v}}^V P^1_{n,v'} \, \varphi + R(P_{n,-v}, P_0).
\]
Let $IC_{Jack}(v) = V\hat\Psi(P_n)-(V-1)\hat\Psi(P_{n,-v})$ denote the $v$th \emph{jackknife pseudo-value} \citep{tukey1958bias}, which can also be viewed as a fold-level empirical influence curve. Then we have,
\[
IC_{Jack}(v)-\Psi(P_0)  = P^1_{n,v} \, \varphi + R(P_{n,-v}, P_0) + V \{R(P_n, P_0)-R(P_{n,-v}, P_0)\}.
\]
Let $\hat\Psi_{Jack}(P_n) = \frac{1}{V} \sum_{v=1}^V IC_{Jack}(v)$ denote the \emph{jackknife estimator} of $\Psi(P_0)$, i.e., the mean of the pseudo-values.  We define the $V$-fold jackknife variance estimator of $\sigma^2=\Var(\varphi(O))$ as
\begin{align} \label{eq:jackest}
    \hat S_{Jack}^2=(V-1)^{-1}\sum_{v=1}^V (IC_{Jack}(v)-\hat\Psi_{Jack}(P_n))^2.
\end{align}

In this formulation, the true variance is $\Var(\hat\Psi(P_n)) \approx \frac{\sigma^2}{n}$. By the independence of folds, $\Var(P^1_{n,v} \, \varphi) = \left(\frac{1}{n/V}\right)^2 \sum_{i \in I_v} \sigma^2 = \frac{V}{n} \sigma^2$. Hence, the sample variance of the fold-level averages $S_W^2 = (V-1)^{-1}\sum_v(P^1_{n,v} \, \varphi - \bar W)^2$ where $\bar W = V^{-1} \sum_{v=1}^V P^1_{n,v} \, \varphi$ naturally estimates $\frac{V}{n} \sigma^2$. Consequently, the scaled V-fold Jackknife variance estimator $\frac{n}{V}\hat S_{Jack}^2$ estimates $\sigma^2$.

{\color{black}
Define the jackknife standard error $\widehat{SE} = \hat S_{Jack} / \sqrt{V}$. The $V$-fold jackknife $(1-\alpha)$-level confidence interval for $\Psi(P_0)$ is
\[
    \hat\Psi(P_n) \pm t_{V-1,\,1-\alpha/2} \cdot \widehat{SE},
\]
where $t_{V-1,\,1-\alpha/2}$ denotes the $(1-\alpha/2)$-quantile of the Student $t$-distribution with $V-1$ degrees of freedom. The use of $t_{V-1}$ critical values rather than normal quantiles \textcolor{black}{provides the fixed-$V$ asymptotic degrees-of-freedom correction}. Its theoretical justification is that, under the conditions stated below, the Studentized statistic converges in distribution to $t_{V-1}$ when $V$ is fixed. We emphasize that this interval is centered at the full-sample estimator $\hat\Psi(P_n)$ rather than at the jackknife mean of the pseudo-values $\hat\Psi_{Jack}(P_n) = V^{-1}\sum_{v=1}^V IC_{Jack}(v)$. The two centerings are equal to first order: their difference equals the jackknife bias estimate $\hat b$, which satisfies $\hat b = o_p(\widehat{SE})$, so replacing $\hat\Psi_{Jack}(P_n)$ by $\hat\Psi(P_n)$ does not affect the limiting $t_{V-1}$ distribution or the coverage of the interval. This equivalence is made precise in \cref{lem:centering} and \cref{cor:CI} below. The complete procedure is summarized in \cref{alg:univariate}.

\begin{algorithm}[tb]
\color{black}
\caption{\color{black}Univariate $V$-fold jackknife confidence interval}
\label{alg:univariate}
\begin{algorithmic}[1]
\STATE \textbf{Input:} Estimator $\hat\Psi$, observations $O_1,\dots,O_n$, number of folds $V \ge 2$, confidence level $1-\alpha$.
\STATE \textbf{Output:} $(1-\alpha)$-level confidence interval for $\Psi(P_0)$.
\medskip
\STATE Compute $\hat\Psi(P_n)$ using all $n$ observations.
\STATE Partition $\{1,\dots,n\}$ into $V$ disjoint folds $I_1,\dots,I_V$ of approximately equal size.
\FOR{$v = 1$ \TO $V$}
    \STATE Compute the leave-fold-out estimate $\hat\Psi(P_{n,-v})$ using $\{1,\dots,n\} \setminus I_v$.
    \STATE Compute the pseudo-value $IC_{Jack}(v) = V\hat\Psi(P_n) - (V-1)\hat\Psi(P_{n,-v})$.
\ENDFOR
\STATE Compute $\widehat{SE} = \hat S_{Jack}/\sqrt{V}$, where $\hat S_{Jack}^2 = (V-1)^{-1}\sum_{v=1}^V \big(IC_{Jack}(v) - \bar{IC}_{Jack}\big)^2$.
\STATE \textbf{Return} the confidence interval $\hat\Psi(P_n) \pm t_{V-1,\,1-\alpha/2} \cdot \widehat{SE}$.
\end{algorithmic}
\end{algorithm}
}

\section{Properties of V-fold Jackknife}\label{sec:properties}

We study a single Studentized jackknife procedure for constructing confidence intervals for $\Psi(P_0)$, but analyze it in two asymptotic regimes: one with a fixed number of folds $V$, and one with $V \to \infty$. When $V$ is fixed, the $t_{V-1}$ correction reflects the finite number of fold-level contributions; when $V$ grows, the same procedure remains valid and the $t_{V-1}$ critical values approach the corresponding normal quantiles. This makes the trade-off transparent: small fixed $V$ typically improves the finite-sample $t$ approximation but yields wider intervals, whereas slowly increasing $V$ produces narrower intervals while preserving asymptotic validity. To clarify this behavior, we also derive the exact algebraic decomposition of the jackknife variance estimator, which isolates the dominant linear component and makes the role of the remainder explicit. In particular, it shows why the $t$ correction is central: for linear estimators it acts directly on the fold-level means, and for asymptotically linear estimators the same logic applies to the leading linear term.

\subsection{Fixed $V$ results}

We now consider the regime in which $V$ is fixed.
With $V$ fixed, the variance estimator does not concentrate to the asymptotic variance in probability as $n \to \infty$, because the effective number of independent components remains $V$. Nevertheless, the fold-level fluctuations still support valid asymptotic $t$-type inference. Specifically, we prove that
\[
\frac{\sqrt{V}\,(\hat\Psi_{Jack}(P_n)-\Psi(P_0))}{\hat S_{Jack}} \Longrightarrow_d\ t_{V-1}.
\]
The intuition is straightforward: if the remainder terms are negligible, then $IC_{Jack}(v)-\Psi(P_0)=P^1_{n,v}\varphi$ for each $v$, so each fold mean is asymptotically normal and Studentization across the $V$ independent fold means yields the $t_{V-1}$ limit.

\begin{lemma}[Equivalence of $\hat S_{Jack}$ and $S_W$ for fixed $V$]
\label{lem:Sequiv}
Fix $V\ge 2$ and let $|I_v|=n/V\to\infty$. Denote $W_v=P^1_{n,v}\varphi$,
$\bar W=V^{-1}\sum_{v=1}^V W_v$, and
\[
S_W^2=\frac{1}{V-1}\sum_{v=1}^V (W_v-\bar W)^2,
\qquad
\hat S_{Jack}^2=\frac{1}{V-1}\sum_{v=1}^V\big(IC_{Jack}(v)-\hat\Psi_{Jack}(P_n)\big)^2.
\]
Assume that $\max_{1\le v\le V}\big|R(P_{n,-v},P_0)\big| = o_p(n^{-1/2})$, 
\textcolor{black}{$\E_{P_0}\{\varphi(O)\}=0$}, and that \textcolor{black}{$0<\sigma^2=\Var(\varphi(O))<\infty$}. {\color{black} Then
\[
\frac{n}{V}S_W^2 \CiD \sigma^2\frac{\chi^2_{V-1}}{V-1},
\]
and the limiting distribution has no atom at zero. Moreover,} $\hat S_{Jack}^2 = S_W^2 + o_p(n^{-1})$, which implies $\hat S_{Jack}^2 = S_W^2\{1+o_p(1)\}$. 
\end{lemma}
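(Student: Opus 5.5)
The proof has two independent parts: the distributional limit for $S_W^2$, which uses only the linear fold means, and the comparison $\hat S_{Jack}^2 - S_W^2$, which uses only the remainder assumption.

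\textbf{Step 1: the limit of $\frac{n}{V}S_W^2$.} Set $Z_v=\sqrt{n/V}\,W_v$. The folds are disjoint and the observations are i.i.d., so $Z_1,\dots,Z_V$ are independent. Each $Z_v$ is a normalized sum of $n/V\to\infty$ i.i.d.\ mean-zero terms with variance $\sigma^2<\infty$. By the Lindeberg--L\'evy CLT, $Z_v\CiD N(0,\sigma^2)$. Independence then gives joint convergence $(Z_1,\dots,Z_V)\CiD N(0,\sigma^2 I_V)$. The map $z\mapsto (V-1)^{-1}\sum_v(z_v-\bar z)^2$ is continuous, so the continuous mapping theorem yields
\[
\frac{n}{V}S_W^2 \CiD \sigma^2\chi^2_{V-1}/(V-1).
\]
Since $V\ge2$ and $\sigma^2>0$, this limit is absolutely continuous on $(0,\infty)$ and has no atom at zero.

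\textbf{Step 2: reduce $\hat S_{Jack}^2$ to $S_W^2$.} From the decomposition in Section~\ref{sec:vfold}, $IC_{Jack}(v)-\Psi(P_0)=W_v+e_v$, where
\[
e_v=V R(P_n,P_0)-(V-1)R(P_{n,-v},P_0).
\]
Centering each pseudo-value at its mean $\hat\Psi_{Jack}(P_n)$ removes $\Psi(P_0)$. It also removes the common term $VR(P_n,P_0)$, because that term is the same for every $v$. So $IC_{Jack}(v)-\hat\Psi_{Jack}(P_n)=(W_v-\bar W)+(\tilde e_v-\bar{\tilde e})$ with $\tilde e_v=-(V-1)R(P_{n,-v},P_0)$. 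This reduction matters: the lemma assumes only $\max_v|R(P_{n,-v},P_0)|=o_p(n^{-1/2})$, and the full-sample remainder $R(P_n,P_0)$ drops out without needing its own bound. With $V$ fixed, $\max_v|\tilde e_v-\bar{\tilde e}|\le 2(V-1)\max_v|R(P_{n,-v},P_0)|=o_p(n^{-1/2})$.

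\textbf{Step 3: bound the difference.} Expanding the square,
\[
\hat S_{Jack}^2-S_W^2=\frac{1}{V-1}\sum_v\Big\{2(W_v-\bar W)(\tilde e_v-\bar{\tilde e})+(\tilde e_v-\bar{\tilde e})^2\Big\}.
\]
The squared term is $o_p(n^{-1})$ by Step~2. For the cross term, Step~1 gives $W_v-\bar W=O_p((V/n)^{1/2})=O_p(n^{-1/2})$ for fixed $V$. Multiplying by $o_p(n^{-1/2})$ gives $o_p(n^{-1})$. Hence $\hat S_{Jack}^2=S_W^2+o_p(n^{-1})$.

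\textbf{Step 4: the multiplicative form, which is the main obstacle.} This step is the only delicate one, because it needs the denominator $S_W^2$ to be of exact order $n^{-1}$ and not smaller. Write
\[
\hat S_{Jack}^2/S_W^2-1=\frac{n(\hat S_{Jack}^2-S_W^2)}{nS_W^2}.
\]
The numerator is $o_p(1)$ by Step~3. For the denominator, Step~1 and the absence of an atom at zero give $nS_W^2=V\cdot\frac{n}{V}S_W^2\CiD V\sigma^2\chi^2_{V-1}/(V-1)$, a limit with $P(\text{limit}\le\epsilon)\to0$ as $\epsilon\to0$. By the portmanteau theorem, for every $\eta>0$ there is $\epsilon>0$ with $\limsup_n P(nS_W^2<\epsilon)<\eta$, so $1/(nS_W^2)=O_p(1)$. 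The ratio is therefore $o_p(1)\cdot O_p(1)=o_p(1)$, which gives $\hat S_{Jack}^2=S_W^2\{1+o_p(1)\}$.
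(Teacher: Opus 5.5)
Your proof is correct and follows the paper's argument essentially step for step. Your $\tilde e_v-\bar{\tilde e}$ is the paper's $\Delta_v$, so the decomposition $IC_{Jack}(v)-\hat\Psi_{Jack}(P_n)=(W_v-\bar W)+\Delta_v$, with the full-sample remainder cancelling, is the same. So is the bound $O_p(n^{-1/2})\cdot o_p(n^{-1/2})$ on the cross term, and so is the CLT plus continuous-mapping argument for $\frac{n}{V}S_W^2$. Your Step~4 just spells out, via the no-atom-at-zero property and the portmanteau theorem, the lower bound behind the paper's one-line claim $S_W^2\asymp_p V/n$.
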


\begin{proof}
See Appendix~\ref{app:proofs}.
\end{proof}

\begin{theorem}\label{thm:fixedV}
    {\color{black} Fix $V\ge2$ and assume the folds are exactly balanced with $|I_v|=n/V\to\infty$. Let $W_v=P^1_{n,v}\varphi$ and $S_W^2=(V-1)^{-1}\sum_{v=1}^V (W_v-\bar W)^2$. Suppose $\E_{P_0}[\varphi]=0$, $0<\sigma^2=\E_{P_0}[\varphi^2]<\infty$, $R(P_{n},P_0)=o_p(n^{-1/2})$, and $\max_{1\le v\le V} |R(P_{n,-v}, P_0)| = o_p(n^{-1/2})$. Then}  
    \[
\frac{\sqrt{V}\,(\hat\Psi_{Jack}(P_n)-\Psi(P_0))}{\hat S_{Jack}}
\ \CiD \  t_{V-1}.
\]
\end{theorem}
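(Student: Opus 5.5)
The plan is to reduce the Studentized jackknife statistic to the classical one-sample $t$-statistic built from the $V$ independent fold means $W_v=P^1_{n,v}\varphi$, and then pass to the limit with the continuous mapping theorem. The key ingredients are the algebraic identity for the pseudo-values from Section~\ref{sec:vfold}, the multivariate CLT applied to the fold means, and \cref{lem:Sequiv} for the denominator.

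\textbf{Step 1 (numerator).} Averaging the pseudo-value identity over $v$ gives
\[
\hat\Psi_{Jack}(P_n)-\Psi(P_0)=\bar W + V R(P_n,P_0) - \frac{V-1}{V}\sum_{v=1}^V R(P_{n,-v},P_0).
\]
Because $V$ is fixed, the assumptions $R(P_n,P_0)=o_p(n^{-1/2})$ and $\max_v|R(P_{n,-v},P_0)|=o_p(n^{-1/2})$ make both remainder terms $o_p(n^{-1/2})$. Hence
\[
\sqrt{n}\,(\hat\Psi_{Jack}(P_n)-\Psi(P_0))=\sqrt{n}\,\bar W+o_p(1).
\]

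\textbf{Step 2 (joint limit of the fold means).} The folds are disjoint and the observations are i.i.d., so $W_1,\dots,W_V$ are independent, each with mean zero and variance $V\sigma^2/n$. By the Lindeberg--L\'evy CLT applied to each fold (valid since $n/V\to\infty$) together with independence,
\[
\sqrt{n/V}\,(W_1,\dots,W_V)\CiD (Z_1,\dots,Z_V),
\]
where $Z_v\stackrel{iid}{\sim}\Normal(0,\sigma^2)$. Now write the target statistic as
\[
\frac{\sqrt V(\hat\Psi_{Jack}-\Psi(P_0))}{\hat S_{Jack}}
=\frac{\sqrt{n}(\hat\Psi_{Jack}-\Psi(P_0))}{\sqrt{n/V}\,\hat S_{Jack}}.
\]
\cref{lem:Sequiv} gives $\hat S_{Jack}^2=S_W^2+o_p(n^{-1})$, so $\sqrt{n/V}\,\hat S_{Jack}=\sqrt{n/V}\,S_W+o_p(1)$. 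This uses $|a-b|\le\sqrt{|a^2-b^2|}$ for $a,b\ge0$.

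\textbf{Step 3 (continuous mapping).} Set
\[
g(x_1,\dots,x_V)=\frac{\sqrt V\,\bar x}{\{(V-1)^{-1}\sum_v(x_v-\bar x)^2\}^{1/2}}.
\]
Then $g(\sqrt{n/V}\,W)$ equals the leading ratio $\sqrt V\bar W/S_W$. The function $g$ is continuous except on the diagonal $\{x_1=\dots=x_V\}$, and this set has probability zero under the Gaussian limit because $\chi^2_{V-1}$ has no atom at zero (as noted in \cref{lem:Sequiv}). The continuous mapping theorem therefore gives $g(\sqrt{n/V}\,W)\CiD g(Z)$, and $g(Z)$ has exactly the $t_{V-1}$ distribution by the classical independence of the sample mean and sample variance for Gaussian samples.

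\textbf{Step 4 (Slutsky).} It remains to pass from the leading ratio to the actual statistic. The numerator and denominator each differ from their leading versions by $o_p(1)$. The limiting denominator $\sigma\sqrt{\chi^2_{V-1}/(V-1)}$ is almost surely positive, so the denominator is bounded away from zero in probability. Slutsky's lemma, applied jointly to the vector (numerator, denominator), then yields the claim.

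The main obstacle is this final step, since the denominator does not converge to a constant. One cannot apply Slutsky to each piece separately; instead one must establish joint convergence of $(\sqrt n\bar W,\sqrt{n/V}S_W)$, which follows because both are functions of $\sqrt{n/V}\,W$. Then the $o_p(1)$ perturbations are added, and the map $(a,b)\mapsto a/b$ is applied, which is continuous at every point with $b>0$.
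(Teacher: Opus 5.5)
Your proof is correct and follows essentially the same route as the paper's. Both write the numerator as $\sqrt n\,\bar W$ plus an $o_p(1)$ remainder, use \cref{lem:Sequiv} to replace $\hat S_{Jack}$ by $S_W$, and conclude from the Gaussian limit of the independent fold means. Your explicit joint continuous-mapping/Slutsky step, which uses that the limiting denominator is almost surely positive, makes rigorous a step the paper passes over quickly when it writes the statistic as $\sqrt V\,\bar W/S_W + o_p(1)$.
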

\begin{proof}
See Appendix~\ref{app:proofs}.
\end{proof}

The Studentized statistic in \cref{thm:fixedV} is centered at $\hat\Psi_{Jack}(P_n)$, whereas the confidence intervals we construct are centered at the full-sample estimator $\hat\Psi(P_n)$. The following lemma shows that these two centerings are asymptotically equivalent, connecting the centering mismatch to the jackknife bias estimate and to the remainder structure already established in \cref{lem:Sequiv}.

\begin{lemma}[Centering equivalence for fixed $V$]
\label{lem:centering}
Under the conditions of \cref{thm:fixedV}, define the jackknife bias estimate
\begin{equation}\label{eq:jack_bias_scalar}
    \hat b = (V-1)\big(\bar\Psi_{(-v)} - \hat\Psi(P_n)\big), \qquad \bar\Psi_{(-v)} = \frac{1}{V}\sum_{v=1}^V \hat\Psi(P_{n,-v}).
\end{equation}
Then $\hat\Psi_{Jack}(P_n) - \hat\Psi(P_n) = -\hat b$ and $\hat b = o_p(\widehat{SE})$, where $\widehat{SE} = \hat S_{Jack}/\sqrt{V}$.
\end{lemma}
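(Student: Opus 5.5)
The plan has two parts: an exact algebraic identity for the centering gap, and an order comparison using the expansion from Section~\ref{sec:vfold} together with \cref{lem:Sequiv}.

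\emph{The identity.} Averaging the pseudo-values gives
\[
\hat\Psi_{Jack}(P_n)=V^{-1}\sum_{v=1}^V\{V\hat\Psi(P_n)-(V-1)\hat\Psi(P_{n,-v})\}=V\hat\Psi(P_n)-(V-1)\bar\Psi_{(-v)}.
\]
Subtracting $\hat\Psi(P_n)$ yields
\[
\hat\Psi_{Jack}(P_n)-\hat\Psi(P_n)=(V-1)\{\hat\Psi(P_n)-\bar\Psi_{(-v)}\}=-\hat b.
\]
This step uses no asymptotics.

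\emph{The order of $\hat b$.} Average the decomposition of $IC_{Jack}(v)-\Psi(P_0)$ from Section~\ref{sec:vfold} over $v$:
\[
\hat\Psi_{Jack}(P_n)-\Psi(P_0)=\bar W+V^{-1}\sum_{v=1}^V R(P_{n,-v},P_0)+V R(P_n,P_0)-\sum_{v=1}^V R(P_{n,-v},P_0),
\]
where $\bar W=V^{-1}\sum_v W_v=P_n\varphi$. Combine this with $\hat\Psi(P_n)-\Psi(P_0)=P_n\varphi+R(P_n,P_0)$. The linear terms cancel exactly, because $\bar W=P_n\varphi$ relies on the exact balance of the folds. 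What remains is
\[
\hat b=-(V-1)R(P_n,P_0)+\frac{V-1}{V}\sum_{v=1}^V R(P_{n,-v},P_0).
\]
Since $V$ is fixed, this gives $|\hat b|\le (V-1)|R(P_n,P_0)|+(V-1)\max_v|R(P_{n,-v},P_0)|=o_p(n^{-1/2})$.

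\emph{Comparison with $\widehat{SE}$.} It remains to show that $\widehat{SE}$ is of exact order $n^{-1/2}$ and is not asymptotically degenerate. \cref{lem:Sequiv} gives $\hat S_{Jack}^2=S_W^2\{1+o_p(1)\}$ and $\frac{n}{V}S_W^2\CiD\sigma^2\chi^2_{V-1}/(V-1)$, a limit with no atom at zero. Hence $n\,\widehat{SE}^2=(n/V)\hat S_{Jack}^2$ converges in distribution to a law with no mass at $0$. It follows that $(n\,\widehat{SE}^2)^{-1}=O_p(1)$: for any $\epsilon>0$ one picks $\delta>0$ with $P(\sigma^2\chi^2_{V-1}/(V-1)\le\delta)<\epsilon$, and the portmanteau theorem applied to the closed set $[0,\delta]$ bounds $\limsup_n P(n\,\widehat{SE}^2\le\delta)$. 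Therefore
\[
\hat b/\widehat{SE}=\sqrt n\,\hat b\cdot(\sqrt n\,\widehat{SE})^{-1}=o_p(1)\,O_p(1)=o_p(1).
\]

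The main obstacle is this last step: $\hat S_{Jack}$ does not converge in probability, so the argument needs tightness of its reciprocal rather than a consistency statement. The no-atom property recorded in \cref{lem:Sequiv} is what supplies that tightness. The remaining steps are bookkeeping.
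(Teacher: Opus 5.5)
Your proof is correct and follows the paper's route. It uses the same exact identity $\hat\Psi_{Jack}(P_n)-\hat\Psi(P_n)=-\hat b$ and the same cancellation of the linear terms, $V^{-1}\sum_v P_{n,-v}\varphi=P_n\varphi$, leaving $\hat b$ as a fixed-$V$ combination of remainders that is $o_p(n^{-1/2})$. Your only addition is an explicit justification of the final step $o_p(n^{-1/2})=o_p(\widehat{SE})$, which the paper simply asserts: you use the no-atom limit from \cref{lem:Sequiv} to get tightness of $(\sqrt n\,\widehat{SE})^{-1}$.
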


\begin{proof}
See Appendix~\ref{app:proofs}.
\end{proof}

As an immediate consequence, we can center the confidence interval at the full-sample estimator $\hat\Psi(P_n)$.

\begin{corollary}[Confidence interval centered at $\hat\Psi(P_n)$]
\label{cor:CI}
Under the conditions of \cref{thm:fixedV},
\[
\frac{\sqrt{V}\,\big(\hat\Psi(P_n)-\Psi(P_0)\big)}{\hat S_{Jack}}
\ \CiD \  t_{V-1}.
\]
Hence a $(1-\alpha)$-level confidence interval for $\Psi(P_0)$ is
\begin{equation}\label{eq:CI_scalar}
    \hat\Psi(P_n) \pm t_{V-1,\,1-\alpha/2}\cdot \widehat{SE}, \qquad \widehat{SE} = \frac{\hat S_{Jack}}{\sqrt{V}}.
\end{equation}
\end{corollary}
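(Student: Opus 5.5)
The plan is to reduce the corollary to \cref{thm:fixedV} and \cref{lem:centering} through Slutsky's lemma. First I write
\[
\frac{\sqrt{V}\,(\hat\Psi(P_n)-\Psi(P_0))}{\hat S_{Jack}}
=\frac{\sqrt{V}\,(\hat\Psi_{Jack}(P_n)-\Psi(P_0))}{\hat S_{Jack}}
+\frac{\sqrt{V}\,(\hat\Psi(P_n)-\hat\Psi_{Jack}(P_n))}{\hat S_{Jack}}.
\]
By \cref{thm:fixedV}, the first term converges in distribution to $t_{V-1}$. By \cref{lem:centering}, $\hat\Psi(P_n)-\hat\Psi_{Jack}(P_n)=\hat b$. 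Since $\sqrt{V}/\hat S_{Jack}=1/\widehat{SE}$, the second term equals $\hat b/\widehat{SE}$. \Cref{lem:centering} gives $\hat b=o_p(\widehat{SE})$, which means exactly that $\hat b/\widehat{SE}=o_p(1)$.

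The one point needing care is that $\widehat{SE}$ is random and does not concentrate. So I will make sure the ratio is well defined with probability tending to one. \Cref{lem:Sequiv} gives $\hat S_{Jack}^2=S_W^2\{1+o_p(1)\}$ and $\frac nV S_W^2\CiD\sigma^2\chi^2_{V-1}/(V-1)$, a limit with no atom at zero. Hence $P(\hat S_{Jack}=0)\to0$ and $(n/V)^{1/2}\hat S_{Jack}$ is bounded away from zero in probability.

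If one prefers not to rely on the $o_p(\widehat{SE})$ formulation of \cref{lem:centering}, the same conclusion follows directly. From the pseudo-value expansion, $\hat b$ is an average of remainder terms, each of order $V\cdot o_p(n^{-1/2})=o_p(n^{-1/2})$ for fixed $V$, so $\sqrt n\,\hat b=o_p(1)$. Meanwhile $\sqrt{n}\,\widehat{SE}=\sqrt{n/V}\,\hat S_{Jack}$ is tight and bounded away from zero by the previous paragraph. Therefore $\hat b/\widehat{SE}=o_p(1)$.

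Slutsky's lemma then gives convergence of the full-sample-centered statistic to $t_{V-1}$. The interval follows because the $t_{V-1}$ distribution function is continuous, so
\[
P\bigl(|\sqrt V(\hat\Psi(P_n)-\Psi(P_0))/\hat S_{Jack}|\le t_{V-1,1-\alpha/2}\bigr)\to 1-\alpha,
\]
and this event is exactly the event that $\Psi(P_0)$ lies in \eqref{eq:CI_scalar}. The only mildly delicate step is the non-degeneracy of the denominator, and that is already supplied by the no-atom statement in \cref{lem:Sequiv}.
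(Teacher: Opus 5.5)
Your proposal is correct and is essentially the paper's proof. It uses the same two-term decomposition, \cref{thm:fixedV} for the first term, \cref{lem:centering} to show the second term equals $\hat b/\widehat{SE}=o_p(1)$, and Slutsky's theorem. Your additional checks only make explicit what the paper leaves implicit: that $\sqrt{n/V}\,\hat S_{Jack}$ is bounded away from zero in probability by the no-atom limit in \cref{lem:Sequiv} (which is what justifies the step $o_p(n^{-1/2})=o_p(\widehat{SE})$ in the proof of \cref{lem:centering}), and that coverage follows from continuity of the $t_{V-1}$ distribution function.
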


\begin{proof}
See Appendix~\ref{app:proofs}.
\end{proof}

In the simulation studies section we show that the confidence intervals \eqref{eq:CI_scalar} have valid coverage even when $V=5$. One disadvantage of choosing smaller values of $V$ is that the resulting confidence intervals can be wide. As we increase the number of folds, the quantiles of the $t$-distribution get closer to those from a normal distribution and thus the confidence intervals shrink.

{\color{black}
\begin{remark}[Finite-sample behavior and the role of $V$]\label{rem:finite_sample}
{
The fixed-$V$ result of \cref{thm:fixedV} also has a useful finite-sample interpretation. Consider first \emph{linear estimators} --- those for which $\hat\Psi(P_n) - \Psi(P_0) = P_n\varphi$ exactly, with no remainder ($R(P_n, P_0) = 0$). Then the fold-level averages $W_v = P^1_{n,v}\varphi$ are exact sample means of $n/V$ i.i.d.\ random variables. Choosing $V$ small enough that $n/V$ is moderately large improves the normal approximation for each $W_v$ by the central limit theorem and therefore improves the $t_{V-1}$ approximation.

For \emph{asymptotically linear estimators} with a nonzero remainder $R(P_{n,-v}, P_0) = o_p(n^{-1/2})$, the $t_{V-1}$ Studentization still acts directly on the dominant linear component, namely the fold-level averages $W_v$. Only the negligibility of the remainder relies on asymptotics, and smaller $V$ helps here as well: larger fold sizes $n/V$ make $\hat\Psi(P_{n,-v})$ closer to its population target and thereby shrink the remainder terms. In practice, one may therefore choose a small $V$ to improve the $t_{V-1}$ approximation, accepting wider confidence intervals (because of the heavier tails of $t_{V-1}$) in exchange for more reliable coverage.
}
\end{remark}
}

\subsection{Diverging $V$}

We study the asymptotic behavior of the $V$-fold jackknife variance estimator 
in the regime where the number of folds $V$ diverges with $n$. 
Our objective is to characterize its consistency, convergence rate, 
and asymptotic linear representation. In this setting, the primary question is whether the fold-based 
construction consistently estimates the asymptotic variance 
$\sigma^2 = \E_{P_0}[\varphi^2]$ of the underlying estimator. 
We show that consistency holds under mild remainder conditions 
\textcolor{black}{whenever $V \to \infty$ and the fold-level remainder-difference condition below holds. Appendix~\ref{app:V_rate} gives sufficient TMLE-style structural conditions under which this condition follows from an explicit upper-growth bound on $V$, and slow divergence such as $V=\log n$ satisfies these sufficient bounds.} Thus, the usual Wald type confidence intervals for the asymptotic linear estimators can be constructed based on the estimated $\hat \sigma$

For a diverging number of folds, define the fold-level remainder differences and their normalized fold RMS norm by
\[
d_{n,v}=R(P_n,P_0)-R(P_{n,-v},P_0),
\qquad
\|d_n\|_{V,2}
=
\left(\frac{1}{V}\sum_{v=1}^V d_{n,v}^2\right)^{1/2}.
\]
This notation distinguishes the fold RMS norm used in the jackknife variance proof from an analytic $L_2(P_0)$ norm on nuisance functions.

\begin{theorem}[Consistency under $V \to \infty$]
\label{thm:divergingV}

Let $\{O_i\}_{i=1}^n$ be i.i.d.\ from $P_0$. 
Suppose the estimator $\hat\Psi(P_n)$ satisfies
\[
\hat\Psi(P_n) - \Psi(P_0)
= P_n \varphi + R(P_n,P_0),
\]
where $\E_{P_0}[\varphi]=0$, 
\textcolor{black}{$0<\E_{P_0}[\varphi^2]=\sigma^2 < \infty$}, 
and $R(P_n,P_0) = o_p(n^{-1/2})$. Partition the sample into $V$ disjoint folds of \textcolor{black}{exactly} equal size $n/V$, 
and define the $V$-fold jackknife variance estimator
\[
\hat{\sigma}^2
= \frac{n}{V} \hat S_{Jack}^2,
\quad
\hat S_{Jack}^2
= \frac{1}{V-1} \sum_{v=1}^V(IC_{Jack}(v)-\hat\Psi_{Jack}(P_n))^2.
\]

 \textcolor{black}{Assume that $V=V_n \to \infty$, $V=o(n)$, and the fold-level remainder differences satisfy}
\[
\textcolor{black}{
\|d_n\|_{V,2}
=
o_p\!\left(\frac{1}{\sqrt n\,V}\right).
}
\]
\textcolor{black}{Appendix~\ref{app:V_rate} gives sufficient TMLE, one-step, and AIPW-style structural conditions under which this stability condition follows from the explicit upper-growth bound $V=o(n/\tilde\sigma_n^4)$, where $\tilde\sigma_n^2$ is the variance scale of the nuisance estimator's linear approximation.}
 Then
\[
\hat\sigma^2-\sigma^2
=
\Big(\frac{1}{n}\sum_{i=1}^n \varphi_i^2-\sigma^2\Big)
+
T_n
+
o_p(V^{-1/2}),
\]
where
$
T_n
=
\frac{1}{n}\sum_{v=1}^V \sum_{\substack{i,j\in I_v\\ i\neq j}}
\varphi_i\varphi_j .
$
{\color{black} The cross-fold term satisfies $T_n=O_p(V^{-1/2})$.} If $V \to \infty$, then
$ \hat{\sigma}^2
\;\xrightarrow{p}\;
\sigma^2.$ 
{\color{black} If, additionally, $\E_{P_0}[\varphi^4]<\infty$, then}
\[
{\color{black}
\sqrt{V}
\left(
 \hat{\sigma}^2 - \sigma^2
\right)
= O_p(1).}
\]

\end{theorem}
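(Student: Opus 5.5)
The plan is to use the exact algebraic decomposition of the pseudo-values and then control each resulting piece separately.

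\textbf{Decomposition.} From Section~\ref{sec:vfold},
\[
IC_{Jack}(v)-\Psi(P_0)=W_v+e_v,\qquad e_v=R(P_n,P_0)+(V-1)d_{n,v},
\]
where $W_v=P^1_{n,v}\varphi$. The term $R(P_n,P_0)$ in $e_v$ does not depend on $v$, so it cancels after centering at $\hat\Psi_{Jack}(P_n)$. Writing $\bar d=V^{-1}\sum_v d_{n,v}$, this gives
\[
\hat S_{Jack}^2=S_W^2+\frac{2(V-1)}{V-1}\sum_v (W_v-\bar W)(d_{n,v}-\bar d)+\frac{(V-1)^2}{V-1}\sum_v (d_{n,v}-\bar d)^2 .
\]

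\textbf{Remainder terms.} After multiplying by $n/V$:
\begin{itemize}
\item The quadratic term is at most $(n/V)(V-1)\,V\|d_n\|_{V,2}^2\le nV\|d_n\|_{V,2}^2=o_p(V^{-1})$ by the assumed rate $\|d_n\|_{V,2}=o_p(1/(\sqrt n V))$.
\item The cross term is handled by Cauchy--Schwarz. It is bounded by $2\{(n/V)S_W^2\cdot (n/V)(V-1)V\|d_n\|_{V,2}^2\}^{1/2}$, which is $O_p(1)^{1/2}\,o_p(V^{-1/2})=o_p(V^{-1/2})$, provided $(n/V)S_W^2=O_p(1)$. That bound follows from the next step.
\end{itemize}
Hence $\hat\sigma^2=(n/V)S_W^2+o_p(V^{-1/2})$.

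\textbf{Main term.} Next expand $(n/V)S_W^2=\frac{n}{V(V-1)}\big(\sum_v W_v^2-V\bar W^2\big)$. Since $W_v=(V/n)\sum_{i\in I_v}\varphi_i$,
\[
\frac{n}{V(V-1)}\sum_v W_v^2=\frac{V}{V-1}\Big(\frac1n\sum_i\varphi_i^2+T_n\Big).
\]
Also $\frac{n}{V-1}\bar W^2=\frac{1}{V-1}(\sqrt n P_n\varphi)^2=O_p(V^{-1})$. The factor $V/(V-1)=1+O(V^{-1})$ multiplies an $O_p(1)$ quantity, so the correction is $O_p(V^{-1})=o_p(V^{-1/2})$. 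Together these give the stated representation.

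\textbf{Order of $T_n$.} This is the cleanest piece. $T_n$ is a degenerate sum with $\E T_n=0$, and its terms are orthogonal across distinct unordered pairs. Therefore
\[
\E T_n^2=\frac{2}{n^2}\,V\,\frac{n}{V}\Big(\frac nV-1\Big)\sigma^4\le \frac{2\sigma^4}{V},
\]
which uses only $\sigma^2<\infty$, not fourth moments. So $T_n=O_p(V^{-1/2})$.

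\textbf{Conclusions.}
\begin{itemize}
\item Consistency: $P_n\varphi^2-\sigma^2=o_p(1)$ by the weak law of large numbers (only $\E\varphi^2<\infty$ is needed), and $T_n\to 0$ in probability since $V\to\infty$. The same facts give $(n/V)S_W^2=O_p(1)$, which closes the Cauchy--Schwarz step above.
\item Rate: under $\E\varphi^4<\infty$, $P_n\varphi^2-\sigma^2=O_p(n^{-1/2})=o_p(V^{-1/2})$ because $V=o(n)$. Combined with the bound on $T_n$, this gives $\sqrt V(\hat\sigma^2-\sigma^2)=O_p(1)$.
\end{itemize}

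\textbf{Main obstacle.} The delicate step is the cross term. It needs the factor $(V-1)$ amplifying $d_{n,v}$ to be offset by the assumed $o_p(1/(\sqrt nV))$ rate at the $V^{-1/2}$ precision. It also requires checking that using the centered $d_{n,v}-\bar d$ only reduces the bound, since $\sum_v(d_{n,v}-\bar d)^2\le\sum_v d_{n,v}^2$.
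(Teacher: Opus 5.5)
Your proposal is correct and follows the paper's proof essentially step for step. It uses the same centered decomposition $(W_v-\bar W)+(V-1)(d_{n,v}-\bar d_n)$, the same $o_p(V^{-1})$ bound on the quadratic remainder, Cauchy--Schwarz for the cross term, the same exact expansion of $(n/V)S_W^2$ into the diagonal average plus $T_n$, the same pairing computation giving variance at most $2\sigma^4/V$ for $T_n$, and the same fourth-moment CLT step with $V=o(n)$. The only differences are cosmetic. You absorb the mean term as $(V-1)^{-1}(\sqrt n\,P_n\varphi)^2=O_p(V^{-1})$ and keep the factor $V/(V-1)$, whereas the paper splits $\bar W^2$ into its diagonal part and the off-diagonal $U_n$ so the diagonal average has coefficient exactly one. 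You also state explicitly the input $(n/V)S_W^2=O_p(1)$ needed for Cauchy--Schwarz, which the paper uses without comment.
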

\begin{proof}
See Appendix~\ref{app:proofs}.
\end{proof}

\begin{corollary}[Diverging-$V$ Wald inference]\label{cor:divergingV_wald}
\begingroup\color{black}
Under the assumptions of \cref{thm:divergingV}, suppose also that $\sqrt n\,P_n\varphi \CiD \Normal(0,\sigma^2)$. Then
\[
\frac{\hat\Psi(P_n)-\Psi(P_0)}{\hat\sigma/\sqrt n}
\CiD \Normal(0,1),
\qquad
\hat\sigma^2=\frac{n}{V}\hat S_{Jack}^2.
\]
Consequently,
\[
\hat\Psi(P_n)\pm t_{V-1,\,1-\alpha/2}\frac{\hat\sigma}{\sqrt n}
=
\hat\Psi(P_n)\pm t_{V-1,\,1-\alpha/2}\frac{\hat S_{Jack}}{\sqrt V}
\]
has asymptotic coverage $1-\alpha$, since $t_{V-1,\,1-\alpha/2}\to z_{1-\alpha/2}$ as $V\to\infty$.
\endgroup
\end{corollary}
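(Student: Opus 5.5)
The plan is to write the Studentized statistic as a product of two factors and apply Slutsky's lemma together with \cref{thm:divergingV}. Write
\[
\frac{\hat\Psi(P_n)-\Psi(P_0)}{\hat\sigma/\sqrt n}
=\frac{\sqrt n\,\{\hat\Psi(P_n)-\Psi(P_0)\}}{\sigma}\cdot\frac{\sigma}{\hat\sigma}.
\]

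The first factor is handled by the asymptotic linear expansion $\hat\Psi(P_n)-\Psi(P_0)=P_n\varphi+R(P_n,P_0)$. Combined with $R(P_n,P_0)=o_p(n^{-1/2})$, this gives $\sqrt n\{\hat\Psi(P_n)-\Psi(P_0)\}=\sqrt n P_n\varphi+o_p(1)$. The assumed limit $\sqrt n P_n\varphi\CiD\Normal(0,\sigma^2)$ then shows that the first factor converges to $\Normal(0,1)$.

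For the second factor, \cref{thm:divergingV} gives $\hat\sigma^2\CiP\sigma^2$ under its assumptions, in particular $V\to\infty$ and the fold-level remainder condition. Since $\sigma^2>0$, the continuous mapping theorem yields $\sigma/\hat\sigma\CiP1$. We may work on the event $\hat\sigma>0$, whose probability tends to one. Slutsky's lemma then gives the displayed $\Normal(0,1)$ limit.

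For the interval, I first verify the algebraic identity $t_{V-1,1-\alpha/2}\hat\sigma/\sqrt n=t_{V-1,1-\alpha/2}\hat S_{Jack}/\sqrt V$. This holds because $\hat\sigma^2/n=\hat S_{Jack}^2/V$ by the definition $\hat\sigma^2=(n/V)\hat S_{Jack}^2$. Coverage is the probability that $|Z_n|\le t_{V-1,1-\alpha/2}$, where $Z_n$ is the Studentized statistic. As $V=V_n\to\infty$, the deterministic quantile satisfies $t_{V-1,1-\alpha/2}\to z_{1-\alpha/2}$, because $t_k\CiD\Normal(0,1)$ and the normal quantile function is continuous at $1-\alpha/2$. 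Slutsky's lemma applied to $|Z_n|-t_{V_n-1,1-\alpha/2}$ gives convergence in distribution to $|Z|-z_{1-\alpha/2}$. Since the normal distribution function is continuous at $0$, the coverage converges to $P(|Z|\le z_{1-\alpha/2})=1-\alpha$.

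No step here is a serious obstacle; the substantive work is already contained in \cref{thm:divergingV}. The only care needed is to note that the centering at $\hat\Psi(P_n)$ comes directly from the expansion, so no analogue of \cref{lem:centering} is required. It also helps to remark that the normality assumption on $\sqrt nP_n\varphi$ is automatic from the Lindeberg--L\'evy CLT under i.i.d.\ sampling with finite $\sigma^2$.
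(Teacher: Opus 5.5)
Your proposal is correct and follows the same route as the paper's proof. The expansion with $R(P_n,P_0)=o_p(n^{-1/2})$ reduces the numerator to $\sqrt n\,P_n\varphi+o_p(1)$, \cref{thm:divergingV} gives $\hat\sigma^2\CiP\sigma^2$, and Slutsky together with $t_{V-1,1-\alpha/2}\to z_{1-\alpha/2}$ gives the coverage. You spell out details the paper leaves implicit, such as $\hat\sigma>0$ with probability tending to one and continuity of the limiting distribution function at the boundary. Your remark that the normality hypothesis is redundant, because the Lindeberg--L\'evy CLT applies to the fixed $\varphi$, is also correct.
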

\begin{proof}
See Appendix~\ref{app:proofs}.
\end{proof}

\section{Simultaneous V-fold Jackknife Confidence Bands}\label{sec:simult}

We now extend the $V$-fold jackknife framework from a scalar parameter $\Psi(P_0) \in \R$ to a vector-valued parameter $\Psi(P_0) = (\Psi_1(P_0), \dots, \Psi_m(P_0))^\transpose \in \R^m$, where $m$ is a fixed positive integer (not growing with $n$). Throughout this section, $\varphi_j$ denotes the influence function for the $j$th component, consistent with the notation $\varphi$ used in Section~2 for the scalar case. In Section~\ref{sec:slower_rates}, we show that all results in this section extend to the generalized asymptotic linearity framework of Assumption~\ref{assump:slower_rate}, where the influence function variance may diverge. A leading example is the bivariate survival function evaluated on a grid: $\Psi_j(P_0) = S(t_{1,j}, t_{2,j}) = P_0(T_1 > t_{1,j}, T_2 > t_{2,j})$ for $j = 1, \dots, m$. The goal is to construct a simultaneous confidence band $[\hat\Psi_j \pm q_\alpha \cdot \widehat{SE}_j]$ that covers all $m$ components simultaneously with probability at least $1 - \alpha$.

\subsection{Vector-valued V-fold jackknife}

Suppose that for each $j = 1, \dots, m$, the estimator $\hat\Psi_j(P_n)$ admits an asymptotically linear expansion
\[
\hat\Psi_j(P_n) - \Psi_j(P_0) = P_n \varphi_j + R_{n,j},
\]
where $\varphi_j = \varphi_{j,P_0}$ is a mean-zero influence function with $\E_{P_0}[\varphi_j^2] = \sigma_j^2 < \infty$ and $R_{n,j} = o_p(n^{-1/2})$. Define the $m \times m$ true covariance matrix $\bSigma = (\sigma_{jk})$ with $\sigma_{jk} = \E_{P_0}[\varphi_j \varphi_k]$.

For each fold $v = 1, \dots, V$ and component $j = 1, \dots, m$, the $V$-fold jackknife pseudo-value is
\[
IC_{Jack}^{(j)}(v) = V \hat\Psi_j(P_n) - (V-1) \hat\Psi_j(P_{n,-v}).
\]
Let $\hat\Psi_{Jack}^{(j)}(P_n) = V^{-1} \sum_{v=1}^V IC_{Jack}^{(j)}(v)$ denote the componentwise jackknife estimator, and define the $V$-fold jackknife covariance matrix estimator
\begin{equation}\label{eq:cov_jack}
    \hat\Sigma_{Jack}(j,k) = \frac{1}{V-1} \sum_{v=1}^V \big(IC_{Jack}^{(j)}(v) - \hat\Psi_{Jack}^{(j)}(P_n)\big) \big(IC_{Jack}^{(k)}(v) - \hat\Psi_{Jack}^{(k)}(P_n)\big).
\end{equation}
The componentwise jackknife variance and standard error are
\[
\hat\sigma_j^2 = \hat\Sigma_{Jack}(j,j), \qquad \widehat{SE}_j = \frac{\hat\sigma_j}{\sqrt{V}},
\]
and the jackknife correlation matrix is
\begin{equation}\label{eq:corr_jack}
    \hat R_{Jack}(j,k) = \frac{\hat\Sigma_{Jack}(j,k)}{\sqrt{\hat\Sigma_{Jack}(j,j)\,\hat\Sigma_{Jack}(k,k)}}.
\end{equation}

\begin{theorem}[Multivariate asymptotic equivalence, fixed $V$]\label{thm:multi_consistency}
Let $\Var(\varphi_j(O)) = \sigma_j^2$ and $\E_{P_0}[\varphi_j \varphi_k] = \sigma_{jk}$ for each $j, k = 1, \dots, m$, where $m$ is fixed. Suppose that for each component $j$, the remainder terms satisfy $\max_{1\le v\le V}|R_j(P_{n,-v}, P_0)| = o_p(n^{-1/2})$. Define the oracle covariance estimator $\hat S_W^2(j,k) = (V-1)^{-1} \sum_{v=1}^V (W_v^{(j)} - \bar W^{(j)})(W_v^{(k)} - \bar W^{(k)})$ where $W_v^{(j)} = P^1_{n,v} \varphi_j$. Then, for $V$ fixed:
\[
\frac{n}{V}\hat\Sigma_{Jack}(j,k) = \frac{n}{V}\hat S_W^2(j,k) + o_p(1).
\]
Furthermore, when folds are of exactly equal size $n/V$, $\frac{n}{V}\hat S_W^2(j,k)$ is an unbiased estimator of $\sigma_{jk}$ with $\E[\frac{n}{V}\hat S_W^2(j,k)] = \sigma_{jk}$.
\end{theorem}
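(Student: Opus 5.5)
The plan mirrors the scalar argument of \cref{lem:Sequiv}, applied componentwise and then combined through the bilinear structure of $\hat\Sigma_{Jack}$. The first step is to write each pseudo-value as
\[
IC_{Jack}^{(j)}(v)-\Psi_j(P_0)=W_v^{(j)}+e_v^{(j)},
\qquad
e_v^{(j)}=R_j(P_{n,-v},P_0)+V\{R_j(P_n,P_0)-R_j(P_{n,-v},P_0)\},
\]
which follows from the algebra in Section~\ref{sec:vfold}. Since $V$ is fixed, the hypothesis $\max_v|R_j(P_{n,-v},P_0)|=o_p(n^{-1/2})$, together with $R_{n,j}=o_p(n^{-1/2})$ from the component expansion, gives $\max_v|e_v^{(j)}|=o_p(n^{-1/2})$. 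Centering then yields
\[
IC_{Jack}^{(j)}(v)-\hat\Psi_{Jack}^{(j)}(P_n)=(W_v^{(j)}-\bar W^{(j)})+(e_v^{(j)}-\bar e^{(j)}),
\]
and the error term is still $o_p(n^{-1/2})$ uniformly in $v$.

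Next I would expand the product in \eqref{eq:cov_jack}. This gives four terms:
\[
\hat\Sigma_{Jack}(j,k)=\hat S_W^2(j,k)+C_{jk}+C_{kj}+E_{jk}.
\]
Here $C_{jk}=(V-1)^{-1}\sum_v (W_v^{(j)}-\bar W^{(j)})(e_v^{(k)}-\bar e^{(k)})$ and $E_{jk}=(V-1)^{-1}\sum_v (e_v^{(j)}-\bar e^{(j)})(e_v^{(k)}-\bar e^{(k)})$.

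Each fold mean satisfies $W_v^{(j)}=O_p((V/n)^{1/2})=O_p(n^{-1/2})$ for fixed $V$, because $\Var(W_v^{(j)})=V\sigma_j^2/n$. So each summand of $C_{jk}$ is $O_p(n^{-1/2})\,o_p(n^{-1/2})=o_p(n^{-1})$, and a finite sum of $V$ such terms is still $o_p(n^{-1})$. Similarly $E_{jk}=o_p(n^{-1})$. Multiplying by $n/V$, a constant, gives $\frac nV\hat\Sigma_{Jack}(j,k)=\frac nV\hat S_W^2(j,k)+o_p(1)$. Since $m$ is fixed, this holds jointly over all pairs $(j,k)$. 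Alternatively, one could invoke \cref{lem:Sequiv} for the diagonal entries and apply Cauchy--Schwarz to the off-diagonal cross terms, but the direct expansion is just as short.

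For unbiasedness, the vectors $(W_v^{(j)},W_v^{(k)})$, $v=1,\dots,V$, are i.i.d. across folds because the folds are disjoint and equal-sized. Their mean is zero and $\Cov(W_v^{(j)},W_v^{(k)})=(V/n)^2\cdot(n/V)\sigma_{jk}=(V/n)\sigma_{jk}$. The standard identity for the unbiased sample covariance of $V$ i.i.d. pairs then gives
\[
\E\bigl[(V-1)^{-1}\textstyle\sum_v(W_v^{(j)}-\bar W^{(j)})(W_v^{(k)}-\bar W^{(k)})\bigr]=(V/n)\sigma_{jk}.
\]
Multiplying by $n/V$ gives $\E[\frac nV\hat S_W^2(j,k)]=\sigma_{jk}$. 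This step uses exact balance, since with unequal folds the $W_v$ would not be identically distributed.

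The only mildly delicate point is controlling the factor $V$ in $e_v^{(j)}$, which multiplies the remainder difference. This is harmless precisely because $V$ is fixed; the same step is what forces the stronger condition $\|d_n\|_{V,2}=o_p(1/(\sqrt n V))$ in the diverging-$V$ regime. The proof should also state clearly that the $o_p(1)$ statement concerns the difference between the two estimators, not convergence to $\sigma_{jk}$, since neither estimator concentrates when $V$ is fixed.
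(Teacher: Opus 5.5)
Your proposal is correct and follows essentially the same route as the paper: split the centered pseudo-values into $(W_v^{(j)}-\bar W^{(j)})$ plus a remainder, expand the product in $\hat\Sigma_{Jack}(j,k)$, show the cross and squared-remainder terms are $o_p(n^{-1})$ because $V$ is fixed (the paper uses Cauchy--Schwarz where you bound termwise, which is equivalent here), and obtain unbiasedness from the i.i.d.\ balanced fold means. One small refinement: your centered error $e_v^{(j)}-\bar e^{(j)}$ equals $-(V-1)\{R_j(P_{n,-v},P_0)-V^{-1}\sum_{v'}R_j(P_{n,-v'},P_0)\}$, so the full-sample remainder cancels exactly and you do not need $R_{n,j}=o_p(n^{-1/2})$; the leave-fold-out condition stated in the theorem is enough, and this is how the paper writes the decomposition.
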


\begin{proof}
See Appendix~\ref{app:proofs}.
\end{proof}

\subsection{Joint distribution of the Studentized vector and simultaneous critical values}

When $V$ is fixed, Theorem~\ref{thm:fixedV} established that the scalar Studentized statistic converges in distribution to $t_{V-1}$. For simultaneous inference over $m$ components, we characterize the joint distribution of the componentwise-Studentized vector. Define
\[
\bT = \left(\frac{\sqrt{V}(\hat\Psi_{Jack}^{(1)}(P_n) - \Psi_1(P_0))}{\hat\sigma_1}, \dots, \frac{\sqrt{V}(\hat\Psi_{Jack}^{(m)}(P_n) - \Psi_m(P_0))}{\hat\sigma_m}\right)^\transpose.
\]

\begin{theorem}[Componentwise-Studentized distribution, fixed $V$]\label{thm:multi_t}
Suppose the conditions of Theorem~\ref{thm:multi_consistency} hold. \textcolor{black}{Assume also that $R_j(P_n,P_0)=o_p(n^{-1/2})$ and $0<\sigma_j^2<\infty$ for each $j=1,\dots,m$.} Let $R_0 = D^{-1} \bSigma D^{-1}$ denote the true correlation matrix, where $D = \text{diag}(\sigma_1, \dots, \sigma_m)$. As $n \to \infty$ with $V$ fixed, the vector $\bT$ converges in distribution: $\bT \CiD \bT^\infty$, where
\begin{equation}\label{eq:studentized_rep}
    \bT^\infty = \left(\frac{\sqrt{V}\,\bar{Z}^{(1)}}{S_Z^{(1)}},\;\dots,\;\frac{\sqrt{V}\,\bar{Z}^{(m)}}{S_Z^{(m)}}\right)^\transpose,
\end{equation}
with $\bZ_1, \dots, \bZ_V \stackrel{i.i.d.}{\sim} \Normal_m(\bm{0}, R_0)$, $\bar{\bZ} = V^{-1}\sum_{v=1}^V \bZ_v$, and $S_Z^{(j)} = [(V-1)^{-1}\sum_{v=1}^V(Z_v^{(j)}-\bar{Z}^{(j)})^2]^{1/2}$. Each marginal $T_j^\infty$ has an exact $t_{V-1}$ distribution.
\end{theorem}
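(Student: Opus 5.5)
The proof reduces $\bT$ to a continuous function of the fold-level influence-function averages, then applies the multivariate CLT and the continuous mapping theorem. For each component $j$ and fold $v$, the pseudo-value decomposition of Section~\ref{sec:vfold} gives
\[
IC_{Jack}^{(j)}(v)-\Psi_j(P_0)=W_v^{(j)}+e_{v}^{(j)},\qquad
e_v^{(j)}=R_j(P_{n,-v},P_0)+V\{R_j(P_n,P_0)-R_j(P_{n,-v},P_0)\}.
\]
With $V$ and $m$ fixed, the assumptions $R_j(P_n,P_0)=o_p(n^{-1/2})$ and $\max_v|R_j(P_{n,-v},P_0)|=o_p(n^{-1/2})$ give $\max_{v,j}|e_v^{(j)}|=o_p(n^{-1/2})$. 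Hence
\[
\sqrt{n/V}\,\{\hat\Psi^{(j)}_{Jack}(P_n)-\Psi_j(P_0)\}=\sqrt{n/V}\,\bar W^{(j)}+o_p(1).
\]
By \cref{thm:multi_consistency}, or directly by applying the argument of \cref{lem:Sequiv} componentwise, we also have $(n/V)\hat\sigma_j^2=(n/V)\hat S_W^2(j,j)+o_p(1)$.

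Next comes the joint limit of the oracle quantities. The folds are disjoint, so $(W_1,\dots,W_V)$ consists of independent vectors $W_v=(W_v^{(1)},\dots,W_v^{(m)})^\transpose$. Each is an average of $n/V\to\infty$ i.i.d.\ mean-zero vectors with covariance $\bSigma$. The multivariate CLT gives $\sqrt{n/V}\,W_v\CiD \Normal_m(0,\bSigma)$, and independence across folds gives joint convergence of the $mV$-vector to $V$ i.i.d.\ $\Normal_m(0,\bSigma)$ draws. I would write these draws as $D\bZ_v$ with $\bZ_v\sim\Normal_m(0,R_0)$.

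Now set $G_v^{(j)}=\sqrt{n/V}\,W_v^{(j)}$. The map from $(G_v^{(j)})_{v,j}$ to the vector with $j$-th entry $\sqrt V\,\bar G^{(j)}/S_G^{(j)}$ is continuous wherever every $S_G^{(j)}>0$. In the limit, $S^{(j)}$ equals $\sigma_j S_Z^{(j)}$ with $(V-1)(S_Z^{(j)})^2\sim\chi^2_{V-1}$, so this set has probability one because $\sigma_j>0$ and the chi-square law has no atom at zero. The factor $\sigma_j$ cancels between numerator and denominator, and this is exactly why $R_0$, rather than $\bSigma$, appears in \eqref{eq:studentized_rep}. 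By Slutsky, the $o_p(1)$ perturbations in the numerators and the squared denominators do not change the limit, because the limiting denominators are a.s.\ positive. The continuous mapping theorem then yields $\bT\CiD\bT^\infty$.

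For the marginal claim, each $Z^{(j)}_1,\dots,Z^{(j)}_V$ is an i.i.d.\ $\Normal(0,1)$ sample. By Cochran's theorem $\bar Z^{(j)}$ is independent of $(V-1)(S_Z^{(j)})^2\sim\chi^2_{V-1}$, so $T_j^\infty\sim t_{V-1}$ exactly.

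The main obstacle is justifying the continuous mapping step with random denominators. This requires both that the remainder perturbation of the denominator is negligible relative to a non-degenerate limit and that the limiting denominator has no mass at zero; the no-atom property from \cref{lem:Sequiv} settles both. Beyond that point the argument is routine.
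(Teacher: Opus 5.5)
Your proposal is correct and follows essentially the same route as the paper. Both arguments reduce each numerator to $\sqrt{n/V}\,\bar W^{(j)}+o_p(1)$ and each denominator to the oracle $\hat S_W^2(j,j)$ via Theorem~\ref{thm:multi_consistency}, then apply the multivariate CLT across independent folds together with Slutsky and the continuous mapping theorem, with each $\sigma_j$ cancelling so that only $R_0$ remains. Your explicit check that the Studentizing map is continuous on the probability-one set where every limiting sample standard deviation is positive fills in a step the paper leaves implicit.
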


\begin{proof}
See Appendix~\ref{app:proofs}.
\end{proof}

We stress that \cref{thm:multi_t} is a \emph{fixed-$V$} convergence-in-distribution result: the limit law $\bT^\infty$ is governed by the true influence-curve correlation matrix $R_0$. At fixed $V$, however, $R_0$ is not consistently estimable, since the jackknife correlation estimate $\hat R_{Jack}$ is built from only $V$ fold-level vectors and is singular whenever $m > V-1$. In this sense the limiting joint distribution is identified as a functional of $R_0$, but $R_0$ itself---and hence the simultaneous critical value $q_\alpha$ computed from it---is not identifiable from the data at fixed $V$. Consistent estimation of $R_0$, and therefore formal plug-in validity of the simultaneous band constructed below, requires $V\to\infty$ (slowly); this point is developed in \cref{sec:sim_km} and in the diverging-$V$ analysis. We nonetheless find that fixed, moderate $V$ performs well empirically.

The asymptotic normality of the fold-level vectors $\bW_v$ places the problem in the classical framework of simultaneous inference for the mean of a multivariate normal distribution. However, a crucial distinction arises for the componentwise-Studentized vector $\bT$. While a standard multivariate $t$-distribution models a vector scaled by a single shared $\chi^2$ denominator \citep[Type~I in the classification of][]{KotzNadarajah2004}, the limit distribution $\bT^\infty$ features $m$ dependent denominators corresponding to the diagonal elements of a sample covariance matrix (which follows a Wishart distribution). Because the sample variances $(S_Z^{(1)})^2, \dots, (S_Z^{(m)})^2$ are correlated but not identically equal, approximating this joint distribution with a standard multivariate $t$-distribution using a single shared denominator will lead to an incorrect limit and substantial undercoverage, particularly when $V$ is small. Thus, simultaneous critical values must be simulated directly from the correct joint law of $\bT^\infty$, which accurately preserves the dependency structure of both the numerators and the specific denominators.

The confidence band (\ref{eq:simul_band}) below is centered at $\hat\Psi_j(P_n)$, while the Studentized statistic uses $\hat\Psi_{Jack}^{(j)}(P_n)$. By the same argument as in \cref{lem:centering}, applied componentwise, $\hat\Psi_{Jack}^{(j)}(P_n) - \hat\Psi_j(P_n) = -\hat b_j$ where $\hat b_j = (V-1)(\bar\Psi_{j,(-v)} - \hat\Psi_j(P_n))$ is the componentwise jackknife bias estimate, and $\hat b_j = o_p(\widehat{SE}_j)$. Hence the centering mismatch does not affect coverage.

A $(1-\alpha)$-level simultaneous confidence band is
\begin{equation}\label{eq:simul_band}
    \hat\Psi_j \pm q_\alpha \cdot \widehat{SE}_j, \qquad j = 1, \dots, m,
\end{equation}
where $q_\alpha$ satisfies
\begin{equation}\label{eq:q_def}
    P\!\left(\max_{1 \le j \le m} |T_j^\infty| \le q_\alpha\right) = 1 - \alpha.
\end{equation}
Since the distribution of $\max_j |T_j^\infty|$ does not admit a closed form for general $R_0$ and $m$, the critical value $q_\alpha$ is computed by Monte Carlo simulation.

\subsection{Monte Carlo computation of the critical value}\label{sec:mc_critical}

Given the estimated correlation matrix $\hat R_{Jack}$, the critical value $q_\alpha$ is approximated by simulating directly from the joint componentwise-Studentized limit distribution $\bT^\infty$. This relies on $V$ independent Gaussian draws per Monte Carlo step to ensure the correct Wishart-diagonal denominator structure is enforced. The procedure is given in \cref{alg:mc_critical}.

\begin{algorithm}[tb]
\color{black}
\caption{\color{black}Monte Carlo computation of the simultaneous critical value $\hat q_\alpha$}
\label{alg:mc_critical}
\begin{algorithmic}[1]
\STATE \textbf{Input:} Estimated correlation matrix $\hat R_{Jack} \in \R^{m \times m}$, number of folds $V$, confidence level $1-\alpha$, Monte Carlo size $B$ (e.g., $B = 10{,}000$).
\STATE \textbf{Output:} Critical value $\hat q_\alpha$ for the simultaneous confidence band.
\medskip
\STATE Compute eigendecomposition $\hat R_{Jack} = U \Lambda U^\transpose$. 
\FOR{$b = 1$ \TO $B$}
    \FOR{$v = 1$ \TO $V$}
        \STATE Draw $\bZ_v^{(b)} \sim \Normal_m(\bm{0}, \hat R_{Jack})$ via $\bZ_v^{(b)} = U \Lambda^{1/2} \bm{\eta}_v^{(b)}$, where $\bm{\eta}_v^{(b)} \sim \Normal_m(\bm{0}, I_m)$. (This procedure remains valid when $m > V-1$ and $\hat R_{Jack}$ is singular; see \cref{sec:eigen_validity}.)
    \ENDFOR
    \STATE Compute the sample mean vector $\bar{\bZ}^{(b)} = \frac{1}{V} \sum_{v=1}^V \bZ_v^{(b)}$.
    \FOR{$j = 1$ \TO $m$}
        \STATE Compute the sample standard deviation $S_Z^{(j, b)} = \left[ \frac{1}{V-1} \sum_{v=1}^V \big(Z_{v, j}^{(b)} - \bar{Z}_j^{(b)}\big)^2 \right]^{1/2}$.
        \STATE Form the Studentized statistic $T_j^{(b)} = \sqrt{V}\, \bar{Z}_j^{(b)} / S_Z^{(j, b)}$.
    \ENDFOR
    \STATE Record $M^{(b)} = \max_{1 \le j \le m} |T_j^{(b)}|$.
\ENDFOR
\STATE \textbf{Return} $\hat q_\alpha = Q_{1-\alpha}\!\big(\{M^{(b)}\}_{b=1}^B\big)$, where $Q_p$ denotes the $p$-th sample quantile.
\end{algorithmic}
\end{algorithm}

Under the asymptotic normality of the fold-level vectors, the sample correlation $\hat R_{Jack}$ is the maximum likelihood estimate of $R_0$ based on $V$ independent $\Normal_m$ observations. As $V \to \infty$, $\hat R_{Jack} \CiP R_0$, $S_Z^{(j)} \CiP 1$, and the joint componentwise-Studentized distribution converges to the supremum of $\Normal_m(\bm{0}, R_0)$.

\begin{theorem}[Plug-in simultaneous validity under $V \to \infty$]\label{thm:plugin_simult_divergingV}
\begingroup\color{black}
Let $m$ be fixed. Suppose the componentwise asymptotically linear expansions in this section hold with $R_j(P_n,P_0)=o_p(n^{-1/2})$, $0<\sigma_j^2<\infty$, and $\E_{P_0}[\varphi_j^4]<\infty$ for every $j=1,\dots,m$. Assume the folds are exactly balanced, $V=V_n\to\infty$, $V=o(n)$, and, for every component,
\[
d_{n,v}^{(j)}=R_j(P_n,P_0)-R_j(P_{n,-v},P_0),
\qquad
\|d_{n,j}\|_{V,2}
=
\left\{\frac{1}{V}\sum_{v=1}^V \big(d_{n,v}^{(j)}\big)^2\right\}^{1/2}
=o_p\!\left(\frac{1}{\sqrt n\,V}\right).
\]
Then the jackknife covariance matrix is entrywise consistent for the influence-curve covariance matrix:
\[
\frac{n}{V}\hat\Sigma_{Jack}(j,k)\CiP \sigma_{jk},
\qquad j,k=1,\dots,m,
\]
and consequently $\hat R_{Jack}\CiP R_0$ entrywise. Let $\hat q_\alpha$ be the critical value computed from the plug-in componentwise-Studentized law in Algorithm~\ref{alg:mc_critical}, with Monte Carlo error $o_p(1)$. Then the simultaneous confidence band satisfies
\[
P\!\left(
\Psi_j(P_0)\in
\left[\hat\Psi_j(P_n)-\hat q_\alpha\widehat{SE}_j,\,
\hat\Psi_j(P_n)+\hat q_\alpha\widehat{SE}_j\right]
\ \forall j=1,\dots,m
\right)
\to 1-\alpha .
\]
\endgroup
\end{theorem}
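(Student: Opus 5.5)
The proof has three parts: entrywise consistency of the jackknife covariance matrix, a Gaussian limit for the vector of estimators, and continuity of the plug-in critical value.

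\textbf{Step 1: entrywise consistency.} For the diagonal entries $j=k$, consistency is exactly \cref{thm:divergingV} applied to each component $j$, since its hypotheses hold componentwise. That theorem gives $\frac{n}{V}\hat\Sigma_{Jack}(j,j)\CiP\sigma_j^2$.

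For off-diagonal entries I would use polarization. Apply the same decomposition to the scalar functional $\Psi_j+\Psi_k$, whose influence function is $\varphi_j+\varphi_k$ and whose remainder differences are $d^{(j)}_{n,v}+d^{(k)}_{n,v}$. By the triangle inequality for $\|\cdot\|_{V,2}$, these differences still satisfy the $o_p(1/(\sqrt n V))$ condition. The pseudo-values are linear in the estimator, so the jackknife variance of the sum equals $\hat\Sigma_{Jack}(j,j)+\hat\Sigma_{Jack}(k,k)+2\hat\Sigma_{Jack}(j,k)$.

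There is one caveat. If $\Var(\varphi_j+\varphi_k)=0$, the positivity assumption of \cref{thm:divergingV} fails. In that case I would instead rerun its proof directly for the bilinear form. That proof expands
\[
IC^{(j)}_{Jack}(v)-\hat\Psi^{(j)}_{Jack}=W^{(j)}_v-\bar W^{(j)}+\text{remainder terms}.
\]
The oracle term $\frac{n}{V}\hat S^2_W(j,k)$ equals $\frac1n\sum_i\varphi_{j,i}\varphi_{k,i}$ plus a cross term of order $O_p(V^{-1/2})$, by a second-moment computation using fourth moments. The remainder cross terms are bounded by Cauchy--Schwarz using $\frac{n}{V}\cdot V^2\|d_{n,j}\|_{V,2}\|W\|$-type bounds.

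From entrywise consistency and $\sigma_j>0$, the continuous mapping theorem gives $\hat R_{Jack}\CiP R_0$.

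\textbf{Step 2: Gaussian limit of the numerator vector.} Write
\[
\sqrt n(\hat\Psi_j(P_n)-\Psi_j(P_0))=\sqrt n P_n\varphi_j+o_p(1).
\]
The multivariate CLT, via Cramér--Wold with $m$ fixed, gives $\sqrt n(\hat\Psi-\Psi)\CiD\Normal_m(0,\bSigma)$. Since $\widehat{SE}_j=\hat\sigma_j/\sqrt n$ with $\hat\sigma_j^2=\frac nV\hat\Sigma_{Jack}(j,j)\CiP\sigma_j^2$, Slutsky's theorem yields
\[
\big((\hat\Psi_j(P_n)-\Psi_j(P_0))/\widehat{SE}_j\big)_j\CiD \bZ\sim\Normal_m(0,R_0).
\]
The band is already centered at $\hat\Psi_j(P_n)$, so no centering lemma is needed.

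\textbf{Step 3: convergence of the critical value.} Let $F_{R,V}$ denote the law of $\max_j|T_j^\infty|$ under correlation $R$ with $V$ folds. Let $q^*$ be the $1-\alpha$ quantile of $\max_j|Z_j|$ with $\bZ\sim\Normal_m(0,R_0)$. Since $\max_j|Z_j|$ has a continuous, strictly increasing distribution on $(0,\infty)$, it suffices to show that $F_{\hat R_{Jack},V}$ converges weakly to this law in probability.

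To see this, realize the simulation through a representation $\bZ_v=\hat R_{Jack}^{1/2}\eta_v$ with a continuous square root. Then:
\begin{itemize}
\item $\sqrt V\bar\bZ=\hat R_{Jack}^{1/2}\bar\eta\sqrt V$, where $\sqrt V\bar\eta\sim\Normal_m(0,I)$ exactly, so this term converges to the target Gaussian as $\hat R_{Jack}\to R_0$;
\item the diagonal entries of the sample covariance of the $\bZ_v$ equal $\hat R_{Jack}(j,j)=1$ times factors tending to $1$, by the law of large numbers in $V$ (a Wishart$/(V-1)\to I$ argument).
\end{itemize}
Quantile convergence at a continuity point then gives $\hat q_\alpha\CiP q^*$, absorbing the $o_p(1)$ Monte Carlo error. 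Coverage follows:
\[
P\big(\max_j|(\hat\Psi_j(P_n)-\Psi_j(P_0))/\widehat{SE}_j|\le\hat q_\alpha\big)\to P(\max_j|Z_j|\le q^*)=1-\alpha.
\]

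The main obstacle I expect is Step 1's off-diagonal terms. Specifically, it is verifying that the remainder cross terms vanish at the required rate when combined with fold means. The approach I would try first is the polarization trick, since it reduces everything to the already proved \cref{thm:divergingV}, falling back to the direct Cauchy--Schwarz bound only in the degenerate case.
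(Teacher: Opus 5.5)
Your proof is correct. Steps 2 and 3 follow the paper: the multivariate CLT plus Slutsky give the Studentized Gaussian limit, and the plug-in max law converges because $\hat R_{Jack}\CiP R_0$ and the Wishart-diagonal denominators tend to one. Your coupling $\bZ_v=\hat R_{Jack}^{1/2}\eta_v$ spells out what the paper asserts in one sentence.

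The real difference is in Step 1. The paper does not polarize. It reruns the decomposition of \cref{thm:divergingV} bilinearly:
\begin{itemize}
\item it writes each centered pseudo-value as $(W_v^{(j)}-\bar W^{(j)})+\Delta_v^{(j)}$;
\item it removes every term containing a $\Delta$ by Cauchy--Schwarz against $\frac{n}{V(V-1)}\sum_v(\Delta_v^{(j)})^2=o_p(V^{-1})$;
\item it expands the oracle covariance as $\frac1n\sum_i\varphi_{j,i}\varphi_{k,i}+T_{n,jk}+(V-1)^{-1}(T_{n,jk}-U_{n,jk})$, using the same pairing computation as for $T_n$.
\end{itemize}
So your fallback is the paper's main route.

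Your polarization argument uses linearity of the pseudo-values in the estimator and the triangle inequality for $\|\cdot\|_{V,2}$. Its advantage is that it needs no new computation: the scalar theorem is used as a black box. The paper's direct route treats all pairs $(j,k)$ uniformly, with no auxiliary functionals and no case distinction, and it exhibits the leading terms of each covariance entry explicitly.

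Your degenerate case actually needs no fallback, for either of two reasons. First, the consistency half of \cref{thm:divergingV} never uses $\sigma^2>0$. If $\varphi_j+\varphi_k=0$ a.s., the oracle term vanishes identically and only the negligible remainder contributions survive. Second, $\Var(\varphi_j+\varphi_k)+\Var(\varphi_j-\varphi_k)=2(\sigma_j^2+\sigma_k^2)>0$, so at least one of them is positive. When the sum is degenerate, you can use $\hat\Sigma_{Jack}(j,k)=\tfrac12\{\hat\Sigma_{Jack}(j,j)+\hat\Sigma_{Jack}(k,k)-\hat S^2_{-}\}$ instead, where $\hat S^2_{-}$ is the jackknife variance of $\hat\Psi_j-\hat\Psi_k$.
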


\begin{proof}
See Appendix~\ref{app:proofs}.
\end{proof}

\subsection{Validity of the Eigendecomposition and the Role of $V$ in Simultaneous Inference}\label{sec:eigen_validity}

When $m > V - 1$, the estimated correlation matrix $\hat R_{Jack}$ has rank at most $V - 1 < m$, and the Monte Carlo draws in Step~1 are generated through the eigendecomposition $\bZ_v^{(b)} = U \Lambda^{1/2}\bm{\eta}_v^{(b)}$. This is exact sampling from the degenerate plug-in distribution $\Normal_m(\bm{0}, \hat R_{Jack})$: the zero-eigenvalue directions carry no variation and are therefore correctly excluded from the simulation.

The real inferential question is not whether this eigendecomposition is legitimate, but whether $\hat R_{Jack}$ is close enough to the true correlation matrix $R_0$. When $V$ is fixed, Theorem~\ref{thm:multi_t} identifies the correct componentwise-Studentized limit distribution conditional on the true correlation matrix $R_0$, and Theorem~\ref{thm:multi_consistency} shows that the jackknife covariance estimator is asymptotically unbiased. \textcolor{black}{Formal plug-in simultaneous validity requires consistent estimation of $R_0$ and hence $V \to \infty$; this is established in \cref{thm:plugin_simult_divergingV} under the same type of fold-level remainder-difference condition used for scalar diverging-$V$ variance consistency.} \textcolor{black}{For the variance results developed here, $V$ must also satisfy the fold-level remainder-difference condition from \cref{thm:divergingV}; under the sufficient structural conditions in Appendix~\ref{app:V_rate}, slow choices such as $V \asymp \log n$ are covered.}

To understand why the plug-in procedure can nevertheless work well when $V$ is small, it is useful to introduce the notion of \emph{effective rank}. For a positive semidefinite matrix $A$, define
\begin{equation}\label{eq:effective_rank}
r(A) = \frac{\operatorname{tr}(A)}{\|A\|_{\operatorname{op}}},
\end{equation}
where $\|A\|_{\operatorname{op}}$ denotes the operator norm. For a correlation matrix $R_0$, the trace is exactly $m$, so $r(R_0) = m / \lambda_{\max}(R_0)$. When the eigenvalues of $R_0$ decay rapidly, the effective rank can be much smaller than the ambient dimension $m$.

Concentration inequalities for sample covariance operators show that convergence rates depend on this effective rank rather than directly on $m$. Let $\hat\Sigma_{Jack}$ denote the sample covariance matrix of the fold-level vectors and let $\Sigma_0$ denote the corresponding population covariance. Results of \citet{Vershynin2012} and \citet{KoltchinskiiLounici2017} imply
\begin{equation}\label{eq:eff_rank_bound_cov}
\E\big[\|\hat \Sigma_{Jack} - \Sigma_0\|_{\operatorname{op}}\big]
\le
C \cdot \|\Sigma_0\|_{\operatorname{op}}
\left(\sqrt{\frac{r(\Sigma_0)}{V}} + \frac{r(\Sigma_0)}{V}\right),
\end{equation}
for a universal constant $C$. Because the sample correlation matrix is obtained by symmetric diagonal scaling, $\hat R_{Jack} = \hat D^{-1}\hat\Sigma_{Jack}\hat D^{-1}$ with $\hat D = \operatorname{diag}(\hat\Sigma_{Jack})^{1/2}$, standard algebraic decompositions show that $\|\hat R_{Jack} - R_0\|_{\operatorname{op}}$ is bounded by a constant multiple of $\|\hat \Sigma_{Jack} - \Sigma_0\|_{\operatorname{op}}$ whenever the marginal variances are bounded away from zero. Thus the same effective-rank rate carries over to the correlation scale.

Consequently, even when $V \ll m$, the sample correlation matrix $\hat R_{Jack}$ can still be close to $R_0$ whenever $V$ is large relative to the effective rank $r(R_0)$. For smooth functional parameters such as survival curves or dose-response curves evaluated on a grid, adjacent evaluation points are highly correlated and the eigenvalues of $R_0$ often decay quickly. In that case, the dominant eigendirections of $R_0$ are captured by the rank-deficient $\hat R_{Jack}$ even when $V$ is moderate, which helps explain the robust finite-sample performance seen in our simulations. This bound-based explanation supports the practical use of small $V$, but it does not replace the formal asymptotic requirement that $V \to \infty$ for plug-in simultaneous validity.

This perspective also suggests a practical finite-sample tuning rule. Along a candidate sequence such as $V \in \{5, 10, 20, 40, \dots\}$, compute the corresponding estimated correlation matrix $\hat R_{Jack}$ and its effective rank
\begin{equation}\label{eq:effective_rank_hat}
\hat r_V := r(\hat R_{Jack}) = \frac{\operatorname{tr}(\hat R_{Jack})}{\|\hat R_{Jack}\|_{\operatorname{op}}} = \frac{m}{\lambda_{\max}(\hat R_{Jack})},
\end{equation}
where the final equality uses that $\hat R_{Jack}$ is a correlation matrix and therefore has trace $m$. One may then choose the smallest $V$ such that
\begin{equation}\label{eq:V_stopping_rule}
V \ge 2 \hat r_V.
\end{equation}
This rule should be viewed as a heuristic implementation of the effective-rank argument, not as a substitute for the formal asymptotic requirement above. When such tuning is not carried out, the simulations in Sections~\ref{sec:sim_km} and~\ref{sec:sim_hal} suggest that $V = 20$ is a reasonable default in the settings we study.

For functional estimators with growing dimension, a natural next step would be to establish uniform convergence of the jackknife correlations to the true correlation function, which would then support Gaussian-process-style simultaneous inference.

\begin{remark}
    Constructing a simultaneous band requires the distribution of the \emph{maximum} over the $m$ grid points of the componentwise-Studentized statistics, $\max_{1 \le j \le m} |\sqrt{V}\,(\hat\Psi_{Jack}^{(j)}(P_n) - \Psi_j(P_0))/\hat\sigma_j|$. Because these statistics are built from the $V$ fold-level pseudo-value vectors, this is exactly a maximum of (approximately) high-dimensional sample means, with $V$ playing the role of the sample size and $m$ the role of the dimension; it therefore connects directly to the literature on Gaussian and multiplier-bootstrap approximation for maxima of high-dimensional means \citep{ChernozhukovChetverikovKato2013}. However, those results are not needed for the development of our main theory and require additional growth conditions linking $V$ and $m$. Since our main message is that formal simultaneous validity follows once $V \to \infty$ while the fold-level remainder condition remains valid---for example, for slow choices such as $V \asymp \log n$ under the sufficient bounds in Appendix~\ref{app:V_rate}---we defer this connection to Appendix~\ref{app:gaussian_approx}.
\end{remark}

\subsection{Bias-corrected simultaneous confidence bands}\label{sec:bias_corrected_bands}

When the asymptotic linear expansion is not uniformly negligible across all $m$ grid points, the simultaneous coverage of \eqref{eq:simul_band} can be eroded by finite-sample bias. Following  \citep[e.g.,][]{CaivanderLaan2020}, we address this by \emph{enlarging} the confidence intervals to account for the estimated bias. Extending the scalar jackknife bias estimate \eqref{eq:jack_bias_scalar} to the $j$th component, define
\begin{equation}\label{eq:jack_bias}
    \hat b_j = (V-1)\left(\bar\Psi_{j,(-v)} - \hat\Psi_j(P_n)\right), \qquad \bar\Psi_{j,(-v)} = \frac{1}{V} \sum_{v=1}^V \hat\Psi_j(P_{n,-v}).
\end{equation}
This estimate is computable from the same fold-level estimates used for variance estimation and requires no additional computation; it is reused, as a post-hoc reliability diagnostic, in Appendix~\ref{sec:app_diagnostic}. The bias-corrected simultaneous confidence band is
\begin{equation}\label{eq:bc_band}
    \left[\hat\Psi_j(P_n) - q_\alpha \cdot \widehat{SE}_j - |\hat b_j|, \;\; \hat\Psi_j(P_n) + q_\alpha \cdot \widehat{SE}_j + |\hat b_j|\right], \qquad j = 1, \dots, m,
\end{equation}
where $\hat b_j$ is the jackknife bias estimate defined in (\ref{eq:jack_bias}).

\begin{theorem}[Bias-corrected simultaneous coverage]\label{thm:bias_corrected}
Suppose the conditions of Theorem~\ref{thm:multi_t} hold. Additionally, suppose the jackknife bias estimate satisfies {\color{black} the uniform relative-error condition
\[
\varepsilon_n
=
\max_{1 \le j \le m}
\frac{|\hat b_j-b_j|}{\widehat{SE}_j}
\CiP 0,
\]
where $b_j = \E[\hat\Psi_j(P_n)] - \Psi_j(P_0)$ is the true bias.} Then
\[
P\!\left(\Psi_j(P_0) \in \left[\hat\Psi_j(P_n) - q_\alpha \widehat{SE}_j - |\hat b_j|, \;\; \hat\Psi_j(P_n) + q_\alpha \widehat{SE}_j + |\hat b_j|\right] \;\;\forall\, j = 1, \dots, m\right) \ge 1 - \alpha + o(1).
\]
\end{theorem}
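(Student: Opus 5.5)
The plan is to reduce simultaneous coverage of the enlarged band to the event $\{\max_j |T_j| \le q_\alpha\}$, which the componentwise-Studentized limit in \cref{thm:multi_t} controls, and then show that the extra width $|\hat b_j|$ absorbs any centering and bias discrepancy up to an $o_p(1)$ multiple of $\widehat{SE}_j$.

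\textbf{Step 1: rewrite the event.} Coverage at component $j$ is the event $|\hat\Psi_j(P_n)-\Psi_j(P_0)| \le q_\alpha\widehat{SE}_j + |\hat b_j|$. By the componentwise version of \cref{lem:centering}, $\hat\Psi_j(P_n) = \hat\Psi^{(j)}_{Jack}(P_n) + \hat b_j$. Writing $T_j = (\hat\Psi^{(j)}_{Jack}(P_n)-\Psi_j(P_0))/\widehat{SE}_j$, the coverage event becomes
\[
|T_j + \hat b_j/\widehat{SE}_j| \le q_\alpha + |\hat b_j|/\widehat{SE}_j .
\]

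\textbf{Step 2: sufficient condition via the triangle inequality.} Since $|T_j + \hat b_j/\widehat{SE}_j| \le |T_j| + |\hat b_j|/\widehat{SE}_j$, the event $\{|T_j|\le q_\alpha\}$ implies coverage of component $j$. Hence the probability of simultaneous coverage is at least $P(\max_j|T_j|\le q_\alpha)$.

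\textbf{Step 3: pass to the limit.} By \cref{thm:multi_t}, $\bT \CiD \bT^\infty$, so $\max_j|T_j| \CiD \max_j|T_j^\infty|$ by the continuous mapping theorem. Each $T_j^\infty$ is exactly $t_{V-1}$, so $\max_j|T^\infty_j|$ has a continuous distribution and $q_\alpha$ is a continuity point. The Portmanteau theorem then gives
\[
P\Bigl(\max_j|T_j|\le q_\alpha\Bigr) \to 1-\alpha,
\]
and therefore simultaneous coverage is at least $1-\alpha+o(1)$.

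\textbf{Role of the hypothesis on $\varepsilon_n$ and of the true bias.} The argument in Steps 1--3 does not actually need the assumption $\varepsilon_n\CiP 0$. If one wants to phrase it through the true bias $b_j$, I would instead decompose
\[
\hat\Psi_j - \Psi_j = (\hat\Psi_j - \Psi_j - b_j) + b_j
\]
and use $|b_j| \le |\hat b_j| + \varepsilon_n\widehat{SE}_j$. This reduces coverage to the event $\max_j |(\hat\Psi_j-\Psi_j-b_j)/\widehat{SE}_j| \le q_\alpha - \varepsilon_n$. Establishing the $\bT^\infty$ limit for this recentered statistic is the main obstacle, since $b_j$ is a deterministic expectation that need not be $o(n^{-1/2})$ from the hypotheses. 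I would therefore prefer the route of Steps 1--3, which needs only the exact algebraic identity $\hat\Psi_j = \hat\Psi^{(j)}_{Jack} + \hat b_j$ and \cref{thm:multi_t}. The final step would then handle the $\varepsilon_n$ shift in the threshold: continuity of the limit law at $q_\alpha$, together with $\varepsilon_n = o_p(1)$ and Slutsky's lemma, shows that the shift changes the probability by only $o(1)$.
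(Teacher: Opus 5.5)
Your proof is correct, and it takes a genuinely different and simpler route than the paper's. You use the exact identity $\hat\Psi_j(P_n)=\hat\Psi^{(j)}_{Jack}(P_n)+\hat b_j$ to get $|\hat\Psi_j(P_n)-\Psi_j(P_0)|\le \widehat{SE}_j|T_j|+|\hat b_j|$. In other words, the bias-corrected band contains the jackknife-centered band $\hat\Psi^{(j)}_{Jack}(P_n)\pm q_\alpha\widehat{SE}_j$ in every sample. \cref{thm:multi_t} then finishes the argument, together with the fact that $\max_j|T_j^\infty|$ has no atoms (each marginal is $t_{V-1}$, so a finite union bound applies).

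The paper instead proceeds as follows:
\begin{itemize}
\item it centers at $\E[\hat\Psi_j(P_n)]$ and bounds $|b_j|\le|\hat b_j|+\varepsilon_n\widehat{SE}_j$;
\item it reduces coverage to the event $\{\max_j|T_j^*|\le q_\alpha-\varepsilon_n\}$, where $T_j^*=(\hat\Psi_j(P_n)-\E[\hat\Psi_j(P_n)])/\widehat{SE}_j$;
\item it transfers the limit of \cref{thm:multi_t} to $T_j^*$ via $T_j^*=T_j+(\hat b_j-b_j)/\widehat{SE}_j$;
\item it uses Slutsky and the Portmanteau theorem to absorb the random threshold shift.
\end{itemize}
So the paper uses the relative-error condition twice.

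Your argument yields a stronger result. The uniform relative-error hypothesis is redundant under the conditions of \cref{thm:multi_t}. Moreover, the containment is deterministic: with a common critical value, the bias-corrected band covers whenever either the standard band or the jackknife-centered band does. The paper's route buys interpretation rather than generality. It displays the intended mechanism, namely that the widening $|\hat b_j|$ absorbs the true bias up to estimation error. But since $\max_j|T_j^*-T_j|\le\varepsilon_n\CiP 0$, convergence of $\bT$ and convergence of $(T_1^*,\dots,T_m^*)$ are equivalent.

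For the same reason, your side remark that the recentered limit is ``the main obstacle'' is mistaken. That identity settles it at once. Combined with $\hat b_j=o_p(\widehat{SE}_j)$ from the componentwise \cref{lem:centering}, the hypothesis in fact forces $|b_j|/\widehat{SE}_j\CiP 0$. Your last sentence about handling an $\varepsilon_n$ shift belongs only to that alternative route; Steps 1--3 need no such step.
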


\begin{proof}
See Appendix~\ref{app:proofs}.
\end{proof}

The bias correction may also be paired with the post-hoc diagnostic of Appendix~\ref{sec:app_diagnostic}, which uses the same jackknife bias estimate $\hat b_j$ to flag grid points whose linearization is under stress, complementing the widening provided by the bias-corrected band.  \cref{alg:simultaneous} summarizes the complete procedure for constructing simultaneous $V$-fold jackknife confidence bands.

\begin{algorithm}[tb]
\color{black}
\caption{\color{black}Simultaneous $V$-fold jackknife confidence band}
\label{alg:simultaneous}
\begin{algorithmic}[1]
\STATE \textbf{Input:} Estimator $\hat\Psi = (\hat\Psi_1, \dots, \hat\Psi_m)$, observations $O_1,\dots,O_n$, number of folds $V \ge 2$, confidence level $1-\alpha$, Monte Carlo size $B$ (e.g., $B = 10{,}000$).
\STATE \textbf{Output:} $(1-\alpha)$-level simultaneous confidence band for $(\Psi_1(P_0), \dots, \Psi_m(P_0))$.
\medskip
\STATE Compute $\hat\Psi_j(P_n)$ for $j = 1, \dots, m$ using all $n$ observations.
\STATE Partition $\{1, \dots, n\}$ into $V$ disjoint folds $I_1, \dots, I_V$ of approximately equal size.
\FOR{$v = 1$ \TO $V$}
    \STATE Compute $\hat\Psi_j(P_{n,-v})$ for all $j = 1, \dots, m$.
    \FOR{$j = 1$ \TO $m$}
        \STATE Compute the pseudo-value $IC_{Jack}^{(j)}(v) = V \hat\Psi_j(P_n) - (V-1) \hat\Psi_j(P_{n,-v})$.
    \ENDFOR
\ENDFOR
\STATE Compute the jackknife covariance matrix $\hat\Sigma_{Jack}$ via \eqref{eq:cov_jack}, the standard errors $\widehat{SE}_j = \hat\sigma_j / \sqrt{V}$, and the correlation matrix $\hat R_{Jack}$ via \eqref{eq:corr_jack}.
\STATE Compute the critical value $\hat q_\alpha$ via \cref{alg:mc_critical}.
\STATE \textbf{Return} the simultaneous confidence band: $\hat\Psi_j(P_n) \pm \hat q_\alpha \cdot \widehat{SE}_j$ for $j = 1, \dots, m$. Optionally, apply the bias correction \eqref{eq:bc_band}. See Appendix~\ref{sec:app_diagnostic} for an optional diagnostic that highlights grid points with large $|\hat b_j|/\widehat{SE}_j$.
\end{algorithmic}
\end{algorithm}

\section{Extension to Generalized Asymptotically Linear Estimators and Slower Convergence Rates}
\label{sec:slower_rates}

The theory developed in the preceding sections assumes the target parameter $\Psi(P_0)$ is pathwise differentiable, meaning the estimator converges at the parametric $n^{-1/2}$ rate with an influence function $\varphi$ having finite variance $\sigma^2 = \E_{P_0}[\varphi^2] < \infty$. However, in many modern semiparametric and nonparametric applications, researchers target non-pathwise differentiable parameters. A prominent example is the causal dose-response curve for a continuous treatment evaluated at a specific point, estimated via the highly adaptive lasso (HAL) \citep{vanderLaan2023, Shi2024}.

In such settings, the estimator may still admit a generalized asymptotic linear representation, but the influence curve depends on the sample size $n$, and its variance diverges. Here \emph{first-order} HAL refers to HAL with one degree of smoothness ($k=1$, a piecewise-linear spline basis; the $k$-th order spline basis functions are defined in \citet[Definition~7]{vanderLaan2023}), the smoothness order used throughout this paper and in the simulations of Section~\ref{sec:sim_hal}. For the first-order HAL dose-response curve, the variance of the (sample-size-dependent) influence function scales with the effective dimension $J_n$ of the oracle working model, $\sigma_n^2 = \Var_{P_0}(\varphi_n(O)) \asymp J_n \to \infty$ with $J_n \asymp n^{1/5}$, so $\sigma_n \asymp n^{1/10}$, yielding the slower stabilization rate $\sqrt{n/J_n}$. Pointwise asymptotic normality of the HAL regression at this rate is established by \citet[Corollary~1]{vanderLaan2023} for smoothness order $k \ge 1$, and the corresponding plug-in functional---the dose-response curve is a non-pathwise-differentiable functional of the HAL regression---is treated by \citet[Appendix~K, Theorem~23 and Corollary~6]{vanderLaan2023}, which give the slower-rate $(n/J_n)^{1/2}$ functional normality for arbitrary (non-pathwise-differentiable) functions of sieve and HAL-MLEs; see also \citet{Shi2024} for the dose-response specialization.

A profound mathematical and operational advantage of the $V$-fold jackknife is its inherent \textit{scale invariance}. Because the jackknife pseudo-values are computed directly from the estimator, their empirical dispersion automatically tracks the inflating variance of the influence curve. Consequently, the unknown diverging rate $\sigma_n \asymp \sqrt{J_n}$ perfectly cancels out in the Studentized $t$-statistic. This allows practitioners to obtain valid confidence intervals using the exact same jackknife procedure without needing to analytically derive, estimate, or explicitly divide by the effective dimension $J_n$.

We formalize this by extending our fixed-$V$ and diverging-$V$ results to a sequence of sample-size-dependent influence functions.

\begin{assumption}[Generalized Asymptotic Linearity]\label{assump:slower_rate}
Suppose the estimator admits the expansion
\begin{equation}\label{eq:AL_slower}
    \hat\Psi(P_n) - \Psi(P_0) = P_n \varphi_n + R_n(P_n, P_0),
\end{equation}
where $\varphi_n = \varphi_{n, P_0}$ is a sequence of mean-zero influence functions with diverging variance $\sigma_n^2 = \E_{P_0}[\varphi_n^2(O)] \to \infty$ as $n \to \infty$. We assume:
\begin{enumerate}
    \item \textbf{Scaled Remainder:} The remainder terms are asymptotically negligible relative to the standard error: $R_n(P_n, P_0) = o_p(\sigma_n n^{-1/2})$ and $\max_{1 \le v \le V} |R_n(P_{n,-v}, P_0)| = o_p(\sigma_n n^{-1/2})$.
    \item \textbf{Lindeberg Condition:} The standardized influence functions satisfy the Lindeberg condition: for any $\epsilon > 0$,
    \[
    \lim_{n \to \infty} \frac{1}{\sigma_n^2} \E_{P_0}\left[ \varphi_n^2(O) \Ind\left\{|\varphi_n(O)| > \epsilon \sigma_n \sqrt{n} \right\} \right] = 0.
    \]
\end{enumerate}
\end{assumption}

\begin{remark}
Because $\sigma_n \to \infty$, the generalized remainder condition $o_p(\sigma_n n^{-1/2})$ is strictly weaker than the standard $o_p(n^{-1/2})$ requirement. This mathematical relaxation aligns with the intuition that when the target parameter converges at a slower rate (i.e., has a larger standard error), a proportionally larger finite-sample bias is statistically tolerable for valid coverage. 
\end{remark}

\subsection{Fixed-$V$ inference for slower rates}

We first generalize the fixed-$V$ framework.

\begin{theorem}[Fixed-$V$ Inference for Slower Rates]\label{thm:fixedV_slower}
Under Assumption~\ref{assump:slower_rate}, for any fixed $V \ge 2$, the Studentized $V$-fold jackknife statistic adapts to the unknown convergence rate $\sigma_n$, yielding
\[
\frac{\sqrt{V}\,(\hat\Psi_{Jack}(P_n)-\Psi(P_0))}{\hat S_{Jack}}
\ \CiD \ t_{V-1}.
\]
\end{theorem}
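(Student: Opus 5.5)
The plan is to rescale everything by the unknown standard-error scale $\sigma_n\sqrt{V/n}$ and then repeat the argument behind \cref{lem:Sequiv} and \cref{thm:fixedV}. This rescaling is harmless because the Studentized statistic is invariant under a common rescaling of numerator and denominator.

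Define $W_v=P^1_{n,v}\varphi_n$ and the standardized fold means $Z_{n,v}=\sqrt{n/V}\,W_v/\sigma_n$. Exactly as in \cref{sec:vfold}, the pseudo-values satisfy
\[
IC_{Jack}(v)-\Psi(P_0)=W_v+R_n(P_{n,-v},P_0)+V\{R_n(P_n,P_0)-R_n(P_{n,-v},P_0)\}.
\]
For fixed $V$, Assumption~\ref{assump:slower_rate}(1) makes the remainder terms $o_p(\sigma_n n^{-1/2})$ uniformly in $v$. After multiplying by $\sqrt{n/V}/\sigma_n$, we therefore get
\[
\tilde U_v:=\sqrt{n/V}\,\{IC_{Jack}(v)-\Psi(P_0)\}/\sigma_n=Z_{n,v}+o_p(1)
\]
uniformly over the $V$ folds.

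The first step is a triangular-array CLT. For each $v$, $Z_{n,v}$ is a normalized sum of $n/V$ i.i.d. mean-zero variables drawn from the row-$n$ distribution of $\varphi_n$, with unit variance. The Lindeberg condition in Assumption~\ref{assump:slower_rate}(2) is stated at the threshold $\epsilon\sigma_n\sqrt n$. The fold-level requirement is the threshold $\epsilon\sigma_n\sqrt{n/V}$, which is equivalent since $V$ is fixed (replace $\epsilon$ by $\epsilon/\sqrt V$). The Lindeberg--Feller theorem then gives $Z_{n,v}\CiD\Normal(0,1)$. The folds are disjoint and the observations are i.i.d., so $Z_{n,1},\dots,Z_{n,V}$ are independent, and hence $(Z_{n,1},\dots,Z_{n,V})\CiD\Normal_V(0,I_V)$. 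By Slutsky, $(\tilde U_1,\dots,\tilde U_V)$ has the same limit.

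The second step uses scale invariance. The map
\[
g(u_1,\dots,u_V)=\sqrt V\,\bar u\Big/\Big[(V-1)^{-1}\textstyle\sum_v(u_v-\bar u)^2\Big]^{1/2}
\]
is invariant under $u\mapsto cu$ for $c>0$. Applying it to the pseudo-values centered at $\Psi(P_0)$ yields exactly the target statistic $\sqrt V(\hat\Psi_{Jack}(P_n)-\Psi(P_0))/\hat S_{Jack}$, because centering does not change the sample variance. Consequently the statistic equals $g(\tilde U_1,\dots,\tilde U_V)$, and the unknown $\sigma_n$ cancels. The function $g$ is continuous off the null set $\{u_1=\dots=u_V\}$, and this set has probability zero under $\Normal_V(0,I_V)$ because the limiting $\chi^2_{V-1}$ has no atom at zero, as noted in \cref{lem:Sequiv}. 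The continuous mapping theorem then gives convergence to $g(Z)$ with $Z\sim\Normal_V(0,I_V)$, which is exactly $t_{V-1}$ by Cochran's theorem.

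The main obstacle is bookkeeping for the remainder rather than anything conceptual. The term $V\{R_n(P_n,P_0)-R_n(P_{n,-v},P_0)\}$ carries a factor of $V$. It stays negligible only because $V$ is fixed and both remainders are individually $o_p(\sigma_n n^{-1/2})$, uniformly in $v$ via the max condition. I would verify this explicitly after the $\sqrt{n/V}/\sigma_n$ scaling: it becomes $\sqrt{V}\cdot o_p(1)=o_p(1)$. A secondary point to check is that the denominator after scaling is bounded away from zero in probability, so Slutsky is legitimate; this follows from the weak limit $\chi^2_{V-1}/(V-1)$ having no mass at zero.
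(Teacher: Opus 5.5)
Your proposal is correct and follows the same route as the paper's proof: rescale by $\sigma_n\sqrt{V/n}$, show the remainders vanish after rescaling because $V$ is fixed, use the Lindeberg CLT for the independent fold means, and pass to the Studentized normal-sample limit $t_{V-1}$. The only difference is presentational. You apply the continuous mapping theorem directly to the vector of rescaled pseudo-values, whereas the paper first shows $\hat S_{Jack}^2=S_{W_n}^2\{1+o_p(1)\}$ by expanding the sample variance, and you also spell out the $\epsilon\mapsto\epsilon/\sqrt V$ Lindeberg adjustment and the null-set continuity point that the paper leaves implicit.
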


\begin{proof}
See Appendix~\ref{app:proofs}.
\end{proof}

\begin{corollary}[Confidence interval centered at $\hat\Psi(P_n)$]
\label{cor:CI_slower}
Under the conditions of \cref{thm:fixedV_slower},
\[
\frac{\sqrt{V}\,\big(\hat\Psi(P_n)-\Psi(P_0)\big)}{\hat S_{Jack}}
\ \CiD \ t_{V-1}.
\]
Hence, a valid $(1-\alpha)$-level confidence interval for $\Psi(P_0)$ is given by $\hat\Psi(P_n) \pm t_{V-1,\,1-\alpha/2} \cdot \frac{\hat S_{Jack}}{\sqrt{V}}$.
\end{corollary}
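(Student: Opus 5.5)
Our plan mirrors the passage from \cref{thm:fixedV} to \cref{cor:CI}. We write
\[
\frac{\sqrt{V}\,\big(\hat\Psi(P_n)-\Psi(P_0)\big)}{\hat S_{Jack}}
=
\frac{\sqrt{V}\,\big(\hat\Psi_{Jack}(P_n)-\Psi(P_0)\big)}{\hat S_{Jack}}
+
\frac{\sqrt{V}\,\big(\hat\Psi(P_n)-\hat\Psi_{Jack}(P_n)\big)}{\hat S_{Jack}}.
\]
By \cref{thm:fixedV_slower}, the first term converges in distribution to $t_{V-1}$. By Slutsky's lemma it then suffices to show that the second term is $o_p(1)$. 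Algebraically, $\hat\Psi(P_n)-\hat\Psi_{Jack}(P_n)=\hat b$, with $\hat b$ as in \eqref{eq:jack_bias_scalar}, exactly as in \cref{lem:centering}. So the task is to show $\hat b=o_p(\hat S_{Jack}/\sqrt V)$ in the slower-rate setting.

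\textbf{Numerator.} We expand the pseudo-values using \eqref{eq:AL_slower}. Averaging the identity
\[
IC_{Jack}(v)-\Psi(P_0)=P^1_{n,v}\varphi_n+R_n(P_{n,-v},P_0)+V\{R_n(P_n,P_0)-R_n(P_{n,-v},P_0)\}
\]
over $v$, and using $V^{-1}\sum_v P^1_{n,v}\varphi_n=P_n\varphi_n$ for balanced folds, we get
\[
\hat b=\hat\Psi(P_n)-\hat\Psi_{Jack}(P_n)=(V-1)\Big\{V^{-1}\sum_{v=1}^V R_n(P_{n,-v},P_0)-R_n(P_n,P_0)\Big\}.
\]
Since $V$ is fixed, the scaled remainder condition of Assumption~\ref{assump:slower_rate} gives $|\hat b|\le (V-1)\{\max_v|R_n(P_{n,-v},P_0)|+|R_n(P_n,P_0)|\}=o_p(\sigma_n n^{-1/2})$.

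\textbf{Denominator.} We need $\hat S_{Jack}$ to be of exact order $\sigma_n\sqrt{V/n}$ in probability, i.e., bounded away from zero after rescaling. This is the slower-rate analogue of \cref{lem:Sequiv} and is presumably established inside the proof of \cref{thm:fixedV_slower}. If it is not available as stated, we would redo it as follows.

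First, take $W_v=\sqrt{n/V}\,P^1_{n,v}\varphi_n/\sigma_n$. These are independent across folds, and each is a normalized sum of $n/V$ i.i.d.\ terms with unit variance. The Lindeberg condition, with threshold $\epsilon\sigma_n\sqrt n$ which is of the order $\epsilon\sigma_n\sqrt{n/V}$ for fixed $V$, yields $(W_1,\dots,W_V)\CiD \Normal_V(0,I_V)$. Hence $(n/V)S_W^2/\sigma_n^2\CiD\chi^2_{V-1}/(V-1)$, a law with no atom at zero.

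Second, the pseudo-value deviations differ from those of $P^1_{n,v}\varphi_n$ by terms of size $O(V)\cdot o_p(\sigma_n n^{-1/2})=o_p(\sigma_n n^{-1/2})$. By the triangle inequality in $\R^V$, $|\hat S_{Jack}-S_W|=o_p(\sigma_n n^{-1/2})$. Therefore $\sqrt{n}\,\hat S_{Jack}/\sigma_n$ converges to a strictly positive limit law, so $(\sqrt n\,\hat S_{Jack}/\sigma_n)^{-1}=O_p(1)$.

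\textbf{Conclusion.} Combining the two bounds, the second term equals
\[
\sqrt V\,\hat b/\hat S_{Jack}=\sqrt V\cdot o_p(\sigma_n n^{-1/2})\cdot O_p\big(\sqrt n/\sigma_n\big)=o_p(1).
\]
The unknown $\sigma_n$ cancels, which is precisely the scale invariance used in the paper. Slutsky's lemma then gives the $t_{V-1}$ limit. The interval follows by inverting: since the $t_{V-1}$ distribution is continuous, $P\big(|\sqrt V(\hat\Psi(P_n)-\Psi(P_0))/\hat S_{Jack}|\le t_{V-1,1-\alpha/2}\big)\to1-\alpha$. The only step that is not routine is the lower bound on the denominator, i.e., the non-degeneracy of $\hat S_{Jack}$ at scale $\sigma_n n^{-1/2}$. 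This is why we would lean on the Lindeberg-based fold CLT.
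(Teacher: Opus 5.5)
Your proof is correct and follows the paper's argument. The paper likewise writes the centering gap as $\hat b_n=(V-1)\{V^{-1}\sum_v R_n(P_{n,-v},P_0)-R_n(P_n,P_0)\}$, bounds it by $o_p(\sigma_n n^{-1/2})$ via the scaled remainder condition, and concludes by Slutsky using $\widehat{SE}\asymp_p\sigma_n n^{-1/2}$. The only difference is that the paper takes this last order statement from the proof of \cref{thm:fixedV_slower}, whereas you re-derive the needed lower bound yourself via the fold-level Lindeberg CLT and the triangle inequality in $\R^V$. One notational slip: your $W_v$ is already normalized, so the prefactor $(n/V)/\sigma_n^2$ on $S_W^2$ is redundant.
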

\begin{proof}
See Appendix~\ref{app:proofs}.
\end{proof}

\subsection{Diverging-$V$ consistency for slower rates}

The diverging-$V$ consistency also naturally extends to this regime. We show that the relative ratio of the jackknife variance estimator to the true diverging variance converges to one.

\begin{theorem}[Diverging-$V$ Consistency for Slower Rates]\label{thm:divergingV_slower}
Under Assumption \ref{assump:slower_rate}, define the scaled variance estimator $\hat{s}_n^2 = \frac{n}{V} \hat S_{Jack}^2$. Suppose that a Weak Law of Large Numbers holds for the squared influence curve: $\frac{1}{n \sigma_n^2} \sum_{i=1}^n \varphi_n^2(O_i) \CiP 1$ (which is satisfied if $\E_{P_0}[\varphi_n^4(O)] = o(n \sigma_n^4)$). 

\textcolor{black}{For the slower-rate setting, define}
\[
\textcolor{black}{
d_{n,v}=R_n(P_n,P_0)-R_n(P_{n,-v},P_0),
\qquad
\|d_n\|_{V,2}
=
\left(\frac{1}{V}\sum_{v=1}^V d_{n,v}^2\right)^{1/2}.}
\]
\textcolor{black}{Assume $V \to \infty$, $V=o(n)$, and}
\[
\textcolor{black}{
\|d_n\|_{V,2}
=
o_p\!\left(\frac{\sigma_n}{\sqrt n\,V}\right).
}
\]
Then
\[
\frac{\hat{s}_n^2}{\sigma_n^2} \;\CiP\; 1.
\]
\end{theorem}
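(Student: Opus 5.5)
We mirror the proof of \cref{thm:divergingV}, working throughout on the standardized scale $\tilde\varphi_n=\varphi_n/\sigma_n$. The scale-invariance point is simple: dividing all pseudo-values by $\sigma_n$ divides $\hat s_n^2$ by $\sigma_n^2$. So it suffices to show $\hat s_n^2/\sigma_n^2\CiP 1$ for the standardized problem, which has unit-variance influence functions and remainder differences $d_{n,v}/\sigma_n$.

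\textbf{Exact decomposition.} From Section~\ref{sec:vfold}, the pseudo-values satisfy
\[
IC_{Jack}(v)-\Psi(P_0)=W_v+R_n(P_n,P_0)+(V-1)d_{n,v},
\qquad W_v=P^1_{n,v}\varphi_n .
\]
The term $R_n(P_n,P_0)$ is common to all folds, so it cancels after centering. Writing $e_v=(V-1)(d_{n,v}-\bar d_n)$, we get
\[
\hat S_{Jack}^2=S_W^2+\frac{2}{V-1}\sum_v (W_v-\bar W)e_v+\frac{1}{V-1}\sum_v e_v^2 .
\]

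\textbf{Linear part.} Multiply by $n/V$ and divide by $\sigma_n^2$. Expanding the squares of the fold means gives
\[
\frac{n}{V}\frac{S_W^2}{\sigma_n^2}=\frac{V}{V-1}\Big\{\frac{1}{n\sigma_n^2}\sum_i\varphi_n^2(O_i)+\frac{T_n}{\sigma_n^2}-\frac{n}{V}\frac{\bar W^2}{\sigma_n^2}\Big\},
\]
where $T_n$ is the within-fold cross term from \cref{thm:divergingV} with $\varphi_n$ in place of $\varphi$. The three pieces are handled as follows.
\begin{itemize}
\item The first term tends to $1$ in probability by the assumed weak law for the squared influence curve.
\item The cross term $T_n/\sigma_n^2$ has mean zero. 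Its variance is $n^{-2}\sigma_n^{-4}\cdot 2\sum_v |I_v|(|I_v|-1)\sigma_n^4\le 2/V$, since the summands $\varphi_i\varphi_j$ with $i\ne j$ are pairwise uncorrelated. Hence $T_n/\sigma_n^2=O_p(V^{-1/2})=o_p(1)$ by Chebyshev. Only second moments are used here, so the fourth-moment condition is not needed for this piece.
\item The last term has $\E[n\bar W^2/(V\sigma_n^2)]=1/V\to0$.
\end{itemize}
Together with $V/(V-1)\to1$, this gives $(n/V)S_W^2/\sigma_n^2\CiP 1$.

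\textbf{Remainder part.} This is the step where the remainder condition must be used carefully. We have
\[
\frac{n}{V(V-1)\sigma_n^2}\sum_v e_v^2\le \frac{n(V-1)}{\sigma_n^2}\,\|d_n\|_{V,2}^2 .
\]
Here $\sum_v(d_{n,v}-\bar d_n)^2\le\sum_v d_{n,v}^2=V\|d_n\|_{V,2}^2$. Under the assumption $\|d_n\|_{V,2}=o_p(\sigma_n/(\sqrt n V))$, the bound is $o_p(nV\cdot 1/(nV^2))=o_p(V^{-1})=o_p(1)$. The cross term is controlled by Cauchy--Schwarz: it is at most $2\{(n/V)S_W^2/\sigma_n^2\}^{1/2}\{o_p(1)\}^{1/2}=o_p(1)$.

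\textbf{Main obstacle.} The step to check most carefully is the scaling bookkeeping. The pseudo-value multiplies $d_{n,v}$ by $V-1$, and one must verify that the rate $\sigma_n/(\sqrt n V)$ exactly absorbs this factor after the $n/V$ normalization. One must also confirm that no step silently needs $\sigma_n$ bounded. The only other place where the divergence of $\sigma_n$ could matter is the weak law, and that is assumed directly. Finally, the Lindeberg part of Assumption~\ref{assump:slower_rate} is not needed for consistency; it is used only for the $t$ and normal limits.
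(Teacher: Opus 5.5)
Your proposal is correct and follows essentially the same route as the paper's proof. It uses the centered pseudo-value decomposition with $e_v=(V-1)(d_{n,v}-\bar d_n)$, the bound $\sum_v e_v^2\le (V-1)^2V\|d_n\|_{V,2}^2$ giving an $o_p(V^{-1})$ contribution, Cauchy--Schwarz for the cross term, the assumed weak law for the diagonal term, and the $2/V$ variance bound for $T_n/\sigma_n^2$. The only cosmetic difference is that you keep $\frac{n}{V}\bar W^2/\sigma_n^2$ intact and dispose of it through its mean $1/V$, whereas the paper splits it into a diagonal piece and the off-diagonal term $U_n/\sigma_n^2=O_p(1)$ multiplied by $(V-1)^{-1}$.
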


\begin{proof}
See Appendix~\ref{app:proofs}.
\end{proof}

Appendix~\ref{app:V_rate} gives sufficient TMLE, one-step, and AIPW-style structural conditions under which this stability condition follows from the explicit upper-growth bound $V=o\!\left(n\sigma_n^2/\tilde\sigma_n^4\right)$, where $\tilde\sigma_n^2$ is the variance scale of the nuisance estimator's linear approximation; explicit HAL dose-response examples are worked out there.

\begin{corollary}[Diverging-$V$ Wald inference for slower rates]\label{cor:divergingV_wald_slower}
\begingroup\color{black}
Under the assumptions of \cref{thm:divergingV_slower}, suppose also that the triangular-array linear term satisfies
\[
\frac{\sqrt n\,P_n\varphi_n}{\sigma_n}\CiD \Normal(0,1).
\]
Then
\[
\frac{\hat\Psi(P_n)-\Psi(P_0)}{\hat s_n/\sqrt n}
\CiD \Normal(0,1),
\qquad
\hat s_n^2=\frac{n}{V}\hat S_{Jack}^2.
\]
Consequently,
\[
\hat\Psi(P_n)\pm t_{V-1,\,1-\alpha/2}\frac{\hat s_n}{\sqrt n}
=
\hat\Psi(P_n)\pm t_{V-1,\,1-\alpha/2}\frac{\hat S_{Jack}}{\sqrt V}
\]
has asymptotic coverage $1-\alpha$, since $t_{V-1,\,1-\alpha/2}\to z_{1-\alpha/2}$ as $V\to\infty$.
\endgroup
\end{corollary}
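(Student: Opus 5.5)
We prove Corollary~\ref{cor:divergingV_wald_slower} by combining the ratio consistency of \cref{thm:divergingV_slower} with the assumed triangular-array CLT through Slutsky's lemma, and then passing from normal to $t_{V-1}$ quantiles. The only real care is in how the remainder is normalized by the diverging scale $\sigma_n$.

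First we decompose the numerator. By Assumption~\ref{assump:slower_rate}, equation \eqref{eq:AL_slower} gives
\[
\frac{\sqrt n\,(\hat\Psi(P_n)-\Psi(P_0))}{\sigma_n}
=\frac{\sqrt n\,P_n\varphi_n}{\sigma_n}+\frac{\sqrt n\,R_n(P_n,P_0)}{\sigma_n}.
\]
The first term converges to $\Normal(0,1)$ by hypothesis. The second term is $o_p(1)$ by the scaled remainder condition $R_n(P_n,P_0)=o_p(\sigma_n n^{-1/2})$. Slutsky's lemma then gives
\[
\frac{\sqrt n\,(\hat\Psi(P_n)-\Psi(P_0))}{\sigma_n}\CiD\Normal(0,1).
\]

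Next we handle the denominator. \cref{thm:divergingV_slower} gives $\hat s_n^2/\sigma_n^2\CiP 1$. Since $x\mapsto\sqrt{x}$ is continuous at $1$, the continuous mapping theorem gives $\sigma_n/\hat s_n\CiP 1$. In particular $P(\hat s_n>0)\to1$, so the ratio is well defined with probability tending to one. We can then write
\[
\frac{\hat\Psi(P_n)-\Psi(P_0)}{\hat s_n/\sqrt n}
=\frac{\sqrt n\,(\hat\Psi(P_n)-\Psi(P_0))}{\sigma_n}\cdot\frac{\sigma_n}{\hat s_n},
\]
and Slutsky's lemma yields convergence to $\Normal(0,1)$. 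The unknown scale $\sigma_n$ cancels in this product. This is the scale-invariance point, and it is the only place where care is needed: the numerator must be normalized by the same $\sigma_n$ that appears in the ratio consistency. That requirement is met because both the remainder condition and the Lindeberg/CLT hypothesis are stated on the $\sigma_n$ scale.

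For the interval, note that $\hat s_n/\sqrt n=\sqrt{(n/V)\hat S_{Jack}^2/n}=\hat S_{Jack}/\sqrt V$, so the two displayed intervals are the same. Coverage equals $P(|T_n|\le t_{V-1,1-\alpha/2})$, where $T_n$ is the Studentized statistic. Because $V\to\infty$, we have $t_{V-1,1-\alpha/2}\to z_{1-\alpha/2}$, and the $\Normal(0,1)$ limit has a continuous distribution function. Therefore
\[
P(|T_n|\le t_{V-1,1-\alpha/2})\to P(|Z|\le z_{1-\alpha/2})=1-\alpha.
\]
This holds because convergence in distribution to a continuous limit implies uniform convergence of the distribution functions by Pólya's theorem, and it is that uniformity which lets the deterministic critical value move with $n$.
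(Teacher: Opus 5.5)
Your proof is correct and follows the paper's argument: the scaled remainder condition reduces the $\sigma_n$-normalized numerator to the linear term, \cref{thm:divergingV_slower} gives $\hat s_n/\sigma_n\CiP 1$, Slutsky's theorem yields the normal limit, and convergence of the $t_{V-1}$ critical values gives the coverage statement. You also supply details the paper leaves implicit: that $\hat s_n>0$ with probability tending to one, and the P\'olya-type uniformity that handles the moving critical value. One minor quibble: the continuous map needed to get $\sigma_n/\hat s_n$ is $x\mapsto x^{-1/2}$ rather than $x\mapsto\sqrt x$, though both are continuous at $1$.
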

\begin{proof}
See Appendix~\ref{app:proofs}.
\end{proof}

\subsection{Generalized simultaneous confidence bands}

The simultaneous confidence band results developed in the preceding section also extend to the generalized asymptotic linearity framework of Assumption~\ref{assump:slower_rate}. Consider a vector-valued parameter $\Psi(P_0) = (\Psi_1(P_0), \dots, \Psi_m(P_0))^\transpose$ where each component admits
\[
\hat\Psi_j(P_n) - \Psi_j(P_0) = P_n \varphi_{n,j} + R_{n,j}(P_n, P_0),
\]
with $\sigma_{n,j}^2 = \E_{P_0}[\varphi_{n,j}^2] \to \infty$, possibly at different rates for different $j$. The componentwise Studentization eliminates $\sigma_{n,j}$ from each component independently:
\[
T_j = \frac{\sqrt{V}(\hat\Psi_{Jack}^{(j)}(P_n) - \Psi_j(P_0))}{\hat\sigma_j} \CiD T_j^\infty \sim t_{V-1},
\]
with the joint distribution determined by the correlation matrix $R_{0,n} = D_n^{-1} \bSigma_n D_n^{-1}$, where $D_n = \text{diag}(\sigma_{n,1}, \dots, \sigma_{n,m})$ and $\bSigma_n(j,k) = \E_{P_0}[\varphi_{n,j} \varphi_{n,k}]$. This correlation matrix has entries in $[-1, 1]$ regardless of the scale of the variances. Consequently, all results in the preceding section---including the componentwise-Studentized distribution (\cref{thm:multi_t}), the Monte Carlo critical value algorithm, and the bias-corrected bands (\cref{thm:bias_corrected})---hold verbatim under Assumption~\ref{assump:slower_rate} applied componentwise, since all proofs are scale-invariant.

\subsection{Application: HAL dose-response curve}

The causal dose-response curve $\Psi(a) = \E_{P_0}[\E_{P_0}(Y \mid A = a, W)]$ estimated via the HAL plug-in $\hat\Psi(a) = \frac{1}{n}\sum_{i=1}^n \hat Q_n(W_i, a)$ provides the primary example of the generalized framework, where $\hat Q_n$ denotes the highly adaptive lasso estimator for the conditional mean outcome $Q_0(W, A) = \E_{P_0}[Y \mid A, W]$.

To establish the pointwise asymptotic normality of the HAL estimator without delving into the complex proof details of $L_1$-penalization, we outline the logical progression developed by \citet{vanderLaan2023} (the $k$-th order spline basis is constructed in their Definition~7). The analysis centers on the construction of a designed oracle working model. Let $\mathcal{R}_{0,n}$ denote a fixed, oracle set of spline basis functions. This set is designed to be as sparse as possible---meaning its dimension $J_n = |\mathcal{R}_{0,n}|$ grows slowly with $n$---while still being sufficiently rich to maintain the optimal approximation error rate for the true function $Q_0$. 

Let $D(\mathcal{R}_{0,n})$ denote the finite-dimensional linear working model spanned by these $J_n$ basis functions, and let $Q_{0,n} = \argmin_{Q \in D(\mathcal{R}_{0,n})} P_0 L(Q)$ be the theoretical oracle maximum likelihood estimator (MLE) on this model under a loss function $L$. A standard analysis of the unpenalized empirical MLE on this fixed working model readily establishes its asymptotic normality. 

The key observation in \citet{vanderLaan2023} is that the actual HAL estimator $\hat Q_n$, despite selecting its basis adaptively via $L_1$-regularization, effectively solves the unpenalized score equations associated with the basis functions in the oracle model $D(\mathcal{R}_{0,n})$. Because $\hat Q_n$ solves the scores of this oracle model, it is asymptotically equivalent to the oracle MLE and seamlessly inherits its asymptotic normality. This yields the generalized asymptotic linear representation:
\[
(\hat Q_n - Q_{0,n})(x) \approx (P_n - P_0) D_{Q_{0,n},x},
\]
where $x = (W, a)$, and the influence-curve-like object $D_{Q_{0,n},x} = \sum_{j \in \mathcal{R}_{0,n}} S_{Q_{0,n}}(\phi_j^*) \phi_j^*(x)$ is a sum over $J_n$ terms. Here, $\{\phi_j^*\}_{j \in \mathcal{R}_{0,n}}$ forms an orthonormalized basis of $D(\mathcal{R}_{0,n})$ in $L^2(P_0)$, and $S_{Q_{0,n}}(\phi) = \frac{d}{d\epsilon} L(Q_{0,n} + \epsilon \phi)\big|_{\epsilon=0}$ is the corresponding score operator evaluated at $Q_{0,n}$. 

The variance of this expansion, $\sigma_n^2(x) = P_0\{D_{Q_{0,n},x}\}^2 = J_n \cdot \bar\sigma_n^2(x)$, diverges linearly with the dimension $J_n$ of the oracle model, while the normalized variance $\bar\sigma_n^2(x) = J_n^{-1} \sum_{j \in \mathcal{R}_{0,n}} \{\phi_j^*(x)\}^2$ remains bounded. (The bounded, normalized scale $\bar\sigma_n^2(x)$ should not be confused with the nuisance linearization scale $\tilde\sigma_n^2$ of Appendix~\ref{app:V_rate}: for the HAL plug-in, the nuisance linearization is $D_{Q_{0,n},x}$ itself, so that scale equals the diverging $\sigma_n^2(x) = J_n \bar\sigma_n^2(x)$, which is precisely what makes the comparability assumption $\tilde\sigma_n \asymp \sigma_n$ of Appendix~\ref{app:V_rate} natural here.) Standardizing by this diverging standard error establishes the pointwise asymptotic normality of the HAL estimator at a slower convergence rate \citep[Corollary~1, with oracle dimension $J_n \asymp n^{1/(2k^{*}+1)}$, $k^{*}=k+1$, for smoothness order $k\ge1$]{vanderLaan2023}: $(n/J_n)^{1/2}(\hat Q_n - Q_{0,n})(x) / \bar\sigma_n(x) \Longrightarrow_d \Normal(0, 1)$. Integrating over the empirical distribution of $W$ extends this pointwise linearization to the dose-response plug-in estimator $\hat\Psi(a)$, a non-pathwise-differentiable functional of $\hat Q_n$ whose plug-in functional normality at the slower rate $(n/J_n)^{1/2}$ is established in \citet[Appendix~K, Theorem~23 and Corollary~6]{vanderLaan2023}.

\textcolor{black}{The oracle model results of \citet{vanderLaan2023} suggest that Assumption~\ref{assump:slower_rate} is mathematically plausible for HAL dose-response estimators.} The relevant condition is not ``sparsity'' of the target function per se, but rather that the scores that the data-adaptive HAL fit actually solves approximate the corresponding oracle working-model scores well enough---so that the fold-level influence-curve contributions used by the jackknife reproduce the oracle expansion up to an $o_p(n^{-1/2})$ remainder. Precise theorems giving conditions of this type (control of the empirical HAL scores relative to the oracle scores) are available in \citet{vanderLaan2023}.\textcolor{black}{A rigorous theoretical verification for exact, data-adaptive HAL implementations is beyond the scope of this section, because the selected basis functions are themselves data-adaptive. In the oracle formulation, the scaled remainder condition is supported by the approximation behavior of the sparse working model, and the Lindeberg condition is supported by the representation of $D_{Q_{0,n},x}$ as a sum of bounded basis-function contributions.}

\textcolor{black}{Under Assumption~\ref{assump:slower_rate},} the $V$-fold jackknife confidence interval $\hat\Psi(a) \pm t_{V-1, 1-\alpha/2} \cdot \hat S_{Jack}(a) / \sqrt{V}$ therefore provides asymptotically valid pointwise inference for the dose-response curve. Crucially, the jackknife standard error $\hat S_{Jack}(a) / \sqrt{V} \asymp_p \sqrt{J_n / n}$ automatically adapts to the correct diverging rate without requiring the practitioner to know the effective dimension $J_n$. This scale invariance explains why the delta method---which requires explicitly extracting the active basis functions, computing the growing $J_n$-dimensional influence curve, and calculating its diverging variance---can fail numerically (producing NAs at rates up to 6.2\% in our simulations). In contrast, the $V$-fold jackknife reliably captures the diverging variance automatically through the fold-level variability of the pseudo-values. \textcolor{black}{The simulation study in Section~\ref{sec:sim_hal} provides empirical support for this scale-adaptive behavior.}

\section{Simulation Studies}\label{sec:simulations}

We evaluate the finite-sample performance of the $V$-fold jackknife through three simulation studies of increasing complexity. Section~\ref{sec:sim_ate} validates the fixed-$V$ $t_{V-1}$ inference from \cref{thm:fixedV} for a scalar parameter (the average treatment effect). Section~\ref{sec:sim_km} extends the evaluation to a functional parameter (the Kaplan--Meier survival curve), comparing pointwise and simultaneous confidence bands from \cref{thm:multi_t}. Section~\ref{sec:sim_hal} demonstrates the practical value of the approach for HAL-based dose-response curve estimation, where analytic standard errors may be unreliable.

\subsection{Scalar Inference: Average Treatment Effect}\label{sec:sim_ate}

We first examine pointwise confidence intervals for the average treatment effect $\Psi(P_0) = \E_{P_0}[\E_{P_0}(Y \mid A=1, W) - \E_{P_0}(Y \mid A=0, W)]$ estimated by targeted minimum loss-based estimation (TMLE) with Super Learner. The data generating process draws eight binary and continuous covariates $W = (W_1, \ldots, W_8)$, a binary treatment $A$, and a continuous outcome $Y$. We consider four scenarios in which the propensity score model $g$ is correctly specified, varying whether the outcome regression $Q$ is correctly specified or misspecified, and whether positivity violations are moderate or severe. Nuisance functions are estimated via Super Learner with library \texttt{(SL.glm, SL.step, SL.glm.interaction)}. We report results for $n \in \{200, 1000\}$ based on 500 replications.

We compare four approaches to constructing 95\% confidence intervals:
\begin{enumerate}[label=(\roman*)]
\item \emph{EIC-based}: the plug-in influence function variance estimator with normal critical values, as implemented in the \texttt{tmle} package \citep{Gruber2012};
\item \emph{Targeted bootstrap}: bootstrap standard errors ($B = 10{,}000$) with normal critical values;
\item \emph{Robust variance}: the robust variance estimator of \citet{tran2018robust}, as implemented in \texttt{ltmle};
\item \emph{$V$-fold jackknife} ($V \in \{5, 10, 20\}$): the proposed estimator with $t_{V-1}$ critical values.
\end{enumerate}
All four methods use the same TMLE point estimate; only the variance estimation and critical values differ.

\begin{table}[htbp]
\centering
\caption{Coverage (\%) and mean CI width for the average treatment effect via TMLE.
Scenarios~A--D have the propensity score model $g$ correctly specified.
A: $Q$ correct, moderate positivity; B: $Q$ correct, severe positivity;
C: $Q$ misspecified, moderate positivity; D: $Q$ misspecified, severe positivity. 500 replications.}
\label{tab:ate}
\small
\begin{tabular}{l cc cc cc cc}
\toprule
 & \multicolumn{2}{c}{Scenario A} & \multicolumn{2}{c}{Scenario B}
 & \multicolumn{2}{c}{Scenario C} & \multicolumn{2}{c}{Scenario D} \\
\cmidrule(lr){2-3} \cmidrule(lr){4-5} \cmidrule(lr){6-7} \cmidrule(lr){8-9}
Method & Cov & Width & Cov & Width & Cov & Width & Cov & Width \\
\midrule
\multicolumn{9}{l}{\textit{$n = 200$}} \\
\midrule
EIC-based & 86.6 & 0.426 & 73.2 & 0.390 & 84.0 & 0.412 & 69.8 & 0.362 \\
Targeted bootstrap & 94.8 & 0.542 & 95.2 & 0.642 & 96.9 & 0.565 & 95.8 & 0.688 \\
Robust variance & 92.8 & 0.580 & 88.8 & 0.785 & 94.6 & 0.587 & 90.2 & 0.809 \\
$V$-fold ($V{=}5$) & 96.8 & 0.753 & 92.2 & 1.091 & 97.4 & 0.831 & 94.4 & 1.262 \\
$V$-fold ($V{=}10$) & 96.8 & 0.726 & 93.8 & 1.025 & 98.0 & 0.754 & 95.2 & 1.138 \\
$V$-fold ($V{=}20$) & 98.2 & 0.750 & 94.8 & 1.107 & 98.4 & 0.786 & 96.0 & 1.253 \\
\midrule
\multicolumn{9}{l}{\textit{$n = 1000$}} \\
\midrule
EIC-based & 90.8 & 0.222 & 81.8 & 0.234 & 91.2 & 0.224 & 79.6 & 0.237 \\
Targeted bootstrap & 93.8 & 0.252 & 96.4 & 0.395 & 95.6 & 0.258 & 96.6 & 0.402 \\
Robust variance & 93.6 & 0.241 & 95.8 & 0.377 & 93.8 & 0.247 & 94.8 & 0.379 \\
$V$-fold ($V{=}5$) & 94.0 & 0.339 & 95.0 & 0.492 & 94.2 & 0.340 & 94.2 & 0.501 \\
$V$-fold ($V{=}10$) & 96.2 & 0.287 & 96.2 & 0.443 & 97.6 & 0.301 & 95.6 & 0.444 \\
$V$-fold ($V{=}20$) & 95.6 & 0.282 & 96.0 & 0.432 & 96.0 & 0.286 & 96.0 & 0.440 \\
\bottomrule
\end{tabular}
\end{table}

\Cref{tab:ate} summarizes coverage and CI width across all four scenarios. Under moderate positivity (Scenarios~A and~C), the EIC-based method achieves only 84--87\% coverage at $n=200$ because the plug-in variance estimator underestimates the true sampling variability when nuisance estimation is imprecise. Under severe positivity (Scenarios~B and~D), the EIC-based undercoverage worsens to 70--73\%. In contrast, the $V$-fold jackknife with $t_{V-1}$ critical values achieves 93--98\% coverage across all four scenarios at $n=200$, confirming the finite-sample validity of \cref{thm:fixedV}. The targeted bootstrap achieves near-nominal coverage uniformly, while the robust variance estimator undercovers slightly (89--93\%) in the severe positivity settings. At $n=1000$, the EIC-based method remains substantially below nominal under severe positivity (80--82\%), whereas all other methods converge toward the 95\% level.

Two patterns concerning interval width are worth highlighting. First, at the smaller sample size ($n=200$) the targeted bootstrap attains valid coverage with substantially narrower intervals than the $V$-fold jackknife (e.g., width $0.542$ versus $0.726$--$0.753$ in Scenario~A): with only $V$ leave-fold-out refits, the fixed-$V$ jackknife pays for its distribution-free validity through the wider $t_{V-1}$ critical values and a non-concentrating variance estimate, so it is more conservative than the bootstrap in small samples. Second, within the jackknife itself, increasing $V$ moves the intervals toward the targeted bootstrap width, because the $t_{V-1}$ quantiles shrink toward the normal quantiles as $V$ grows. Thus the price for avoiding the targeted bootstrap's $B$-fold refitting cost is wider intervals at small $n$ and small $V$; this gap narrows both as $n$ increases (by $n=1000$ the $V$-fold widths are much closer to the targeted bootstrap) and as $V$ increases at fixed $n$.

\cref{fig:ate_coverage} illustrates the coverage--width tradeoff as a function of~$V$. The top row shows that coverage remains above the nominal 95\% level for all $V \geq 2$ at $n = 200$ in the moderate-positivity scenarios, while under severe positivity the $V$-fold jackknife still substantially outperforms the EIC-based method. The bottom row confirms that CI width decreases with $V$ as the $t_{V-1}$ quantiles approach the normal quantiles. This tradeoff is consistent with the theory: for fixed $V$, the variance estimator does not concentrate, and the $t$-distribution accounts for this additional uncertainty. The reduction in width is largest for small $V$ and diminishes as $V$ grows: most of the gain comes from the rapid shrinkage of the $t_{V-1}$ quantiles at small degrees of freedom (e.g.\ from $V=5$ to $V=10$), whereas for larger $V$ the quantiles are already close to the normal value and offer little further narrowing. Beyond this point additional folds mainly reduce the number of observations used in each leave-fold-out refit, which does not tighten---and can slightly inflate---the variance estimate, so increasing $V$ past a moderate value yields diminishing returns in width while adding computational cost.

\begin{figure}[htbp]
\centering
\includegraphics[width=\textwidth]{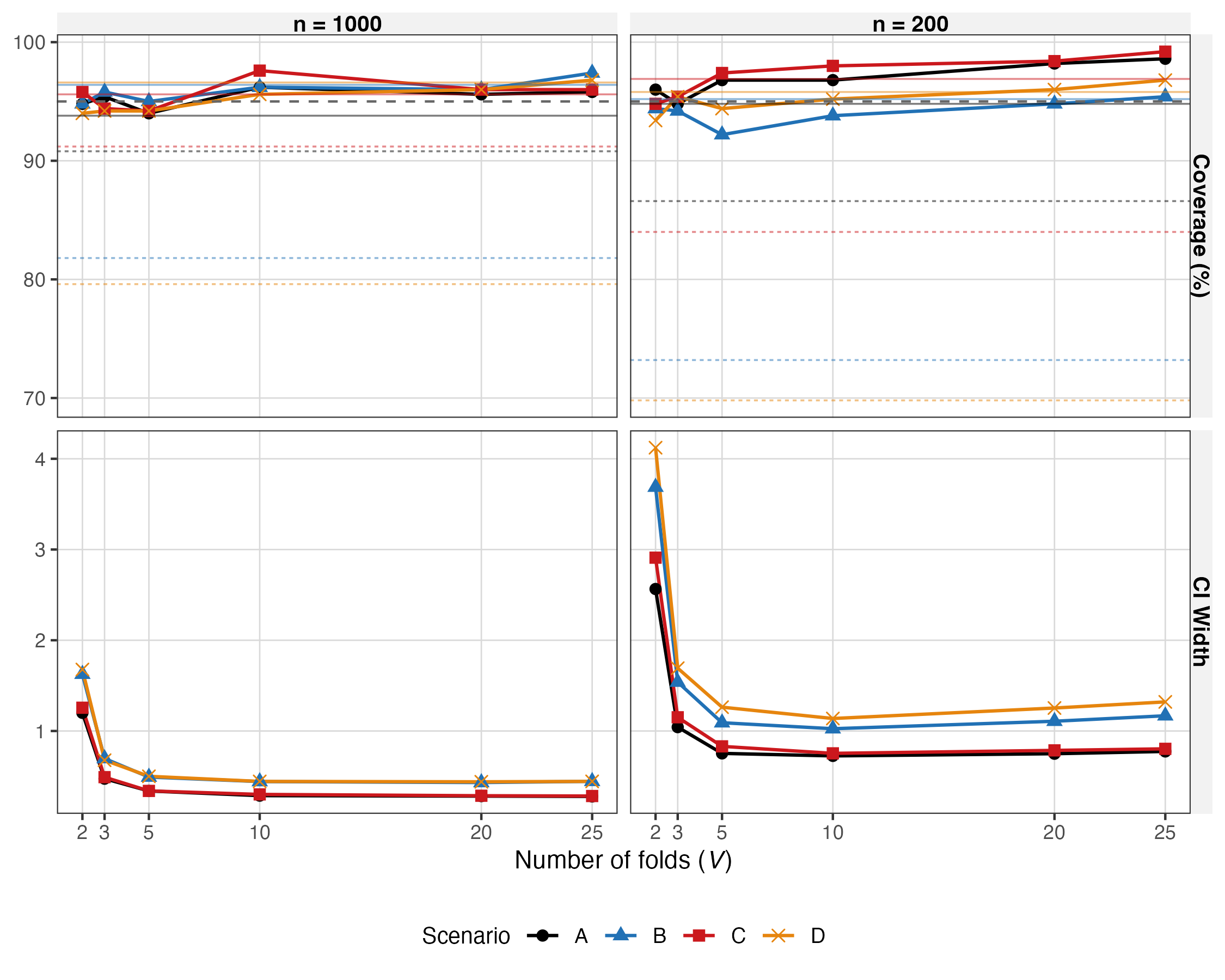}
\caption{Coverage (top) and CI width (bottom) of $V$-fold jackknife confidence intervals as a function of the number of folds $V$, for four scenarios and two sample sizes. Horizontal dashed line in the top panels: nominal 95\% level. Faded horizontal lines: coverage of the EIC-based and bootstrap methods for each scenario. The $V$-fold jackknife maintains coverage above the nominal level for moderate positivity and substantially outperforms the EIC-based method under severe positivity, with intervals narrowing as $V$ increases.}
\label{fig:ate_coverage}
\end{figure}

\subsection{Survival Curve: Kaplan--Meier Estimator}\label{sec:sim_km}

We next evaluate both pointwise and simultaneous confidence bands for the Kaplan--Meier estimator of the survival function $S(t) = P(T > t)$. This setting provides a natural testbed because the Greenwood formula yields an analytic standard error for comparison. We generate right-censored survival data from exponential distributions with event rate $\lambda = 0.1$ and censoring rate $\lambda_c = 0.05$, yielding approximately 33\% censoring. The sample size is $n = 200$, and we evaluate the survival function at $m = 91$ equally spaced time points on $[2, 20]$. Results are based on 500 replications.

\textbf{Pointwise inference.}
\cref{fig:km_pointwise} and \cref{tab:km_pw} compare pointwise coverage for the Greenwood formula with normal critical values, the $V$-fold jackknife with $t_{V-1}$ critical values ($V \in \{5, 10, 20\}$), and the bootstrap Wald method ($B = 1000$). All methods achieve approximately 95\% coverage, averaged over evaluation times, with the $V$-fold jackknife tracking the Greenwood formula closely. \cref{tab:km_pw} confirms that the $V$-fold jackknife variance estimator is well calibrated for the Kaplan--Meier estimator even at moderate sample sizes: the $V$-fold $t$ achieves 94.9\% ($V = 5$), 94.8\% ($V = 10$), and 94.7\% ($V = 20$), comparable to the Greenwood formula (94.6\%) and bootstrap (94.7\%).

\begin{figure}[htbp]
\centering
\includegraphics[width=0.7\textwidth]{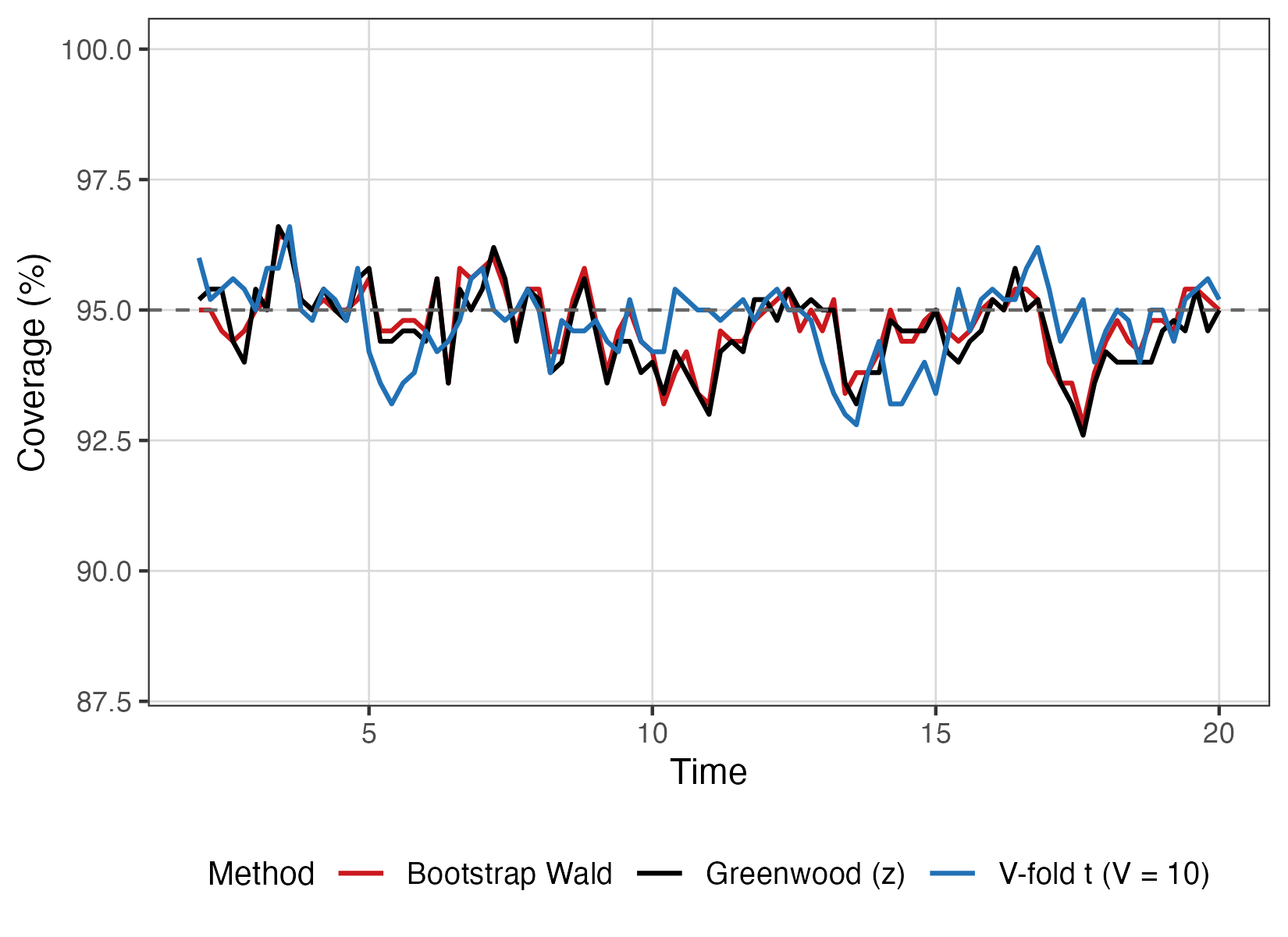}
\caption{Pointwise coverage of 95\% confidence intervals for the Kaplan--Meier survival function across evaluation times ($n = 200$, 500 replications). The $V$-fold jackknife with $t_9$ critical values ($V = 10$) closely tracks the Greenwood formula, and the bootstrap Wald method performs comparably.}
\label{fig:km_pointwise}
\end{figure}

\begin{table}[htbp]
\centering
\caption{Pointwise coverage (\%) and mean CI width for the Kaplan--Meier
survival function ($n=200$, $m=91$ evaluation points, 500 replications,
averaged over all evaluation times).}
\label{tab:km_pw}
\begin{tabular}{lcc}
\toprule
Method & Coverage (\%) & Mean Width \\
\midrule
Greenwood ($z$) & 94.6 & 0.145 \\
$V$-fold $t$ ($V{=}5$) & 94.9 & 0.197 \\
$V$-fold $t$ ($V{=}10$) & 94.8 & 0.164 \\
$V$-fold $t$ ($V{=}20$) & 94.7 & 0.153 \\
Bootstrap Wald & 94.7 & 0.145 \\
\bottomrule
\end{tabular}
\end{table}

\textbf{Simultaneous inference.}
\cref{tab:km} reports simultaneous coverage for six methods: the $V$-fold simultaneous band from \cref{thm:multi_t} (using the componentwise-Studentized critical value from \cref{sec:mc_critical}), the Greenwood normal band (Greenwood SE with multivariate normal critical values), the classical Hall--Wellner and Equal Precision (EP) bands, the studentized bootstrap band \citep{DavisonHinkley1997}, the bias-corrected $V$-fold band from \cref{thm:bias_corrected}, and a bias-corrected bootstrap (BC Bootstrap) band that widens the studentized bootstrap band by the absolute bootstrap bias $|\bar S^*_b(t) - \hat S(t)|$ at each evaluation point, where $\bar S^*_b(t) = B^{-1}\sum_{b=1}^{B}\hat S^*_b(t)$ is the mean of the bootstrap Kaplan--Meier estimates.

\begin{table}[htbp]
\centering
\caption{Simultaneous coverage (\%) and mean band width for the Kaplan--Meier survival function ($n=200$, $m=91$ evaluation points, 500 replications).
``BC'' denotes the bias-corrected band from \cref{sec:bias_corrected_bands}.}
\label{tab:km}
\begin{tabular}{lcc}
\toprule
Method & Simult.\ Cov.\ (\%) & Mean Width \\
\midrule
Equal Precision & 96.6 & 0.239 \\
Hall--Wellner & 96.6 & 0.251 \\
Bootstrap & 95.8 & 0.226 \\
$V$-fold simultaneous ($V{=}20$) & 94.6 & 0.239 \\
$V$-fold simultaneous ($V{=}5$) & 93.6 & 0.454 \\
$V$-fold simultaneous ($V{=}10$) & 93.4 & 0.282 \\
Greenwood Normal & 92.4 & 0.210 \\
\addlinespace[2pt]
BC simultaneous ($V{=}20$) & 94.8 & 0.240 \\
BC simultaneous ($V{=}5$) & 93.8 & 0.456 \\
BC simultaneous ($V{=}10$) & 93.4 & 0.283 \\
BC Bootstrap & 95.8 & 0.228 \\
\bottomrule
\end{tabular}
\end{table}

The $V$-fold simultaneous band achieves 93.4--94.6\% coverage across all three choices of $V$, approaching the classical EP (96.6\%) and Hall--Wellner (96.6\%) bands. For $V = 20$ the band achieves 94.6\% coverage with width 0.239, comparable to these established methods, while smaller $V$ produces wider but slightly less precise bands: $V = 5$ achieves 93.6\% coverage with mean width 0.454, and $V = 10$ achieves 93.4\% with width 0.282. Larger $V$ benefits from a more precise correlation matrix estimate $\hat R_{\mathrm{Jack}}$ (since $V$ is closer to the $m = 91$ evaluation points), which outweighs the effect of the lighter-tailed $t_{V-1}$ critical values. The net effect is that all three choices of $V$ achieve reasonable simultaneous coverage, with $V = 20$ offering the best coverage and narrowest bands. The bias-corrected bands yield coverage nearly identical to the standard bands (e.g., 94.8\% versus 94.6\% for $V = 20$) with negligible width increase (0.240 versus 0.239), reflecting the fact that the Kaplan--Meier estimator has minimal bias in this setting and the jackknife bias estimates $|\hat b_j|$ are effectively zero. Similarly, the BC Bootstrap band achieves the same 95.8\% coverage as the standard bootstrap with only marginally wider bands (0.228 versus 0.226), confirming that the bootstrap bias correction $|\bar S^*_b(t) - \hat S(t)|$ is also effectively zero for the Kaplan--Meier estimator.

\textbf{Effective rank of the correlation matrix.}
Since $m = 91$ exceeds $V - 1$ for all choices of $V$ considered here, the estimated correlation matrix $\hat R_{\mathrm{Jack}}$ is singular and the simultaneous critical value is computed via the eigendecomposition described in \cref{sec:eigen_validity}. To verify that the rank deficiency is benign, we estimate the true correlation matrix $R_0$ from the 500 simulation replications: the $500 \times 91$ matrix of Kaplan--Meier estimates across replications yields a sample correlation matrix whose eigendecomposition reveals the intrinsic dimensionality of the problem. The estimated effective rank is $\hat r(R_0) = m / \hat\lambda_{\max} = 91 / 53.8 \approx 1.7$, with the first two eigenvalues alone capturing 77\% of the total variance and only 9 eigenvalues needed to explain 95\% (\cref{fig:eigenvalue_decay}). This extremely low effective rank reflects the strong correlation among adjacent time points on a smooth survival curve: the Kaplan--Meier estimator at nearby time points is driven by essentially the same at-risk set, producing a nearly one-dimensional correlation structure. Even $V = 5$ folds far exceed the effective rank, confirming that the eigendecomposition accurately captures the dominant correlation structure of $R_0$.

\subsection{Dose-Response Curve: HAL Estimator}\label{sec:sim_hal}

Our final study demonstrates the $V$-fold jackknife for inference on a dose-response curve estimated by the highly adaptive lasso (HAL). Unlike the pathwise differentiable parameters in Sections~\ref{sec:sim_ate} and~\ref{sec:sim_km}, the dose-response curve is a non-pathwise differentiable functional whose HAL plug-in estimator converges at the slower rate $\sqrt{n/J_n}$, where $J_n$ is the number of non-zero HAL basis functions. The theoretical framework for this setting is developed in Section~\ref{sec:slower_rates}, where we show that the $V$-fold jackknife naturally adapts to this non-standard rate through the scale invariance of Studentization. Unlike the Kaplan--Meier estimator, deriving and implementing analytic standard errors for HAL estimates is non-trivial: the delta method requires computing the influence curve of the HAL-based plug-in estimator, which can fail numerically. The $V$-fold jackknife bypasses this difficulty entirely, requiring only the ability to refit the estimator on leave-fold-out subsamples.

We consider four data generating processes of increasing complexity from \citet{Shi2024}: DGP~1 (smooth, monotonically increasing dose-response), DGP~2 (oscillating), DGP~3 (piecewise smooth with a transition at $A = 2$), and DGP~4 (discontinuous). All have binary outcomes $Y$, continuous treatment $A \in [0,5]$, and a single confounder $W \sim N(0,1)$. The dose-response curve is evaluated at 20 equally spaced points on $[0, 5]$ with $n = 500$ and 500 replications. All HAL fits use first-order (piecewise linear) basis functions, corresponding to smoothness order~1 in the framework of \citet{Shi2024}, who also consider zero-order and adaptive choices. We consider two HAL fitting strategies: cross-validated HAL (CV-HAL) and undersmoothed HAL ($\lambda = \lambda_{\text{CV}} / \sqrt{\log n}$).

We compare three methods: the delta method using the HAL influence curve with normal critical values, the $V$-fold jackknife with $t_{V-1}$ critical values ($V = 20$), and the bootstrap Wald method ($B = 200$).

\textbf{Pointwise inference.}
\cref{tab:hal_pw} reports pointwise coverage, mean CI width, and the delta method failure rate (percentage of replications producing NA standard errors) across all four DGPs. For DGP~1 (smooth), all methods achieve near-nominal coverage (93--95\%), with the $V$-fold jackknife at 93.7\% (CV-HAL) and 95.0\% (undersmoothed). For DGP~2 (oscillating), the delta method's coverage drops sharply to 77.8\% (CV-HAL) and 72.2\% (undersmoothed), with failure rates of 0.8\% and 6.2\%, respectively. In contrast, the $V$-fold jackknife achieves 94.3\% and 95.1\%, demonstrating its robustness when the delta method is unreliable. For DGPs~3 (piecewise) and~4 (discontinuous), coverage is lower across all methods due to estimator bias, but the $V$-fold jackknife consistently provides the highest coverage: for undersmoothed HAL, 89.8\% and 90.0\% compared to 81.8\% and 85.5\% for the delta method, and 85.4\% and 86.3\% for the bootstrap.

\begin{table}[htbp]
\centering
\caption{Pointwise coverage (\%), mean CI width, and delta method failure rate
for HAL dose-response curve estimation ($n=500$, 500 replications, averaged
over 20 evaluation points).}
\label{tab:hal_pw}
\small
\begin{tabular}{ll ccc ccc}
\toprule
 & & \multicolumn{3}{c}{CV-HAL} & \multicolumn{3}{c}{Undersmoothed HAL} \\
\cmidrule(lr){3-5} \cmidrule(lr){6-8}
DGP & Method & Cov & Width & NA\% & Cov & Width & NA\% \\
\midrule
1 (smooth) & Delta & 93.1 & 0.129 & 0.0 & 93.6 & 0.171 & 0.4 \\
 & $V$-fold $t$ ($V{=}20$) & 93.7 & 0.137 & 0.0 & 95.0 & 0.185 & 0.0 \\
 & Bootstrap & 92.8 & 0.126 & 0.0 & 93.6 & 0.161 & 0.0 \\
\addlinespace
2 (oscillating) & Delta & 77.8 & 0.105 & 0.8 & 72.2 & 0.122 & 6.2 \\
 & $V$-fold $t$ ($V{=}20$) & 94.3 & 0.177 & 0.0 & 95.1 & 0.214 & 0.0 \\
 & Bootstrap & 90.5 & 0.138 & 0.0 & 92.8 & 0.161 & 0.0 \\
\addlinespace
3 (piecewise) & Delta & 75.2 & 0.161 & 0.0 & 81.8 & 0.208 & 1.4 \\
 & $V$-fold $t$ ($V{=}20$) & 82.2 & 0.211 & 0.0 & 89.8 & 0.288 & 0.0 \\
 & Bootstrap & 77.5 & 0.163 & 0.0 & 85.4 & 0.207 & 0.0 \\
\addlinespace
4 (discontinuous) & Delta & 76.0 & 0.219 & 0.0 & 85.5 & 0.279 & 0.0 \\
 & $V$-fold $t$ ($V{=}20$) & 80.2 & 0.255 & 0.0 & 90.0 & 0.343 & 0.0 \\
 & Bootstrap & 75.3 & 0.209 & 0.0 & 86.3 & 0.267 & 0.0 \\
\bottomrule
\end{tabular}
\end{table}

\cref{fig:hal_pointwise} shows pointwise coverage across the dose grid for all four DGPs. For DGP~1 with undersmoothed HAL, all methods maintain coverage near 95\% uniformly over the evaluation points. For DGP~2, the delta method exhibits severe undercoverage throughout the dose range, while the $V$-fold jackknife remains near nominal. For DGPs~3 and~4, coverage deteriorates at specific dose levels where the true dose-response curve is non-smooth: for DGP~3, coverage dips appear for $A > 2$ (where oscillations begin), and for DGP~4, coverage dips concentrate at $A \approx 2$ and $A \approx 4$ (the discontinuity locations). As the bias diagnosis in \cref{fig:hal_bias_diagnosis} confirms, these are precisely the points where the absolute bias of the first-order HAL estimator exceeds the estimated standard error, indicating that coverage loss reflects estimator bias rather than variance estimation failure. In all settings, the $V$-fold jackknife provides the highest coverage, reflecting the conservatism of the $t_{V-1}$ critical values.

\begin{figure}[htbp]
\centering
\includegraphics[width=\textwidth,height=0.78\textheight,keepaspectratio]{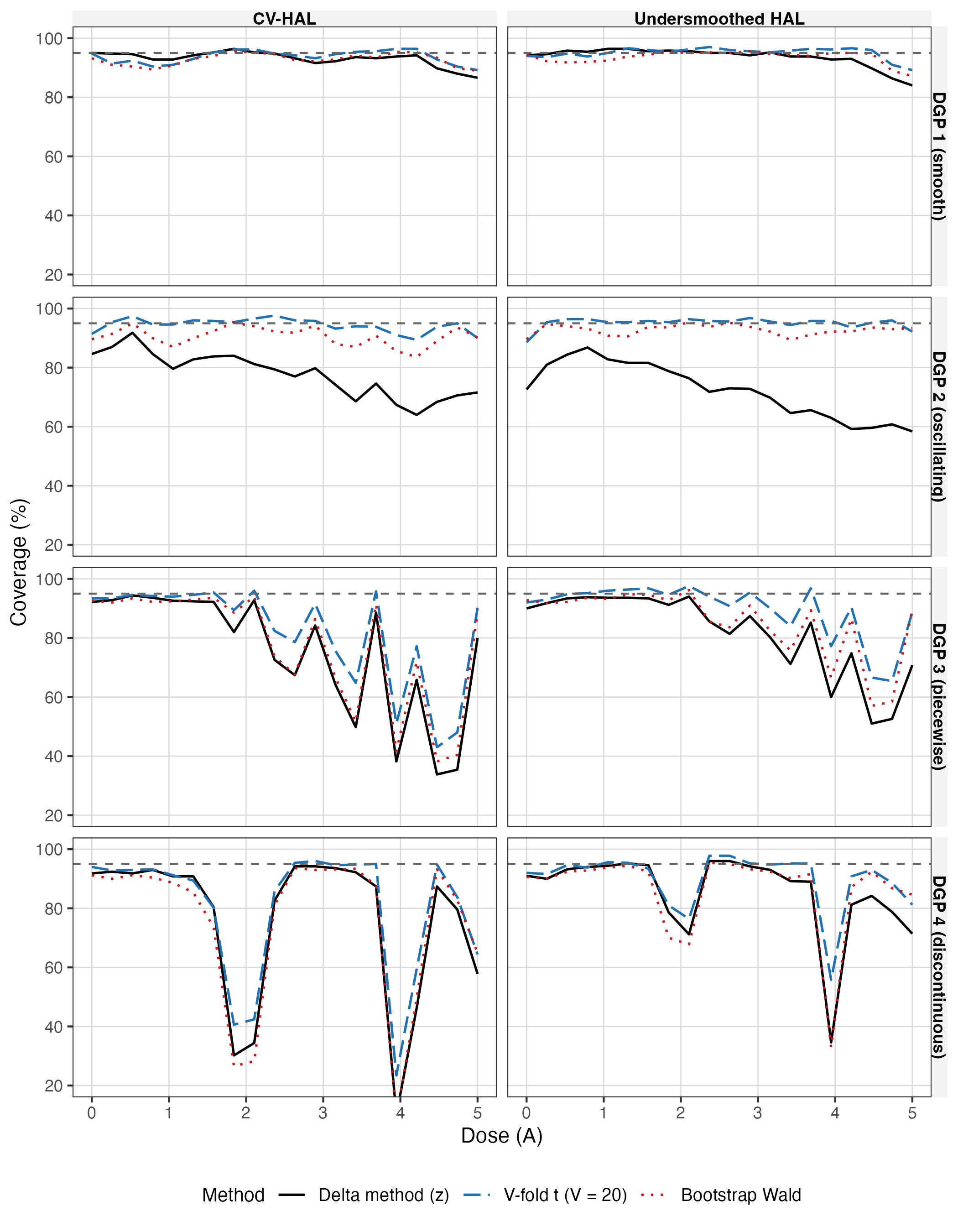}
\caption{Pointwise coverage of 95\% confidence intervals across the dose grid ($20$ equally spaced points on $[0,5]$, i.e.\ $A \in \{0, 0.263, 0.526, \dots, 5\}$) for HAL dose-response curve estimation ($n = 500$, 500 replications). Each point on the horizontal axis represents a dose level at which the dose-response curve $\E[Y \mid A = a, W]$ is evaluated and coverage is computed separately. Rows: DGP~1 (smooth), DGP~2 (oscillating), DGP~3 (piecewise), DGP~4 (discontinuous). Columns: CV-HAL and undersmoothed HAL. The $V$-fold jackknife with $t_{19}$ critical values ($V = 20$) provides the highest coverage across all DGPs. The delta method fails most notably for DGP~2 (oscillating), where its failure rate also exceeds 5\% for undersmoothed HAL.}
\label{fig:hal_pointwise}
\end{figure}

To diagnose the source of coverage loss in DGPs~3 and~4, \cref{fig:hal_bias_diagnosis} displays the absolute bias $|\bar{\hat\psi}(a) - \psi_0(a)|$ (averaged over replications) alongside the mean estimated standard error at each dose point, using the $V$-fold jackknife ($V = 20$). When the absolute bias exceeds the standard error, the asymptotic linearity assumption underlying the confidence interval is stressed, and coverage loss is expected. For DGP~1 (smooth), the bias is uniformly small relative to the standard error across the entire dose grid, consistent with the near-nominal coverage in \cref{fig:hal_pointwise}. For DGP~2 (oscillating), the bias is similarly controlled, which explains why the $V$-fold jackknife achieves excellent coverage despite the complex curve shape---the first-order HAL can approximate the oscillating function adequately at this sample size. For DGP~3 (piecewise), the bias is small for $A \le 2$ (where the true curve is smooth and nearly linear) but rises sharply for $A > 2$ (where the oscillating pattern begins), producing a bias-to-SE ratio that exceeds~1 at several dose points, particularly for CV-HAL. These are precisely the dose points where \cref{fig:hal_pointwise} shows coverage dips. For DGP~4 (discontinuous), the bias spikes at $A \approx 2$ and $A \approx 4$---the locations of the jump discontinuities---where the piecewise-linear HAL basis cannot represent the true function, again matching the coverage dip locations in \cref{fig:hal_pointwise}. These patterns confirm that the coverage loss in DGPs~3 and~4 is attributable to estimator bias at non-smooth points, not to a failure of the $V$-fold jackknife variance estimator.

\begin{figure}[htbp]
\centering
\includegraphics[width=\textwidth,height=0.78\textheight,keepaspectratio]{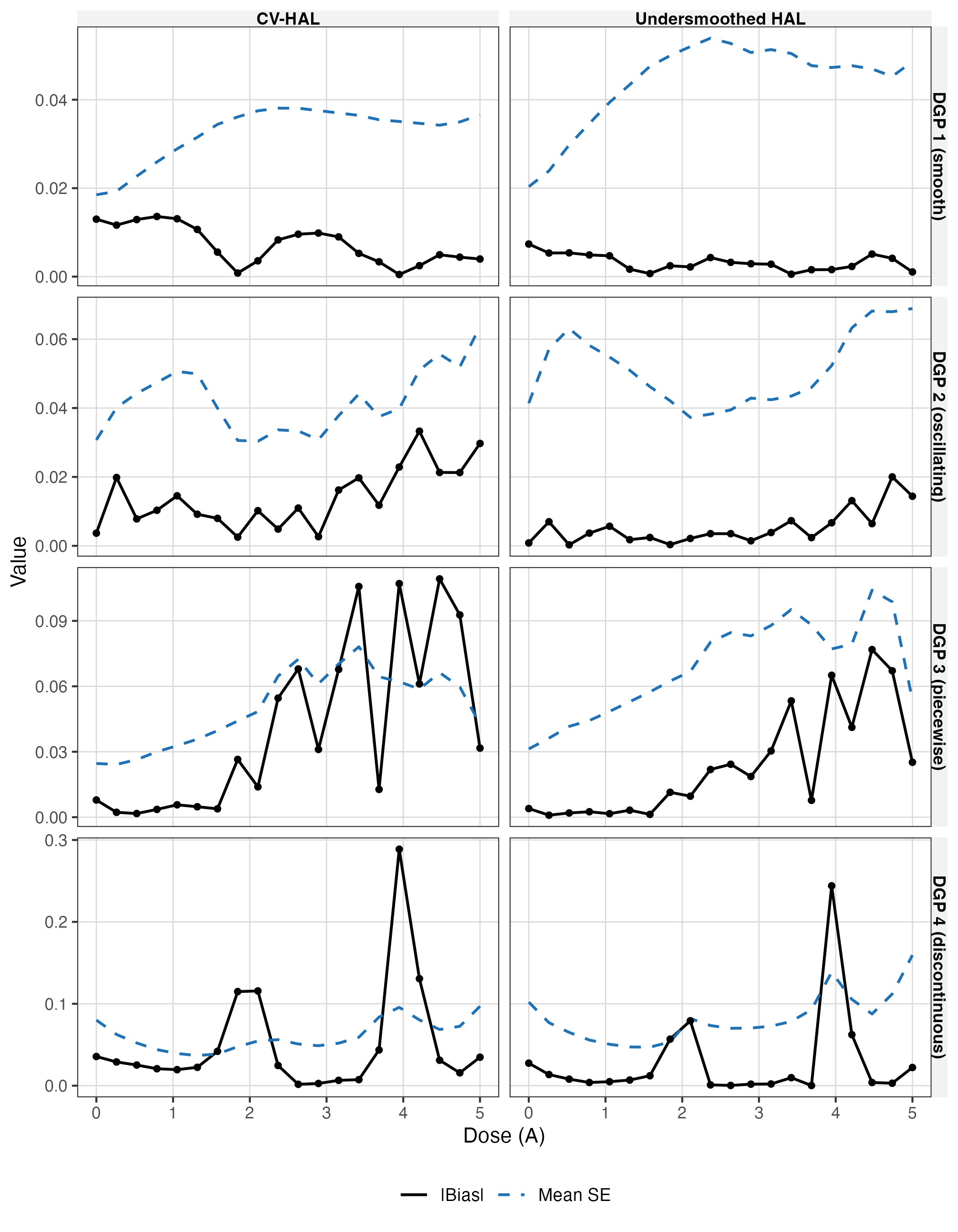}
\caption{Bias diagnosis for HAL dose-response curve estimation ($n = 500$, 500 replications). Solid black line: absolute bias $|\bar{\hat\psi}(a) - \psi_0(a)|$ at each of the 20 evaluation points, where $\bar{\hat\psi}(a)$ is the simulation average of the point estimate. Dashed blue line: mean estimated standard error from the $V$-fold jackknife ($V = 20$). When the bias exceeds the standard error, the asymptotic linearity condition is violated and coverage loss is expected. The spatial pattern of bias peaks corresponds to the coverage dips in \cref{fig:hal_pointwise}: for DGP~3, bias rises for $A > 2$ where oscillations begin; for DGP~4, bias spikes at $A = 2$ and $A = 4$ where the true curve is discontinuous.}
\label{fig:hal_bias_diagnosis}
\end{figure}

\textbf{Simultaneous inference.}
\cref{tab:hal_simult} reports simultaneous coverage for the delta method simultaneous band (multivariate normal with estimated IC covariance), the $V$-fold simultaneous band with $V \in \{5, 10, 20\}$ (componentwise-Studentized critical value from \cref{thm:multi_t}), the studentized bootstrap band \citep{DavisonHinkley1997}, the bias-corrected bootstrap band, and the bias-corrected $V$-fold band from \cref{thm:bias_corrected}.

\begin{table}[htbp]
\centering
\caption{Simultaneous coverage (\%) and mean band width for HAL dose-response
curve estimation ($n=500$, 500 replications, 20 evaluation points).
``BC'' denotes the bias-corrected band from \cref{sec:bias_corrected_bands}.}
\label{tab:hal_simult}
\small
\begin{tabular}{ll cc cc}
\toprule
 & & \multicolumn{2}{c}{CV-HAL} & \multicolumn{2}{c}{Undersmoothed HAL} \\
\cmidrule(lr){3-4} \cmidrule(lr){5-6}
DGP & Method & Cov & Width & Cov & Width \\
\midrule
1 (smooth) & Delta simult. & 84.8 & 0.179 & 82.4 & 0.249 \\
 & $V$-fold simult.\ ($V{=}5$) & 91.4 & 0.338 & 91.0 & 0.479 \\
 & $V$-fold simult.\ ($V{=}10$) & 91.2 & 0.230 & 88.8 & 0.325 \\
 & $V$-fold simult.\ ($V{=}20$) & 89.8 & 0.199 & 88.8 & 0.288 \\
 & Bootstrap simult. & 90.0 & 0.186 & 88.4 & 0.256 \\
\addlinespace[2pt]
 & BC simult.\ ($V{=}5$) & 94.8 & 0.355 & 94.6 & 0.513 \\
 & BC simult.\ ($V{=}10$) & 96.0 & 0.255 & 92.8 & 0.372 \\
 & BC simult.\ ($V{=}20$) & 94.2 & 0.239 & 95.0 & 0.357 \\
 & BC Bootstrap simult. & 92.4 & 0.194 & 90.0 & 0.268 \\
\addlinespace
2 (oscillating) & Delta simult. & 27.6 & 0.156 & 19.2 & 0.181 \\
 & $V$-fold simult.\ ($V{=}5$) & 90.0 & 0.447 & 92.2 & 0.541 \\
 & $V$-fold simult.\ ($V{=}10$) & 86.8 & 0.311 & 88.0 & 0.382 \\
 & $V$-fold simult.\ ($V{=}20$) & 87.8 & 0.281 & 89.0 & 0.351 \\
 & Bootstrap simult. & 85.6 & 0.230 & 89.4 & 0.281 \\
\addlinespace[2pt]
 & BC simult.\ ($V{=}5$) & 94.2 & 0.497 & 95.8 & 0.603 \\
 & BC simult.\ ($V{=}10$) & 94.8 & 0.380 & 91.8 & 0.470 \\
 & BC simult.\ ($V{=}20$) & 96.6 & 0.393 & 95.4 & 0.485 \\
 & BC Bootstrap simult. & 87.6 & 0.247 & 91.6 & 0.299 \\
\addlinespace
3 (piecewise) & Delta simult. & 17.4 & 0.235 & 26.2 & 0.313 \\
 & $V$-fold simult.\ ($V{=}5$) & 64.2 & 0.519 & 79.2 & 0.712 \\
 & $V$-fold simult.\ ($V{=}10$) & 47.2 & 0.366 & 65.4 & 0.500 \\
 & $V$-fold simult.\ ($V{=}20$) & 38.4 & 0.331 & 59.8 & 0.462 \\
 & Bootstrap simult. & 30.0 & 0.261 & 56.4 & 0.349 \\
\addlinespace[2pt]
 & BC simult.\ ($V{=}5$) & 70.2 & 0.553 & 80.0 & 0.738 \\
 & BC simult.\ ($V{=}10$) & 53.4 & 0.430 & 72.0 & 0.583 \\
 & BC simult.\ ($V{=}20$) & 51.2 & 0.446 & 72.2 & 0.593 \\
 & BC Bootstrap simult. & 33.4 & 0.278 & 60.8 & 0.372 \\
\addlinespace
4 (discontinuous) & Delta simult. & 10.8 & 0.324 & 31.6 & 0.423 \\
 & $V$-fold simult.\ ($V{=}5$) & 52.2 & 0.656 & 77.6 & 0.883 \\
 & $V$-fold simult.\ ($V{=}10$) & 30.6 & 0.455 & 65.0 & 0.616 \\
 & $V$-fold simult.\ ($V{=}20$) & 26.4 & 0.406 & 59.6 & 0.556 \\
 & Bootstrap simult. & 16.0 & 0.342 & 56.8 & 0.474 \\
\addlinespace[2pt]
 & BC simult.\ ($V{=}5$) & 64.4 & 0.714 & 84.8 & 0.977 \\
 & BC simult.\ ($V{=}10$) & 49.6 & 0.552 & 78.6 & 0.741 \\
 & BC simult.\ ($V{=}20$) & 45.4 & 0.549 & 76.8 & 0.736 \\
 & BC Bootstrap simult. & 21.0 & 0.363 & 63.4 & 0.503 \\
\bottomrule
\end{tabular}
\end{table}

For DGP~1 (smooth), the standard $V$-fold simultaneous band achieves 89.8--91.4\% coverage (CV-HAL) and 88.8--91.0\% (undersmoothed), the highest among the uncorrected methods. For DGP~2 (oscillating), the advantage is striking: the delta method achieves only 27.6\% (CV-HAL) and 19.2\% (undersmoothed), while the $V$-fold band reaches 87.8--90.0\% and 88.0--92.2\%. For DGPs~3 and~4, all uncorrected methods exhibit substantial undercoverage due to estimator bias at non-smooth points, but the $V$-fold simultaneous band consistently provides the best coverage. With undersmoothed HAL on DGP~4, the $V$-fold band ($V = 5$) reaches 77.6\% compared to 31.6\% for the delta method and 56.8\% for the bootstrap.

The bias-corrected bands from \cref{thm:bias_corrected} provide a substantial improvement. For DGPs~1 and~2, where the estimator bias is moderate, the bias-corrected $V$-fold band achieves near-nominal coverage: 94.2--96.0\% (CV-HAL) and 92.8--95.0\% (undersmoothed) for DGP~1, and 94.2--96.6\% and 91.8--95.8\% for DGP~2. These results validate \cref{thm:bias_corrected} in finite samples. For DGPs~3 and~4, the bias correction substantially improves coverage relative to the standard band---for example, on DGP~4 with undersmoothed HAL the bias-corrected band ($V = 5$) reaches 84.8\% versus 77.6\% without correction---but coverage remains below nominal because the estimator bias is too large for the jackknife bias estimate to fully compensate. The bias-corrected bands are wider than the standard bands (e.g., for DGP~2 with $V = 20$: 0.393 versus 0.281 for CV-HAL), reflecting the price of the $|\hat b_j|$ correction term.

The bias-corrected bootstrap band, which widens the standard bootstrap band by $|\hat b_{\mathrm{boot},j}|$ where $\hat b_{\mathrm{boot},j} = \bar\psi_{\mathrm{boot}}(a_j) - \hat\psi(a_j)$, similarly improves over the standard bootstrap by 2--7 percentage points across all DGPs. For DGPs~1 and~2, it reaches 87.6--92.4\% (CV-HAL) and 90.0--91.6\% (undersmoothed). For DGPs~3 and~4, the improvement is more modest (e.g., 21.0\% versus 16.0\% for CV-HAL on DGP~4). However, the bias-corrected bootstrap consistently achieves lower coverage than the bias-corrected $V$-fold band, indicating that the jackknife bias estimate is more effective than the bootstrap bias estimate for bias correction in this setting.

These results illustrate three complementary points. First, the $V$-fold jackknife provides valid variance estimation and the most reliable inference among the methods compared, regardless of the HAL tuning strategy. The delta method can fail both numerically (producing NAs at rates up to 6.2\%) and statistically (coverage as low as 19.2\% for simultaneous bands), whereas the $V$-fold jackknife never fails and consistently achieves the highest coverage. Second, the bias-corrected simultaneous band from \cref{sec:bias_corrected_bands} recovers near-nominal coverage when the estimator bias is moderate (DGPs~1--2), confirming the theoretical guarantee of \cref{thm:bias_corrected}; the bias-corrected bootstrap similarly improves over the standard bootstrap, though the jackknife bias estimate proves more effective for bias correction than the bootstrap bias estimate. Third, achieving nominal simultaneous coverage ultimately requires the estimator's bias to be negligible relative to its standard error---a condition that depends on the estimator and data generating process rather than the variance estimation method. The $V$-fold jackknife correctly quantifies sampling variability; the remaining coverage gap in non-smooth settings is attributable to estimator bias, which the bias correction partially but not fully addresses when the bias is large. The spatial pattern of the coverage dips in \cref{fig:hal_simult_pointwise} mirrors the bias profile in \cref{fig:hal_bias_diagnosis}: the simultaneous coverage is driven down by the same high-bias dose points identified in the pointwise analysis.

\textbf{Effective rank of the correlation matrix.}
For the simultaneous bands with $V = 5$ or $V = 10$, we have $m = 20 > V - 1$, so $\hat R_{\mathrm{Jack}}$ is rank-deficient and the eigendecomposition of \cref{sec:eigen_validity} is used. We estimate the effective rank $r(R_0)$ from the 500 simulation replications, as for the KM setting. For CV-HAL, the effective rank ranges from 2.8 (DGP~1, smooth) to 3.8 (DGP~3, piecewise); for undersmoothed HAL, it ranges from 4.1 (DGP~2, oscillating) to 5.3 (DGP~3 and~4). The higher effective rank for undersmoothed HAL reflects the reduced smoothing: adjacent curve points are less correlated, distributing variance across more eigendirections. \textcolor{black}{For CV-HAL, the effective rank remains strictly below $V-1=4$ when $V=5$. For undersmoothed HAL, it slightly exceeds the available degrees of freedom for $V=5$, but remains drastically lower than the ambient dimension $m=20$. This explains why $V=5$ can still perform reasonably well, while also highlighting why larger $V$ improves simultaneous coverage.} \cref{fig:eigenvalue_decay} displays the eigenvalue spectrum and cumulative variance explained for all settings, showing rapid eigenvalue decay in every case.

\begin{figure}[htbp]
\centering
\includegraphics[width=\textwidth]{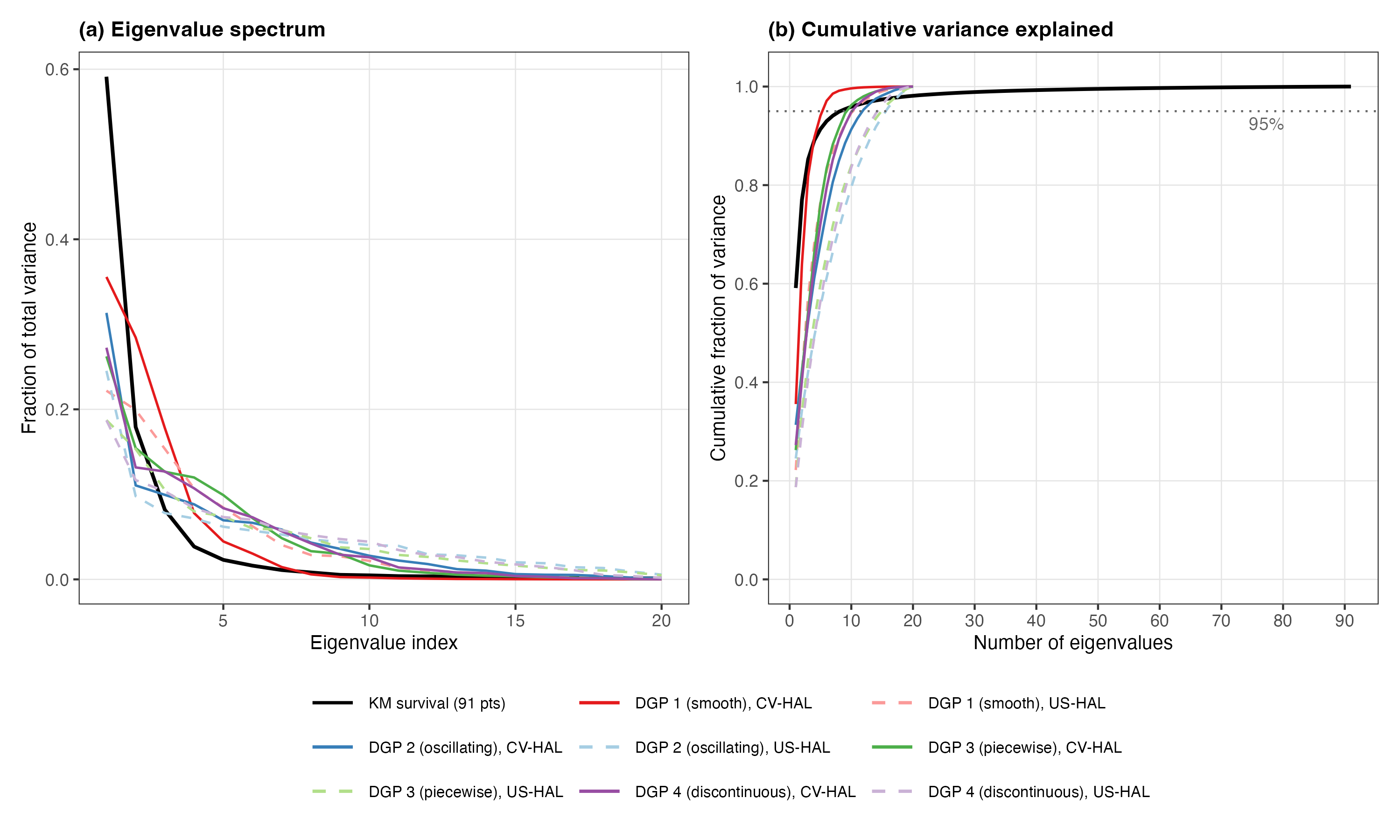}
\caption{Eigenvalue spectrum of the estimated correlation matrix $R_0$ for the Kaplan--Meier survival curve ($m = 91$ time points) and the HAL dose-response curve ($m = 20$ evaluation points, four DGPs, CV-HAL and undersmoothed variants). The correlation matrix is estimated from the $500 \times m$ matrix of point estimates across simulation replications. (a)~Normalized eigenvalue spectrum (fraction of trace $m$), truncated at index~20 for visibility. The KM curve exhibits the steepest decay, with a single dominant eigenvalue accounting for 59\% of the total variance. (b)~Cumulative fraction of variance explained. For the KM curve, two eigenvalues suffice for 77\% and nine for 95\%; for the HAL settings, 6--16 eigenvalues are needed for 95\%. In all cases, the effective rank $r(R_0) = m / \lambda_{\max}$ is much smaller than $m$, confirming that the eigendecomposition of the rank-deficient $\hat R_{\mathrm{Jack}}$ accurately captures the dominant correlation structure even when $V \ll m$.}
\label{fig:eigenvalue_decay}
\end{figure}

\section{Discussion}

We have studied the grouped ($V$-fold) jackknife as a practical inference tool for regular asymptotically linear estimators of pathwise differentiable parameters. Building on the classical theory of \citet{tukey1958bias}, \citet{Brillinger1964}, and \citet{Shao1989}, we extended the fixed-$V$ $t_{V-1}$ Studentization result to the RAL class (\cref{thm:fixedV}) and established diverging-$V$ consistency at rate $\sqrt{V}$ under explicit remainder conditions (\cref{thm:divergingV}), \textcolor{black}{subject to the fold-level remainder-difference condition, with Appendix~\ref{app:V_rate} giving sufficient upper-growth bounds on $V$.} The principal new theoretical contributions are twofold: the extension to simultaneous confidence bands for vector-valued parameters (\cref{thm:multi_t}), where the correct componentwise-Studentized limit distribution conditional on the true correlation matrix $R_0$---distinct from the standard multivariate $t$-distribution---is used to compute critical values via Monte Carlo simulation; and the generalization to estimators with non-standard convergence rates (\cref{thm:fixedV_slower,thm:divergingV_slower}), where the scale invariance of Studentization ensures that the $t_{V-1}$ result holds even when the influence function variance diverges. Bias-corrected (\cref{thm:bias_corrected}) variants were introduced to address finite-sample bias, and a post-hoc diagnostic for flagging unreliable grid points is provided in Appendix~\ref{sec:app_diagnostic}.

The simulation studies support both the formal conclusions and the practical heuristics developed in the paper. For scalar inference on the average treatment effect, the $V$-fold jackknife with $t_{V-1}$ critical values achieves strong coverage even at $n=200$, whereas the plug-in influence-function-based variance estimator undercovers substantially. For the Kaplan--Meier survival function, the grouped jackknife yields pointwise and simultaneous bands that are practically comparable to established methods. For HAL dose-response curve estimation, where the delta method can fail numerically and can be severely anti-conservative, the $V$-fold jackknife never fails and consistently delivers the strongest empirical coverage among the methods compared. When the estimator bias is moderate, the bias-corrected simultaneous band restores near-nominal coverage.

\subsection{Practical guidance on choosing $V$}

The choice of $V$ involves a trade-off between computational cost and statistical precision. Fewer folds require fewer refits but produce wider confidence intervals, because the $t_{V-1}$ quantiles exceed their normal counterparts; more folds narrow the intervals as the $t$ quantiles approach the normal quantiles and the variance estimator concentrates. For scalar inference, the simulations suggest that $V \in \{10, 20\}$ offers a practical balance: for the ATE at $n = 1000$, increasing $V$ from $5$ to $20$ reduces the mean CI width by roughly 15\% while maintaining coverage above the nominal level. For simultaneous confidence bands, the more relevant issue is the quality of the estimated correlation matrix $\hat R_{Jack}$. As discussed in \cref{sec:eigen_validity}, a practical approach is to increase $V$ along a sequence such as $\{5, 10, 20, 40, \dots\}$ until it satisfies $V \ge 2 \hat r_V$; absent such tuning, the simulations suggest that $V = 20$ is a reasonable default in the settings studied. When $V < m$, the estimated correlation matrix is singular and the Monte Carlo draws are confined to a rank-$(V-1)$ subspace of $\R^m$, but this confinement is exact for the plug-in distribution and is often benign when the dominant eigendirections are already captured. As a simple fallback rule, $V = 10$ to $20$ remains reasonable for scalar and moderate-dimensional problems, with larger $V$ preferred when $m$ is large or when the computational budget allows it. The total cost of the procedure is $V + 1$ fits of the estimator (one full-sample fit and $V$ leave-fold-out fits), compared to thousands for the bootstrap.

\subsection{Scope and limitations}

The theoretical guarantees rest on the estimator admitting an asymptotically linear expansion with a pathwise differentiable influence function. This includes a broad class of semiparametric estimators---TMLE, one-step estimators, augmented IPW estimators---and the generalized asymptotic linearity framework of Section~\ref{sec:slower_rates} extends the scope beyond $\sqrt{n}$-rate estimators. Parameters fall outside the present theory when the required asymptotic linearity or fold-stable remainder conditions fail, so genuinely non-smooth cases require separate analysis. Efron's sample-median example shows that the ordinary delete-$1$ jackknife can fail in such settings, but that example should be read as a cautionary boundary case rather than as a blanket exclusion of all quantile-type parameters. The key insight is that the Studentized jackknife statistic is scale-invariant: $\sigma_n$ cancels between numerator and denominator, so the $t_{V-1}$ inference remains valid regardless of whether the influence function variance is bounded or diverges with the number of basis functions. This allows practitioners to obtain valid confidence intervals without knowing $J_n$---in contrast to the delta method, which requires computing these quantities explicitly. The simultaneous confidence band theory assumes fixed $m$; extending to a growing number of evaluation points $m = m_n \to \infty$ would require uniform-in-$m$ control of the remainder terms and a different limiting process (e.g., a Gaussian process approximation), which is beyond the scope of this paper.

For simultaneous confidence bands, the fixed-$m$ theory identifies the correct componentwise-Studentized limit distribution for fixed $V$ conditional on the true correlation matrix $R_0$, while consistency of the correlation estimate requires $V \to \infty$. \textcolor{black}{Formal plug-in simultaneous coverage therefore also requires $V \to \infty$ while the fold-level remainder-difference condition remains valid; under the sufficient conditions in Appendix~\ref{app:V_rate}, slow choices such as $V=O(\log n)$ are covered.} The strong performance observed for small fixed $V$ should thus be interpreted as empirical support, informed by the low-effective-rank argument, rather than as a formal fixed-$V$ coverage theorem.

The bias-corrected bands from \cref{thm:bias_corrected} improve coverage when the estimator's bias is moderate relative to its standard error, as demonstrated for DGPs~1 and~2 in the HAL simulations. However, when the bias is large---as in the piecewise and discontinuous dose-response settings (DGPs~3 and~4)---the bias correction provides partial but insufficient improvement. The bias diagnosis in \cref{fig:hal_bias_diagnosis} makes the mechanism precise: coverage loss occurs at dose points where the absolute bias exceeds the standard error, and these points coincide with features of the true curve (oscillations for $A > 2$ in DGP~3, discontinuities at $A = 2$ and $A = 4$ in DGP~4) that the first-order HAL's piecewise-linear basis cannot well approximate. This is a limitation of the estimator's approximation class, not of the variance estimation method. Achieving nominal coverage in such settings requires reducing the estimator's bias---through higher-order HAL basis functions, adaptive smoothness selection as in \citet{Shi2024}, or other bias-reduction techniques; the $V$-fold jackknife correctly quantifies sampling variability but cannot compensate for systematic bias in the point estimator.

Finally, the results are conditional on a single random partition of the data into $V$ folds. While the theory establishes asymptotic validity for any partition satisfying the balanced-fold condition, finite-sample performance may exhibit partition-to-partition variability, particularly when $V$ is small. Averaging the variance estimator over multiple independent partitions could reduce this variability at additional computational cost.

\subsection{Connections and extensions}

The $V$-fold jackknife shares a natural connection with cross-fitting procedures used in modern semiparametric inference. In debiased machine learning \citep{Chernozhukov2018} and cross-validated TMLE \citep{zheng2011cross}, the nuisance functions are already trained on each leave-fold-out subsample $P_{n,-v}$, and the target parameter is evaluated on the held-out fold~$I_v$. The $V$-fold jackknife requires the complementary quantity $\hat\Psi(P_{n,-v})$---the estimator evaluated on the $V-1$ training folds rather than on the single held-out fold. Since the expensive nuisance estimation step has already been performed, computing $\hat\Psi(P_{n,-v})$ requires only an additional targeting or evaluation pass on the training folds using the existing nuisance estimates, avoiding the cost of a full refit. The $V$-fold jackknife thus integrates naturally into cross-fitting workflows at modest additional computational cost.

The $V$-fold jackknife should also be distinguished from the jackknife+ and CV+ methods of \citet{BarberCandesRamdasTibshirani2021}, which use fold-based leave-one-out residuals to construct distribution-free prediction intervals for a new observation. While both methods partition the data into folds, the inferential targets are fundamentally different: the jackknife+ provides marginal coverage for a future observation under exchangeability, whereas the $V$-fold jackknife developed here provides asymptotic coverage for fixed population parameters.

The extension to non-$\sqrt{n}$ rates developed in Section~\ref{sec:slower_rates} opens several directions. The generalized asymptotic linearity framework applies to any sieve MLE whose influence-curve-like representation has growing variance, including not only HAL but also kernel and series estimators. The simultaneous confidence bands, combined with the generalized theory, provide a principled approach to uniform inference for HAL functional estimators.

Several further extensions merit investigation. First, adapting the framework to dependent data---for example, time series or clustered observations---would require replacing the i.i.d.\ fold structure with block or cluster-based partitions, analogous to block bootstrap methods. Second, allowing $m$ to grow with $n$ in the simultaneous band construction would connect the $V$-fold jackknife to nonparametric function estimation and strong approximation theory. Third, an adaptive procedure that selects $V$ based on a data-driven criterion---balancing the bias-variance properties of the variance estimator against computational cost---could further improve practical performance. We leave these directions for future work.

\subsection{Concluding remarks}

The grouped jackknife has a long history in statistics, but its properties for the semiparametric and machine learning--based estimators used in modern practice have not been systematically examined. This paper fills that gap, demonstrating that the $V$-fold jackknife is a simple, reliable tool for variance estimation and inference in settings where analytic standard errors are unavailable, difficult to derive, or numerically unreliable. Its primary appeal is computational: only $V$ refits are required, compared to the thousands demanded by the bootstrap. The fixed-$V$ $t_{V-1}$ inference is especially attractive for expensive estimators, where even $V = 5$ or $10$ folds yield valid confidence intervals. For simultaneous confidence bands, the key formal message is that the fixed-$V$ theory identifies the correct componentwise-Studentized limit conditional on $R_0$, whereas \textcolor{black}{the plug-in procedure becomes formally valid once $V \to \infty$ while the fold-level remainder-difference condition remains valid; under the sufficient bounds in Appendix~\ref{app:V_rate}, slow choices such as $V=O(\log n)$ are covered.} In practice, low-effective-rank correlation structure helps explain why smaller values of $V$ can still perform well. The further extension to estimators with non-standard convergence rates demonstrates that the same jackknife procedure yields valid inference even when the influence function variance diverges, with the scale-invariant Studentization eliminating the need to know or estimate the effective convergence rate. We anticipate that the $V$-fold jackknife will be particularly useful in conjunction with machine learning--based semiparametric estimators, where it offers a computationally feasible path to valid statistical inference.

\section*{Code availability}

All code required to reproduce the simulation studies of
Section~\ref{sec:simulations}, together with the simulation output and the
scripts that generate every figure and table in this paper, is available at
\url{https://github.com/yiberkeley/Vfold_Jackknife_paper}. The repository contains the three
simulation studies---the average treatment effect via TMLE, the Kaplan--Meier
survival curve, and the HAL dose-response curve---each with its own
documentation, and a single script that regenerates all figures and tables from
the stored output in a few minutes.

\bibliographystyle{biometrika}
\bibliography{jack}

\appendix

\section{Proofs}\label{app:proofs}

This appendix collects the proofs of the numbered lemmas, theorems, and corollaries from the main text, grouped by section.

\subsection*{Proofs for Section~\ref{sec:properties} (Properties of V-fold Jackknife)}

\begin{proof}[Proof of Lemma~\ref{lem:Sequiv}]
Let $W_v=P^1_{n,v}\varphi$ and $\bar W = V^{-1} \sum_{v=1}^V W_v$. By substituting the asymptotically linear expansion into the definition of $IC_{Jack}(v)$, we observe an exact algebraic cancellation of the full-sample remainder $R(P_n, P_0)$ when we center the pseudo-values. Specifically, we can write:
    \begin{align*}
        IC_{Jack}(v)-\hat\Psi_{Jack}(P_n) &= (W_v - \bar W) - (V-1) \left\{R(P_{n,-v}, P_0) - \frac{1}{V} \sum_{v'=1}^V R(P_{n,-v'}, P_0) \right\} \\
        &= (W_v - \bar W) + \Delta_v,
    \end{align*}
where $\Delta_v = -(V-1) \left\{R(P_{n,-v}, P_0) - \bar{R}_{n,-v} \right\}$, with $\bar{R}_{n,-v} = \frac{1}{V}\sum_{v'=1}^V R(P_{n,-v'}, P_0)$. Under the stated assumptions, $\max_v|\Delta_v|=o_p(n^{-1/2})$ for fixed $V$. Expanding squares yields
\[
\hat S_{Jack}^2-S_W^2
=\frac{1}{V-1}\sum_{v=1}^V\Big\{\Delta_v^2+2(W_v-\bar W)\Delta_v\Big\}.
\]
Since $W_v-\bar W=O_p(n^{-1/2})$ and $\max_v|\Delta_v|=o_p(n^{-1/2})$,
with $V$ fixed we obtain
\[
\hat S_{Jack}^2-S_W^2
= o_p(n^{-1})+ O_p(n^{-1/2})\,o_p(n^{-1/2})
=o_p(n^{-1}).
\]
{\color{black} By the central limit theorem applied to the independent fold means, $(\sqrt{n/V}\,W_1,\dots,\sqrt{n/V}\,W_V)\CiD (Z_1,\dots,Z_V)$ with independent $Z_v\sim\Normal(0,\sigma^2)$. Therefore, by the continuous mapping theorem,
\[
\frac{n}{V}S_W^2 \CiD \frac{1}{V-1}\sum_{v=1}^V (Z_v-\bar Z)^2
=\sigma^2\frac{\chi^2_{V-1}}{V-1}.
\]}
Moreover $S_W^2\asymp_p V/n$ (with $V$ fixed), so
\[
\frac{\hat S_{Jack}^2-S_W^2}{S_W^2}=o_p(1),
\]
which gives $\hat S_{Jack}^2=S_W^2\{1+o_p(1)\}$.
\end{proof}

\begin{proof}[Proof of Theorem~\ref{thm:fixedV}]
     Using the linear representation of the estimator, the numerator of the standardized score satisfies
\begin{align*}
    \sqrt{V}\,(\hat\Psi_{Jack}(P_n)-\Psi(P_0)) = \sqrt{V}\,\bar W + \sqrt{V}\,\Big(V R(P_n, P_0) - (V-1) \bar{R}_{n,-v}\Big).
\end{align*}
The standard error term $\hat S_{Jack}$ is of order $\sqrt{V/n}$ (that is, $\frac{n}{V}\hat S^2_{Jack} = O_p(1)$). Hence, multiplying the numerator and denominator by $\sqrt{n/V}$ yields a remainder component in the numerator equal to
\[
\sqrt{n}\,\Big(V R(P_{n},P_0) - (V-1)\bar{R}_{n,-v}\Big).
\]
Since $V$ is fixed, the assumptions $R(P_{n},P_0)=o_p(n^{-1/2})$ and $\max_v |R(P_{n,-v}, P_0)| = o_p(n^{-1/2})$ imply this remainder term is $o_p(1)$.
Moreover, using the result of Lemma \ref{lem:Sequiv}, we have $\hat S_{Jack} = S_W\{1+o_p(1)\}$.
 Hence,
\[
\frac{\sqrt{V}\,(\hat\Psi_{Jack}(P_n)-\Psi(P_0))}{\hat S_{Jack}} = \frac{\sqrt{V}\,\bar{W}}{S_{W}}+o_p(1).
\]
Since the folds are disjoint and $n/V\to\infty$, the vector $(\sqrt{n/V}\,W_1,\dots,\sqrt{n/V}\,W_V)$ converges in distribution to a multivariate normal with independent identically distributed components centered at 0. The classical result for Studentized normal samples then yields that the left-hand side converges in distribution to the $t_{V-1}$ distribution as $n \to \infty$ with $V$ fixed. Therefore,
\[
\frac{\sqrt{V}\,(\hat\Psi_{Jack}(P_n)-\Psi(P_0))}{\hat S_{Jack}} \CiD \ t_{V-1}.
\]
\end{proof}

\begin{proof}[Proof of Lemma~\ref{lem:centering}]
By definition, $\hat\Psi_{Jack}(P_n) = V\hat\Psi(P_n) - (V-1)\bar\Psi_{(-v)}$, so
\[
\hat\Psi_{Jack}(P_n) - \hat\Psi(P_n) = (V-1)\big(\hat\Psi(P_n) - \bar\Psi_{(-v)}\big) = -\hat b.
\]
Substituting the asymptotic linear expansions $\hat\Psi(P_n) - \Psi(P_0) = P_n\varphi + R(P_n,P_0)$ and $\hat\Psi(P_{n,-v}) - \Psi(P_0) = P_{n,-v}\varphi + R(P_{n,-v},P_0)$, the influence function contributions cancel exactly: since each observation $i$ belongs to exactly one fold, $V^{-1}\sum_{v=1}^V P_{n,-v}\varphi = P_n\varphi$. Hence the difference is driven entirely by the remainders:
\[
-\hat b = (V-1)\left\{R(P_n,P_0) - \frac{1}{V}\sum_{v=1}^V R(P_{n,-v},P_0)\right\}.
\]
To connect with the remainder structure from the proof of \cref{lem:Sequiv}, define $\delta_v = (V-1)\big(R(P_n,P_0) - R(P_{n,-v},P_0)\big)$. Then $-\hat b = V^{-1}\sum_{v=1}^V \delta_v = \bar\delta$, while the centered remainders from \cref{lem:Sequiv} satisfy $\Delta_v = \delta_v - \bar\delta$. Under the assumptions of \cref{thm:fixedV}, $R(P_n,P_0) = o_p(n^{-1/2})$ and $\max_v |R(P_{n,-v},P_0)| = o_p(n^{-1/2})$, so each $|\delta_v| = o_p(n^{-1/2})$. Since $V$ is fixed, $|\hat b| = |\bar\delta| \le \max_v |\delta_v| = o_p(n^{-1/2}) = o_p(\widehat{SE})$.
\end{proof}

\begin{proof}[Proof of Corollary~\ref{cor:CI}]
Write
\[
\frac{\sqrt{V}\,\big(\hat\Psi(P_n)-\Psi(P_0)\big)}{\hat S_{Jack}}
=
\frac{\sqrt{V}\,(\hat\Psi_{Jack}(P_n)-\Psi(P_0))}{\hat S_{Jack}}
+
\frac{\sqrt{V}\,\big(\hat\Psi(P_n)-\hat\Psi_{Jack}(P_n)\big)}{\hat S_{Jack}}.
\]
The first term converges in distribution to $t_{V-1}$ by \cref{thm:fixedV}. The second term equals $\sqrt{V}\,\hat b / \hat S_{Jack} = \hat b / \widehat{SE} = o_p(1)$ by \cref{lem:centering}. The conclusion follows from Slutsky's theorem.
\end{proof}

\begin{proof}[Proof of Theorem~\ref{thm:divergingV}]
    We first prove the consistency of our estimator $\hat \sigma^2$. Let $W_v=P^1_{n,v}\varphi$ and $\bar W = V^{-1} \sum_{v=1}^V W_v$. Following the algebraic exactness from the fixed-$V$ case, we can write:
    \begin{align*}
        IC_{Jack}(v)-\hat\Psi_{Jack}(P_n) = (W_v - \bar W) + \Delta_v,
    \end{align*}
    where $\Delta_v = -(V-1) \left\{ R(P_{n,-v}, P_0) - \bar{R}_{n,-v} \right\}$, with $\bar{R}_{n,-v} = \frac{1}{V}\sum_{v'=1}^V R(P_{n,-v'}, P_0)$. Notice that we can rewrite this by subtracting and adding the full data remainder $R(P_n, P_0):$
    \[
    \Delta_v = (V-1)\left(d_{n,v}-\bar d_n\right),
    \qquad
    \bar d_n=\frac{1}{V}\sum_{v=1}^V d_{n,v}.
    \]
    Therefore, the sum of squared remainders is bounded by the uncentered sum of squares:
    \[
    \sum_{v=1}^V \Delta_v^2
    \le (V-1)^2 \sum_{v=1}^V d_{n,v}^2
    =
    (V-1)^2 V\|d_n\|_{V,2}^2.
    \]
    Expanding the square in the variance estimator, we have
    \[
    \hat S_{Jack}^2 = \hat S_W^2 + \frac{1}{V-1}\sum_{v=1}^V \Delta_v^2 + \frac{2}{V-1}\sum_{v=1}^V (W_v - \bar W)\Delta_v.
    \]
    Multiplying by $n/V$, we analyze the remainder terms. Using $\|d_n\|_{V,2}=o_p(n^{-1/2}V^{-1})$, we obtain

    \begin{align}
        \frac{n}{V(V-1)} \sum_{v=1}^V \Delta_v^2
        &\le
        n(V-1)\|d_n\|_{V,2}^2
        =
        o_p(V^{-1}). \label{eq:rem1}
    \end{align}

    For the cross-term, we apply the Cauchy--Schwarz inequality:

    \begin{align}
        \left| \frac{n}{V(V-1)} \sum_{v=1}^V (W_v - \bar W ) \Delta_v \right| &\le \left( \frac{n}{V} \hat S_W^2 \right)^{1/2} \left( \frac{n}{V(V-1)} \sum_{v=1}^V \Delta_v^2 \right)^{1/2} \nonumber \\
        &= O_p(1) \cdot o_p(V^{-1/2}) = o_p(V^{-1/2}). \label{eq:rem2}
    \end{align}

    This establishes that $\hat\sigma^2 = \frac{n}{V} \hat S_W^2 + o_p(V^{-1/2})$.

    Now we establish an exact decomposition for $\frac{n}{V} \hat S^2_{W}$ to show it consistently estimates $\sigma^2$. We can write the sample variance of the fold averages exactly as:
    \[
    \frac{n}{V}\hat S_W^2 = \frac{n}{V(V-1)}\sum_{v=1}^V W_v^2 - \frac{n}{V-1}\bar W^2.
    \]
    Substituting $W_v = \frac{V}{n}\sum_{i \in I_v} \varphi_i$ and $\bar W = \frac{1}{n}\sum_{i=1}^n \varphi_i$, expanding the squares yields:
    \begin{align*}
        W_v^2 &= \frac{V^2}{n^2} \left( \sum_{i \in I_v} \varphi_i^2 + \sum_{\substack{i,j \in I_v \\ i\neq j}} \varphi_i \varphi_j \right), \\
        \bar W^2 &= \frac{1}{n^2} \left( \sum_{i=1}^n \varphi_i^2 + \sum_{i\neq j} \varphi_i \varphi_j \right).
    \end{align*}
    Grouping the diagonal $\varphi_i^2$ terms from both parts, their coefficient becomes:
    \[
    \frac{n}{V(V-1)} \frac{V^2}{n^2} - \frac{n}{V-1} \frac{1}{n^2} = \frac{V}{n(V-1)} - \frac{1}{n(V-1)} = \frac{V-1}{n(V-1)} = \frac{1}{n}.
    \]
    Thus, the exact decomposition is:
    \begin{align}
        \frac{n}{V} \hat S_W^2 &= \frac{1}{n}\sum_{i=1}^n \varphi_i^2 + \frac{V}{n(V-1)}\sum_{v=1}^V \sum_{\substack{i,j \in I_v \\ i\neq j}} \varphi_i \varphi_j - \frac{1}{n(V-1)} \sum_{i\neq j} \varphi_i \varphi_j \nonumber \\
        &= \frac{1}{n}\sum_{i=1}^n \varphi_i^2 + T_n + \frac{1}{V-1} T_n - \frac{1}{V-1} U_n, \label{eq:varest}
    \end{align}
    where $T_n = \frac{1}{n}\sum_{v=1}^V \sum_{i \neq j \in I_v} \varphi_i \varphi_j$ and $U_n = \frac{1}{n}\sum_{i\neq j} \varphi_i \varphi_j$.

    We know $\E \left(\frac{1}{n} \sum_{i=1}^n \varphi_i^2 \right)=\sigma^2$ and $U_n = O_p(1)$ as a standard U-statistic. Thus, to show the consistency of our estimator we must bound the variance of $T_n$. The term $T_n$ constitutes sums of mean-zero uncorrelated random variables (for $i \neq j$). We analyze its variance:
    \begin{align*}
        \Var(T_n) &= \frac{1}{n^2} \sum_{v=1}^V \Var\left( \sum_{\substack{i,j \in I_v \\ i\neq j}} \varphi_i\varphi_j \right) = \frac{1}{n^2} \sum_{v=1}^V \sum_{\substack{i\neq j \in I_v \\ k\neq \ell \in I_v}} \E(\varphi_i\varphi_j\varphi_k\varphi_\ell).
    \end{align*}
    The expectation $\E(\varphi_i\varphi_j\varphi_k\varphi_\ell)$ is non-zero only when $(k,\ell)=(i,j) $ or $(k,\ell)=(j,i)$. \textcolor{black}{Using $\E_{P_0}(\varphi_i^2)=\sigma^2$,} we have
    {\color{black}
    \begin{align*}
        \Var(T_n) &= \frac{1}{n^2} \sum_{v=1}^V 2\Big(\frac{n}{V}\Big)\Big(\frac{n}{V}-1\Big)\sigma^4 \sim \frac{1}{n^2} V \cdot 2 \frac{n^2}{V^2} \sigma^4 = \frac{2\sigma^4}{V} = O(V^{-1}).
    \end{align*}
    }
    So $T_n = O_p(V^{-1/2})$. The trailing terms in (\ref{eq:varest}) are therefore $\frac{1}{V-1}T_n = O_p(V^{-3/2})$ and $\frac{1}{V-1}U_n = O_p(V^{-1})$.
    Combining everything, we obtain
    \[
    \hat\sigma^2 =\frac{1}{n} \sum_{i=1}^n \varphi_i^2 + T_n + o_p(V^{-1/2}).
    \]
    Consequently, consistency follows as $V \to \infty$.

    \textcolor{black}{If $\E_{P_0}[\varphi^4]<\infty$, then the central limit theorem applied to $\varphi_i^2$ gives $\frac{1}{n}\sum_{i=1}^n \varphi_i^2 - \sigma^2 = O_p(n^{-1/2})$. Since $V=o(n)$ by assumption, this diagonal empirical fluctuation is $o_p(V^{-1/2})$, while $T_n=O_p(V^{-1/2})$. Therefore,}
    \[
    \sqrt V\Big(\hat\sigma^2-\sigma^2\Big)=O_p(1),
    \]
    and hence the proposed $V$-fold jackknife estimator is root-$V$ consistent.
\end{proof}

\begin{proof}[Proof of Corollary~\ref{cor:divergingV_wald}]
\begingroup\color{black}
The expansion gives $\sqrt n\{\hat\Psi(P_n)-\Psi(P_0)\}=\sqrt n\,P_n\varphi+o_p(1)$, while \cref{thm:divergingV} gives $\hat\sigma^2\CiP\sigma^2$. Slutsky's theorem yields the Studentized normal limit, and the coverage statement follows from convergence of the critical values.
\endgroup
\end{proof}

\subsection*{Proofs for Section~\ref{sec:simult} (Simultaneous V-fold Jackknife Confidence Bands)}

\begin{proof}[Proof of Theorem~\ref{thm:multi_consistency}]
Let $W_v^{(j)} = P^1_{n,v} \varphi_j$ and $\bar W^{(j)} = V^{-1} \sum_{v=1}^V W_v^{(j)}$. Using the definition $IC_{Jack}^{(j)}(v) = V\hat\Psi_j(P_n) - (V-1)\hat\Psi_j(P_{n,-v})$ and substituting the asymptotic linear expansion, the $V$-fold jackknife pseudo-value centered at its mean satisfies an exact algebraic identity:
\[
IC_{Jack}^{(j)}(v) - \hat\Psi_{Jack}^{(j)}(P_n) = (W_v^{(j)} - \bar W^{(j)}) + \Delta_v^{(j)},
\]
where $\Delta_v^{(j)} = -(V-1) \big\{ R_j(P_{n,-v}, P_0) - \frac{1}{V} \sum_{v'=1}^V R_j(P_{n,-v'}, P_0) \big\}$. Because $V$ is fixed and $\max_v |R_j(P_{n,-v}, P_0)| = o_p(n^{-1/2})$, we have $\max_{1 \le v \le V} |\Delta_v^{(j)}| = o_p(n^{-1/2})$.

Expanding the cross-product $(IC_{Jack}^{(j)}(v) - \hat\Psi_{Jack}^{(j)}(P_n))(IC_{Jack}^{(k)}(v) - \hat\Psi_{Jack}^{(k)}(P_n))$ yields:
\begin{align*}
    (W_v^{(j)} - \bar W^{(j)})(W_v^{(k)} - \bar W^{(k)}) + (W_v^{(j)} - \bar W^{(j)}) \Delta_v^{(k)} + \Delta_v^{(j)}(W_v^{(k)} - \bar W^{(k)}) + \Delta_v^{(j)} \Delta_v^{(k)}.
\end{align*}
Summing over $v$ and dividing by $V-1$, we get
\[
\hat\Sigma_{Jack}(j,k) = \hat S_W^2(j,k) + A_{jk} + A_{kj} + B_{jk},
\]
where $A_{jk} = \frac{1}{V-1} \sum_{v=1}^V (W_v^{(j)} - \bar W^{(j)}) \Delta_v^{(k)}$ and $B_{jk} = \frac{1}{V-1} \sum_{v=1}^V \Delta_v^{(j)} \Delta_v^{(k)}$.

Since $\max_v |\Delta_v^{(j)}| = o_p(n^{-1/2})$, it is immediate that $|B_{jk}| \le \max_v |\Delta_v^{(j)}| \max_v |\Delta_v^{(k)}| = o_p(n^{-1})$. For the cross term, applying the Cauchy--Schwarz inequality gives
\[
|A_{jk}| \le \left[ \frac{1}{V-1} \sum_{v=1}^V \big(W_v^{(j)} - \bar W^{(j)}\big)^2 \right]^{1/2} \left[ \frac{1}{V-1} \sum_{v=1}^V \big(\Delta_v^{(k)}\big)^2 \right]^{1/2}.
\]
\textcolor{black}{The first factor is $\hat S_W(j,j)$, which is $O_p(n^{-1/2})$ since $\Var(W_v^{(j)}) = \frac{V}{n}\sigma_j^2$ and $V$ fixed.} The second factor is bounded by $\max_v |\Delta_v^{(k)}| = o_p(n^{-1/2})$. Thus, $|A_{jk}| \le O_p(n^{-1/2}) \cdot o_p(n^{-1/2}) = o_p(n^{-1})$.

Combining these bounds yields $\hat\Sigma_{Jack}(j,k) = \hat S_W^2(j,k) + o_p(n^{-1})$, which implies $\frac{n}{V}\hat\Sigma_{Jack}(j,k) = \frac{n}{V}\hat S_W^2(j,k) + o_p(1)$.

Finally, if the folds are of exactly equal size $n/V$, the fold-level averages $W_v^{(j)}$ are independent and identically distributed with covariance $\Cov(W_v^{(j)}, W_v^{(k)}) = \frac{V}{n} \sigma_{jk}$. It follows directly that $\E[\hat S_W^2(j,k)] = \frac{V}{n} \sigma_{jk}$.
\end{proof}

\begin{proof}[Proof of Theorem~\ref{thm:multi_t}]
For each fold $v = 1, \dots, V$, define the $m$-dimensional vector $\bW_v = (P^1_{n,v} \varphi_1, \dots, P^1_{n,v} \varphi_m)^\transpose$. Since the $V$ folds are disjoint, the vectors $\bW_1, \dots, \bW_V$ are independent conditional on the partition. Each fold contains $n/V \to \infty$ i.i.d.\ observations, so by the multivariate central limit theorem,
\[
\sqrt{n/V} \, (\bW_1, \dots, \bW_V) \CiD (\tilde\bZ_1, \dots, \tilde\bZ_V),
\]
where $\tilde\bZ_1, \dots, \tilde\bZ_V$ are independent draws from $\Normal_m(\bm{0}, \bSigma)$.

By the same argument as in the proof of \cref{thm:fixedV} applied componentwise, $\hat\Psi_{Jack}^{(j)}(P_n) - \Psi_j(P_0) = \bar W^{(j)} + V R_j(P_n, P_0) - (V-1)\bar{R}_{j,n,-v}$, where $\bar{R}_{j,n,-v} = V^{-1}\sum_{v'} R_j(P_{n,-v'}, P_0)$. Because $\max_v |R_j(P_{n,-v}, P_0)| = o_p(n^{-1/2})$ and $R_j(P_n, P_0) = o_p(n^{-1/2})$, the remainder is $o_p(n^{-1/2})$, so $\hat\Psi_{Jack}^{(j)}(P_n) - \Psi_j(P_0) = \bar W^{(j)} + o_p(n^{-1/2})$. From Theorem~\ref{thm:multi_consistency}, we have $\hat\sigma_j^2 = \hat S_W^2(j,j) + o_p(n^{-1})$. Multiplying the numerator and denominator by $\sqrt{n/V}$, the $j$-th component of $\bT$ satisfies
\[
T_j = \frac{\sqrt{V}(\hat\Psi_{Jack}^{(j)}(P_n) - \Psi_j(P_0))}{\hat\sigma_j} = \frac{\sqrt{n}\,\bar W^{(j)} + o_p(1)}{\sqrt{\frac{n}{V}\hat S_W^2(j,j) + o_p(1)}}.
\]
By Slutsky's theorem and the continuous mapping theorem, the vector of componentwise-Studentized means $\big(\frac{\sqrt{n}\bar W^{(1)}}{\sqrt{\frac{n}{V}\hat S_W^2(1,1)}}, \dots, \frac{\sqrt{n}\bar W^{(m)}}{\sqrt{\frac{n}{V}\hat S_W^2(m,m)}}\big)^\transpose$ converges in distribution to the corresponding vector constructed from $\tilde\bZ_1, \dots, \tilde\bZ_V$. Because the scale factors $\sigma_j$ cancel out exactly in the componentwise ratio, this limiting vector is equivalent in distribution to $\bT^\infty$ constructed from the correlation matrix $R_0$. Thus, $\bT \CiD \bT^\infty$. Because $Z_1^{(j)}, \dots, Z_V^{(j)}$ are independent Gaussian random variables for each $j$, each marginal limit $T^\infty_j$ admits an exact $t_{V-1}$ distribution.
\end{proof}

\begin{proof}[Proof of Theorem~\ref{thm:plugin_simult_divergingV}]
\begingroup\color{black}
For each component, the centered pseudo-value decomposition is
\[
IC_{Jack}^{(j)}(v)-\hat\Psi_{Jack}^{(j)}(P_n)
=
\big(W_v^{(j)}-\bar W^{(j)}\big)+\Delta_v^{(j)},
\qquad
\Delta_v^{(j)}=(V-1)\{d_{n,v}^{(j)}-\bar d_{n}^{(j)}\}.
\]
The assumed fold RMS bound gives
\[
\frac{n}{V(V-1)}\sum_{v=1}^V\big(\Delta_v^{(j)}\big)^2=o_p(V^{-1})
\]
for every $j$. Cauchy--Schwarz then shows that all covariance terms involving at least one $\Delta_v^{(j)}$ are $o_p(1)$ after multiplication by $n/V$. Hence it suffices to analyze the oracle fold covariance $\hat S_W^2(j,k)$.

Expanding the oracle covariance exactly as in \cref{thm:divergingV} gives
\[
\frac{n}{V}\hat S_W^2(j,k)
=
\frac{1}{n}\sum_{i=1}^n \varphi_{j,i}\varphi_{k,i}
+T_{n,jk}
+\frac{1}{V-1}T_{n,jk}
-\frac{1}{V-1}U_{n,jk},
\]
where $T_{n,jk}$ is the within-fold off-diagonal cross-product term and $U_{n,jk}$ is the corresponding full-sample off-diagonal term. The finite fourth-moment assumption implies the weak law for $n^{-1}\sum_i\varphi_{j,i}\varphi_{k,i}$, and the same pairing calculation used for $T_n$ yields $T_{n,jk}=O_p(V^{-1/2})$; also $U_{n,jk}=O_p(1)$. Therefore $(n/V)\hat\Sigma_{Jack}(j,k)\CiP\sigma_{jk}$ for every fixed pair $(j,k)$. Since $m$ is fixed and all marginal variances are bounded away from zero, diagonal rescaling gives $\hat R_{Jack}\CiP R_0$ entrywise.

The multivariate central limit theorem and the full-sample remainder condition imply
\[
\left(
\frac{\hat\Psi_1(P_n)-\Psi_1(P_0)}{\widehat{SE}_1},
\dots,
\frac{\hat\Psi_m(P_n)-\Psi_m(P_0)}{\widehat{SE}_m}
\right)^\transpose
\CiD
\Normal_m(\bm{0},R_0).
\]
The plug-in componentwise-Studentized law used by Algorithm~\ref{alg:mc_critical} converges to the same Gaussian max law because $V\to\infty$ makes the simulated Wishart-diagonal denominators converge to one and $\hat R_{Jack}\CiP R_0$. The distribution of $\max_j|Z_j|$ for $Z\sim\Normal_m(\bm0,R_0)$ is continuous, so the plug-in critical value satisfies $\hat q_\alpha\CiP q_\alpha$, the $(1-\alpha)$ quantile of this limiting max law. Slutsky's theorem then gives the stated simultaneous coverage.
\endgroup
\end{proof}

\begin{proof}[Proof of Theorem~\ref{thm:bias_corrected}]
Write $\hat\Psi_j(P_n) - \Psi_j(P_0) = (\hat\Psi_j(P_n) - \E[\hat\Psi_j(P_n)]) + b_j$. The event $\{\Psi_j(P_0) \in [\hat\Psi_j(P_n) \pm (q_\alpha \widehat{SE}_j + |\hat b_j|)] \;\forall j\}$ is equivalent to $\{|\hat\Psi_j(P_n) - \Psi_j(P_0)| \le q_\alpha \widehat{SE}_j + |\hat b_j| \;\forall j\}$. By the triangle inequality,
\[
|\hat\Psi_j(P_n) - \Psi_j(P_0)| \le |\hat\Psi_j(P_n) - \E[\hat\Psi_j(P_n)]| + |b_j|.
\]
\textcolor{black}{By the triangle inequality, $|b_j| \le |\hat b_j| + |\hat b_j - b_j|$. Hence}
\[
|\hat\Psi_j(P_n) - \Psi_j(P_0)| \le |\hat\Psi_j(P_n) - \E[\hat\Psi_j(P_n)]| + |\hat b_j| + |\hat b_j - b_j|.
\]
It therefore suffices that
\begin{equation}\label{eq:bc_suffice}
    |\hat\Psi_j(P_n) - \E[\hat\Psi_j(P_n)]| + |\hat b_j - b_j| \le q_\alpha \widehat{SE}_j \qquad \forall\, j.
\end{equation}
Let \textcolor{black}{$\varepsilon_n = \max_{1 \le j \le m} |\hat b_j - b_j| / \widehat{SE}_j \CiP 0$ by the uniform relative-error condition}. Then (\ref{eq:bc_suffice}) holds whenever $\max_j |T_j^*| \le q_\alpha - \varepsilon_n$, where $T_j^* = (\hat\Psi_j(P_n) - \E[\hat\Psi_j(P_n)]) / \widehat{SE}_j$ is the centered Studentized statistic.

By Theorem~\ref{thm:multi_t}, the uncentered statistic $T_j = \sqrt{V}(\hat\Psi_{Jack}^{(j)}(P_n) - \Psi_j(P_0)) / \hat\sigma_j$ converges jointly in distribution to $\bT^\infty$. Using the exact algebraic identity $\hat\Psi_{Jack}^{(j)}(P_n) - \hat\Psi_j(P_n) = -\hat b_j$ alongside $b_j = \E[\hat\Psi_j(P_n)] - \Psi_j(P_0)$, we can write
\[
T_j = \frac{\hat\Psi_j(P_n) - \Psi_j(P_0) - \hat b_j}{\widehat{SE}_j} = \frac{\hat\Psi_j(P_n) - \E[\hat\Psi_j(P_n)] + b_j - \hat b_j}{\widehat{SE}_j} = T_j^* + \frac{b_j - \hat b_j}{\widehat{SE}_j}.
\]
Hence $T_j^* = T_j + \frac{\hat b_j - b_j}{\widehat{SE}_j} = T_j + o_p(1)$ uniformly over $j$. Thus, $\max_j |T_j^*| \CiD \max_j |T_j^\infty|$ under the asymptotic componentwise-Studentized distribution, whose cumulative distribution function $F(t) = P(\max_j |T_j^\infty| \le t)$ is continuous. Since $\varepsilon_n \CiP 0$, Slutsky's lemma gives
\[
\max_j |T_j^*| + \varepsilon_n \CiD \max_j |T_j^\infty|.
\]
Because $\{\max_j |T_j^*| \le q_\alpha - \varepsilon_n\} = \{\max_j |T_j^*| + \varepsilon_n \le q_\alpha\}$ and $F$ is continuous at $q_\alpha$, the Portmanteau theorem yields
\[
P\!\left(\max_j |T_j^*| \le q_\alpha - \varepsilon_n\right) = P\!\left(\max_j |T_j^*| + \varepsilon_n \le q_\alpha\right) \to F(q_\alpha) = 1 - \alpha.
\]
Hence the simultaneous coverage probability is at least $1 - \alpha + o(1)$.
\end{proof}

\subsection*{Proofs for Section~\ref{sec:slower_rates} (Generalized Asymptotically Linear Estimators)}

\begin{proof}[Proof of Theorem~\ref{thm:fixedV_slower}]
Let $W_{n,v} = P^1_{n,v} \varphi_n$ and $\bar W_n = V^{-1}\sum_{v=1}^V W_{n,v}$. Exact algebraic cancellation of the centered pseudo-values yields
\[
IC_{Jack}(v) - \hat\Psi_{Jack}(P_n) = (W_{n,v} - \bar W_n) + \Delta_{n,v},
\]
where $\Delta_{n,v} = -(V-1) \big\{ R_n(P_{n,-v}, P_0) - \bar{R}_{n,-v} \big\}$ and $\bar{R}_{n,-v} = V^{-1}\sum_{v'=1}^V R_n(P_{n,-v'}, P_0)$. Under Assumption \ref{assump:slower_rate}, $\max_{1 \le v \le V} |\Delta_{n,v}| = o_p(\sigma_n n^{-1/2})$.

Expanding the sample variance $\hat S_{Jack}^2$ gives:
\[
\hat S_{Jack}^2 = S_{W_n}^2 + \frac{1}{V-1}\sum_{v=1}^V \Delta_{n,v}^2 + \frac{2}{V-1}\sum_{v=1}^V (W_{n,v} - \bar W_n)\Delta_{n,v},
\]
where $S_{W_n}^2 = \frac{1}{V-1}\sum_{v=1}^V (W_{n,v} - \bar W_n)^2$. Because $\Var(W_{n,v}) = V \sigma_n^2 / n$, we have $W_{n,v} - \bar W_n = O_p(\sigma_n n^{-1/2})$. The cross-terms and squared remainder terms are therefore strictly $o_p(\sigma_n^2 n^{-1})$. Thus, $\hat S_{Jack}^2 = S_{W_n}^2 + o_p(\sigma_n^2 n^{-1}) = S_{W_n}^2 \{1 + o_p(1)\}$.

The Studentized statistic can then be written as:
\[
\frac{\sqrt{V}(\hat\Psi_{Jack}(P_n)-\Psi(P_0))}{\hat S_{Jack}}
= \frac{\sqrt{V}\bar W_n + \sqrt{V}\big(V R_n(P_n, P_0) - (V-1)\bar{R}_{n,-v}\big)}{S_{W_n} \{1+o_p(1)\}}.
\]
Dividing both the numerator and denominator by the unknown standard error scale $\sigma_n \sqrt{V/n}$ yields:
\[
\frac{\frac{\sqrt{n/V}}{\sigma_n} \sqrt{V} \bar W_n + o_p(1)}{\frac{\sqrt{n/V}}{\sigma_n} S_{W_n} \{1+o_p(1)\}}.
\]
Define the scaled fold averages $Z_{n,v} = \frac{\sqrt{n/V}}{\sigma_n} W_{n,v}$. By the Lindeberg condition in Assumption \ref{assump:slower_rate}, the vector $(Z_{n,1}, \dots, Z_{n,V})$ converges in distribution to $V$ independent standard normal random variables $Z_1, \dots, Z_V$. The scaled numerator converges to $\frac{1}{\sqrt{V}}\sum_{v=1}^V Z_v$ and the scaled denominator to the sample standard deviation $S_Z = \sqrt{\frac{1}{V-1}\sum_{v=1}^V (Z_v - \bar Z)^2}$. The continuous mapping theorem immediately yields the $t_{V-1}$ limit, with the diverging scale $\sigma_n$ canceling out perfectly.
\end{proof}

\begin{proof}[Proof of Corollary~\ref{cor:CI_slower}]
As in \cref{lem:centering}, the centering difference is $\hat\Psi_{Jack}(P_n) - \hat\Psi(P_n) = -\hat b_n$, where the jackknife bias estimate $\hat b_n = (V-1)\big(\bar\Psi_{(-v)} - \hat\Psi(P_n)\big)$. The influence curve contributions cancel exactly, leaving only the remainders: $-\hat b_n = (V-1)\{R_n(P_n, P_0) - \frac{1}{V}\sum_{v=1}^V R_n(P_{n,-v}, P_0)\}$. Under Assumption \ref{assump:slower_rate}, this difference is $o_p(\sigma_n n^{-1/2})$. Because $\widehat{SE} = \hat S_{Jack} / \sqrt{V} \asymp_p \sigma_n n^{-1/2}$, the centering difference is $o_p(\widehat{SE})$. The result follows by Slutsky's theorem.
\end{proof}

\begin{proof}[Proof of Theorem~\ref{thm:divergingV_slower}]
Following the exact algebraic decomposition from \cref{thm:divergingV}, we have:
\[
\frac{\hat{s}_n^2}{\sigma_n^2} = \frac{n}{V \sigma_n^2} \hat S_{W_n}^2 + \frac{n}{V(V-1) \sigma_n^2} \sum_{v=1}^V \Delta_{n,v}^2 + \text{cross-term}.
\]
The remainder stability assumption implies
\[
\sum_{v=1}^V \Delta_{n,v}^2
\le
(V-1)^2 V\|d_n\|_{V,2}^2
=
o_p\!\left(\frac{V\sigma_n^2}{n}\right).
\]
Multiplying by $\frac{n}{V(V-1)\sigma_n^2}$ ensures the squared-remainder contribution to the variance is $o_p(V^{-1})=o_p(1)$. By the Cauchy--Schwarz inequality, the cross-term is bounded by $o_p(V^{-1/2}) = o_p(1)$.

Thus, $\frac{\hat{s}_n^2}{\sigma_n^2} = \frac{n}{V \sigma_n^2} \hat S_{W_n}^2 + o_p(1)$. Expanding the sums in $\hat S_{W_n}^2$, we obtain the decomposition:
\[
\frac{n}{V \sigma_n^2} \hat S_{W_n}^2 = \frac{1}{n \sigma_n^2}\sum_{i=1}^n \varphi_n^2(O_i) + \frac{T_n}{\sigma_n^2} + \frac{1}{V-1}\frac{T_n}{\sigma_n^2} - \frac{1}{V-1}\frac{U_n}{\sigma_n^2},
\]
where $T_n = \frac{1}{n}\sum_{v=1}^V \sum_{i \neq j \in I_v} \varphi_{n,i} \varphi_{n,j}$ and $U_n = \frac{1}{n}\sum_{i\neq j} \varphi_{n,i} \varphi_{n,j}$. The variance calculation yields $\Var(T_n / \sigma_n^2) = \frac{1}{n^2 \sigma_n^4} \sum_{v=1}^V 2 (\frac{n}{V})(\frac{n}{V}-1) \sigma_n^4 \sim \frac{2}{V}$. Thus, $T_n / \sigma_n^2 = O_p(V^{-1/2}) = o_p(1)$, and $U_n / \sigma_n^2 = O_p(1)$ so the $(V-1)^{-1}$ terms vanish. Since $\frac{1}{n \sigma_n^2} \sum_{i=1}^n \varphi_n^2(O_i) \CiP 1$ by assumption, the conclusion follows: $\frac{\hat{s}_n^2}{\sigma_n^2} \CiP 1$.
\end{proof}

\begin{proof}[Proof of Corollary~\ref{cor:divergingV_wald_slower}]
\begingroup\color{black}
The scaled remainder condition gives $\sqrt n\{\hat\Psi(P_n)-\Psi(P_0)\}/\sigma_n=\sqrt n\,P_n\varphi_n/\sigma_n+o_p(1)$, and \cref{thm:divergingV_slower} gives $\hat s_n/\sigma_n\CiP 1$. Slutsky's theorem yields the normal limit and the coverage statement follows from the convergence of the $t$ critical values.
\endgroup
\end{proof}

\section{Connection to High-Dimensional Gaussian Approximation}\label{app:gaussian_approx}

This appendix records a broader connection between our simultaneous-inference problem and the literature on Gaussian and multiplier-bootstrap approximation for maxima of high-dimensional means. These results are not needed for the main theorems of the paper, but they help situate the grouped-jackknife construction relative to existing high-dimensional approximation methods.

The fold-level pseudo-value vectors provide a natural bridge to the Gaussian multiplier bootstrap framework of \citet{ChernozhukovChetverikovKato2013}. By Theorem~\ref{thm:multi_t}, the $m$-dimensional centered pseudo-value vectors $(IC_{Jack}^{(1)}(v) - \hat\Psi_{Jack}^{(1)}(P_n), \dots, IC_{Jack}^{(m)}(v) - \hat\Psi_{Jack}^{(m)}(P_n))^\transpose$, $v = 1, \dots, V$, behave asymptotically as $V$ independent draws from a mean-zero distribution with covariance proportional to $\bSigma$. This places the problem of constructing simultaneous inference for the mean vector $\Psi(P_0)$ squarely within the framework of the Gaussian multiplier bootstrap for high-dimensional means.

In the standard formulation of \citet{ChernozhukovChetverikovKato2013}, given $n$ independent centered observations $\bX_1, \dots, \bX_n \in \R^p$, the Gaussian multiplier bootstrap approximates the distribution of the max statistic $\max_{1 \le j \le p} |n^{-1/2} \sum_{i=1}^n X_{ij}|$ as follows: for $b = 1, \dots, B$, generate independent standard normal multipliers $\xi_1^{(b)}, \dots, \xi_n^{(b)} \stackrel{i.i.d.}{\sim} \Normal(0,1)$, compute the multiplier process $\max_{1 \le j \le p} |n^{-1/2} \sum_{i=1}^n \xi_i^{(b)} X_{ij}|$, and use the empirical distribution of these $B$ maxima to estimate quantiles. The key theoretical result is that this bootstrap consistently estimates the distribution of the max statistic even when $p$ greatly exceeds $n$, with no rank restriction on the covariance matrix, and the approximation error converges to zero uniformly over hyperrectangles even when $p$ grows as fast as $\exp(C\,n^c)$ for certain constants $C, c > 0$.

\citet{ChenKato2020} develop a ``jackknife multiplier bootstrap'' (JMB) for $U$-processes of general order $r \geqslant 2$ indexed by VC-type function classes. In their setting, the standard Gaussian multiplier bootstrap is infeasible because it requires the H\'{a}jek projection $P^{r-1}h$, which depends on the unknown distribution $P$. Their key idea is to replace each $P^{r-1}h(X_i)$ with its leave-one-out jackknife estimate, and they establish non-asymptotic coupling and Kolmogorov distance bounds showing that the resulting bootstrap consistently approximates the distribution of the $U$-process supremum with polynomial error rates. While this is the closest existing combination of jackknife ideas with simultaneous inference, it differs from our approach in several respects: (i)~it uses the leave-one-out (not grouped) jackknife, where each deletion removes a single observation; (ii)~the jackknife serves to estimate an unknown nuisance quantity (the H\'{a}jek projection) that is then fed into an external Gaussian multiplier scheme, rather than producing pseudo-values that are directly Studentized; (iii)~it targets suprema over infinite-dimensional function classes rather than finite-dimensional parameter vectors; and (iv)~the theoretical guarantees are non-asymptotic finite-sample bounds for $U$-processes, whereas our results provide asymptotic distributional limits for general RAL estimators of pathwise differentiable parameters.

In our setting, $V$ plays the role of $n$ and $m$ (the number of evaluation points) plays the role of $p$. One could, in principle, apply the Gaussian multiplier bootstrap directly to the $V$ centered pseudo-value vectors to approximate the distribution of $\max_{1 \le j \le m} |\sqrt{V}\,(\hat\Psi_{Jack}^{(j)}(P_n) - \Psi_j(P_0))/ \hat\sigma_j|$. However, the theoretical guarantees of \citet{ChernozhukovChetverikovKato2013} require the sample size to satisfy $n \gg (\log p)^c$ for certain $c > 0$, which translates to $V \gg (\log m)^c$ in our framework. Since typical choices of $V \in \{10, 20\}$ are modest, this high-dimensional rate condition may not be satisfied when $m$ is large, and the multiplier bootstrap approximation may be unreliable in this regime. This observation helps explain why our main text instead focuses on the exact componentwise-Studentized limit together with slow divergence of $V$, while viewing the high-dimensional Gaussian approximation as contextual rather than essential.

\section{True Dose-Response Curves for the HAL Simulation}\label{sec:app_dgp_truth}

\cref{fig:dgp_truth} displays the true dose-response curves $\psi_0(a) = \E_{P_0}[\E_{P_0}(Y \mid A=a, W)]$ for the four data generating processes used in the HAL simulation study (Section~\ref{sec:sim_hal}). DGP~1 is a smooth, monotonically increasing curve. DGP~2 exhibits pronounced oscillations driven by a $\sin(1.25\, a^{3/2})$ term. DGP~3 is smooth and nearly linear for $a \le 2$, then transitions to an oscillating pattern for $a > 2$ via an indicator-modulated sine component. The transition at $a = 2$ is continuous (since $\sin((0.8 \cdot 2)^2 - 2.56) = \sin(0) = 0$), but the derivative is discontinuous---a kink that the first-order HAL can represent, though the subsequent oscillations remain challenging for a piecewise-linear basis. DGP~4 features jump discontinuities at $a = 2$ and $a = 4$, where the curve jumps from approximately $0.12$ to $0.50$ and from $0.84$ to $0.16$, respectively. These four settings represent a progression from easy (smooth) to challenging (discontinuous) for nonparametric curve estimation, and the coverage results in Section~\ref{sec:sim_hal} should be interpreted in light of the varying smoothness of the underlying truth.

\begin{figure}[htbp]
\centering
\includegraphics[width=\textwidth]{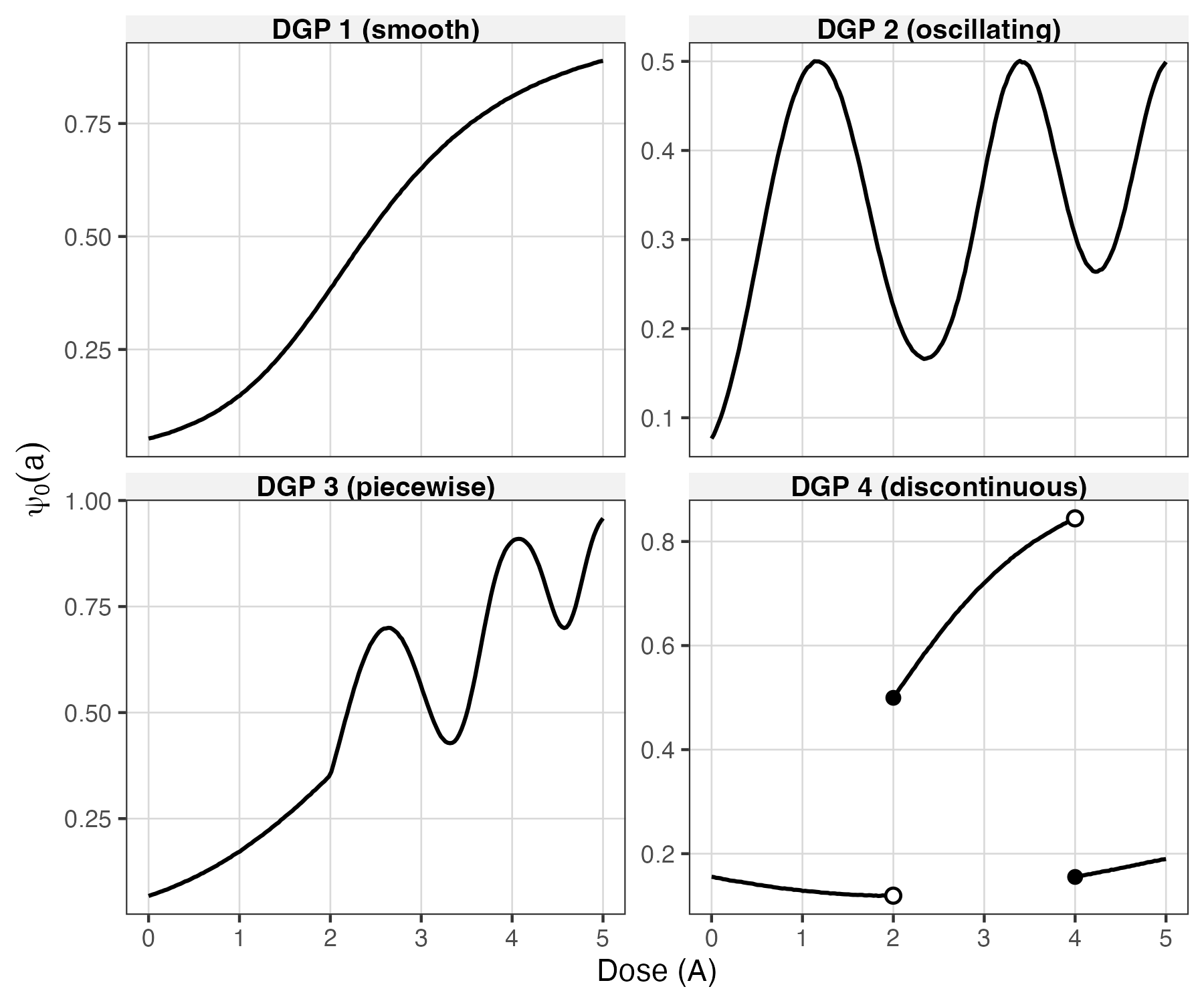}
\caption{True dose-response curves $\psi_0(a) = \E[Y \mid \text{do}(A=a)]$ for the four data generating processes used in the HAL simulation study. DGP~1: smooth monotone increase; DGP~2: oscillating; DGP~3: smooth for $a \le 2$, oscillating for $a > 2$; DGP~4: jump discontinuities at $a = 2$ and $a = 4$. True values computed by Monte Carlo integration ($10^6$ samples per evaluation point).}
\label{fig:dgp_truth}
\end{figure}

\section{Pointwise Coverage of Simultaneous Confidence Bands}\label{sec:app_simult_pw}

The simultaneous confidence bands reported in Tables~\ref{tab:km} and~\ref{tab:hal_simult} are designed to cover the entire curve simultaneously with probability $1-\alpha$. A natural diagnostic is to examine the \emph{pointwise} coverage of these simultaneous bands at each individual evaluation point. By construction, pointwise coverage must be at least as large as simultaneous coverage, since the event that the band covers every point implies coverage at each individual point. Here we display pointwise coverage profiles for the Kaplan--Meier and HAL simulation studies to examine how the choice of $V$ affects the uniformity of coverage across evaluation points.

\begin{figure}[htbp]
\centering
\includegraphics[width=0.7\textwidth]{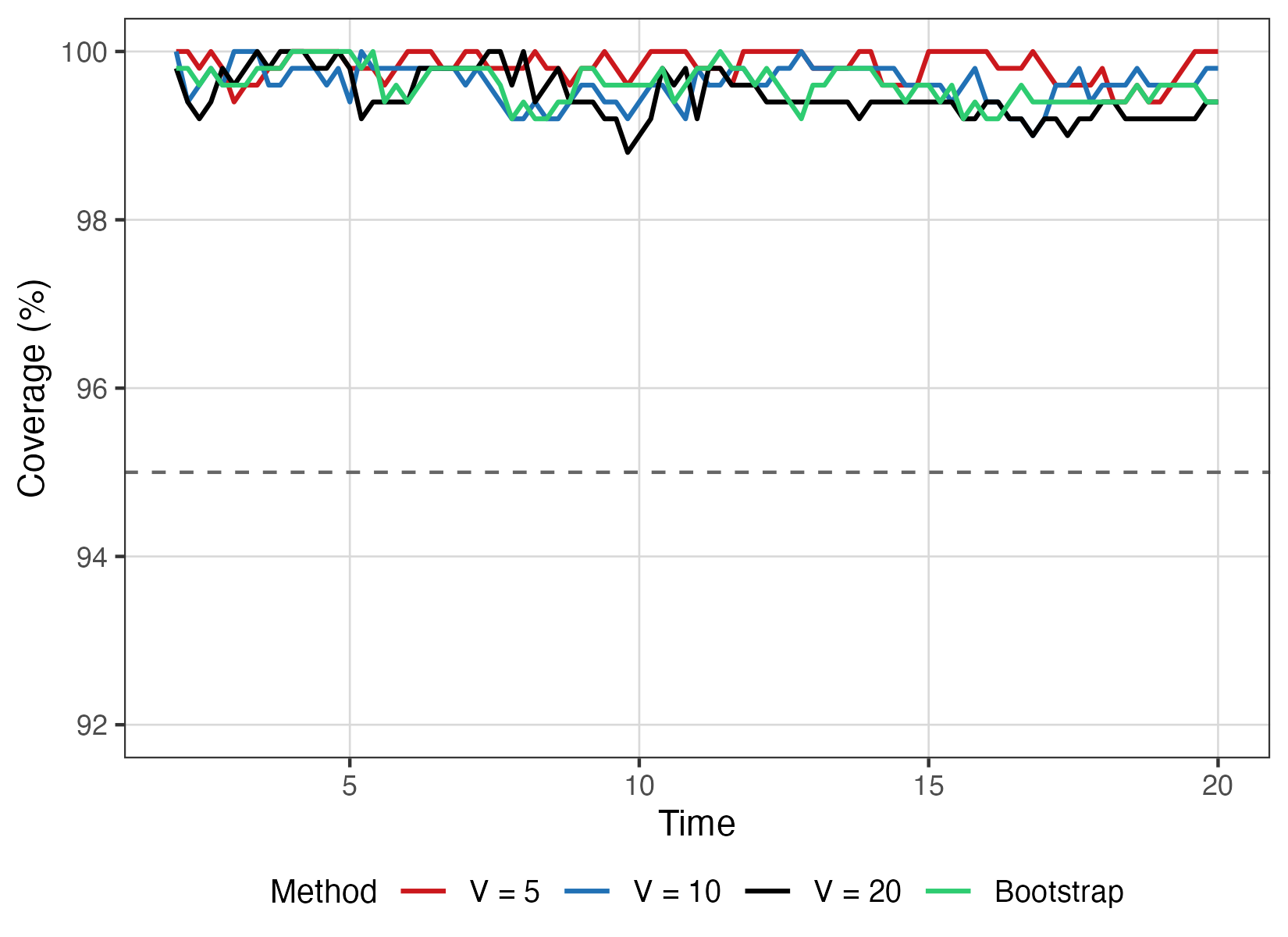}
\caption{Pointwise coverage (\%) of simultaneous 95\% confidence bands for the Kaplan--Meier survival function across evaluation times ($n = 200$, 500 replications). Methods: $V$-fold simultaneous band with $V \in \{5, 10, 20\}$ and the studentized bootstrap simultaneous band. Horizontal dashed line: nominal 95\% level. All methods achieve 98--100\% pointwise coverage uniformly across time points, with no systematic differences across $V$.}
\label{fig:km_simult_pointwise}
\end{figure}

\cref{fig:km_simult_pointwise} displays pointwise coverage for the Kaplan--Meier simultaneous bands. All methods---$V$-fold with $V \in \{5, 10, 20\}$ and the studentized bootstrap---achieve approximately 99--100\% pointwise coverage uniformly across the entire time range. The coverage profiles are nearly indistinguishable across $V$, showing no additional fluctuation for small $V$. This uniformity reflects the conservatism inherent in simultaneous bands: the simultaneous critical value $q_\alpha$ is substantially larger than the pointwise $t_{V-1}$ quantile, yielding pointwise coverage well above the nominal 95\% level at every evaluation point.

\begin{figure}[htbp]
\centering
\includegraphics[width=\textwidth,height=0.78\textheight,keepaspectratio]{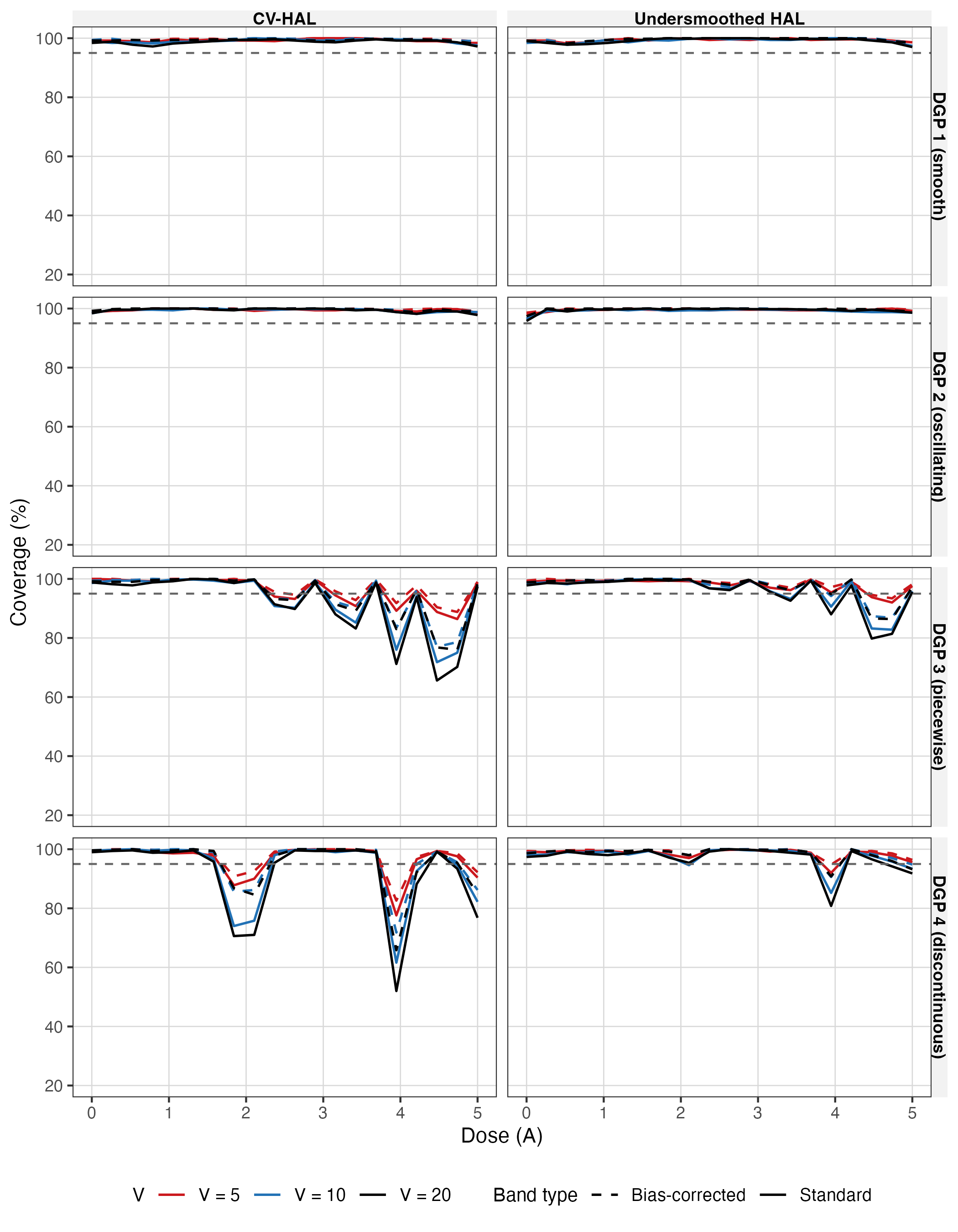}
\caption{Pointwise coverage (\%) of simultaneous 95\% confidence bands for the HAL dose-response curve across the dose grid ($n = 500$, 500 replications). Rows: DGP~1 (smooth), DGP~2 (oscillating), DGP~3 (piecewise), DGP~4 (discontinuous). Columns: CV-HAL and undersmoothed HAL. Solid lines: standard simultaneous bands; dashed lines: bias-corrected simultaneous bands. Colors indicate $V \in \{5, 10, 20\}$. Horizontal dashed line: nominal 95\% level.}
\label{fig:hal_simult_pointwise}
\end{figure}

\cref{fig:hal_simult_pointwise} shows the corresponding pointwise coverage profiles for the HAL dose-response simultaneous bands. For DGPs~1 (smooth) and~2 (oscillating), pointwise coverage is uniformly high (approximately 95--100\%) across all dose points for both CV-HAL and undersmoothed HAL, and the bias-corrected bands achieve even more uniform coverage. For DGPs~3 (piecewise) and~4 (discontinuous), coverage dips at specific dose points where the true curve is non-smooth, but these dips are consistent across all values of $V$---the coverage profiles for $V = 5$, $V = 10$, and $V = 20$ are nearly superimposed.

A striking feature of these figures is that small $V$ does not produce more variable pointwise coverage across evaluation points compared to large $V$. The coverage profiles are remarkably stable across $V$. This stability arises because the simultaneous critical value $q_\alpha$ compensates for the additional uncertainty associated with fewer folds: smaller $V$ yields larger $t_{V-1}$-based critical values, which uniformly inflate the band width at every evaluation point, producing stable (if conservative) pointwise coverage. The variation in pointwise coverage across evaluation points is driven primarily by the estimator's bias profile---which is a property of the estimator and the data generating process---rather than by the choice of $V$.

This behavior contrasts with pointwise confidence intervals (cf.\ \cref{fig:hal_pointwise}), where different methods (delta method, $V$-fold $t$, bootstrap) produce visibly different coverage profiles across dose points. The simultaneous bands homogenize coverage across evaluation points because the simultaneous critical value acts as a uniform multiplier of the standard error, so the pointwise coverage at each point is determined primarily by the ratio of the estimator's bias to its standard error at that point.

\section{Jackknife-Centered versus Bias-Corrected Confidence Intervals}\label{sec:app_jc}

The standard $V$-fold jackknife confidence intervals center at the full-sample estimator $\hat\Psi_j(P_n)$, while the bias-corrected band from \cref{sec:bias_corrected_bands} widens each interval by $|\hat b_j|$ to accommodate bias. A natural alternative is to \emph{shift} the center to the jackknife estimator $\hat\Psi_{Jack}^{(j)}(P_n) = \hat\Psi_j(P_n) - \hat b_j$, which subtracts the estimated bias while keeping the interval width unchanged. The resulting ``jackknife-centered'' (JC) interval is
\[
\hat\Psi_{Jack}^{(j)}(P_n) \pm \text{crit} \cdot \widehat{SE}_j = \left(\hat\Psi_j(P_n) - \hat b_j\right) \pm \text{crit} \cdot \widehat{SE}_j,
\]
where $\text{crit} = t_{V-1,\,1-\alpha/2}$ for pointwise or $\hat q_\alpha$ for simultaneous bands. Because JC has the same width as the standard interval (and narrower than the BC band), it would be more efficient if the bias estimate $\hat b_j$ were accurate.

\cref{fig:hal_jc_pointwise} compares pointwise coverage across the dose grid for the three centering strategies (standard, JC, and BC) using $V = 20$. Contrary to the expectation that bias correction via centering might help, the JC intervals exhibit substantially \emph{lower} coverage than the standard intervals in every DGP and HAL variant. For DGP~1 (smooth), pointwise coverage drops from 93.7\% to 86.0\% (CV-HAL) and from 95.6\% to 85.7\% (undersmoothed). For DGP~2 (oscillating), the drop is more severe: from 95.2\% to 71.9\% (CV-HAL). For DGPs~3 and~4, where bias is already problematic, JC coverage falls to 64--72\%. The simultaneous results are even more striking (\cref{fig:hal_jc_simult_pointwise}): for DGP~2, simultaneous coverage drops from 91.8\% to 21.0\% (CV-HAL), and for DGP~4 from 25.4\% to 5.8\%. In contrast, the BC approach consistently achieves the highest coverage across all settings, reaching near-nominal levels for DGPs~1--2.

\begin{figure}[htbp]
\centering
\includegraphics[width=\textwidth,height=0.78\textheight,keepaspectratio]{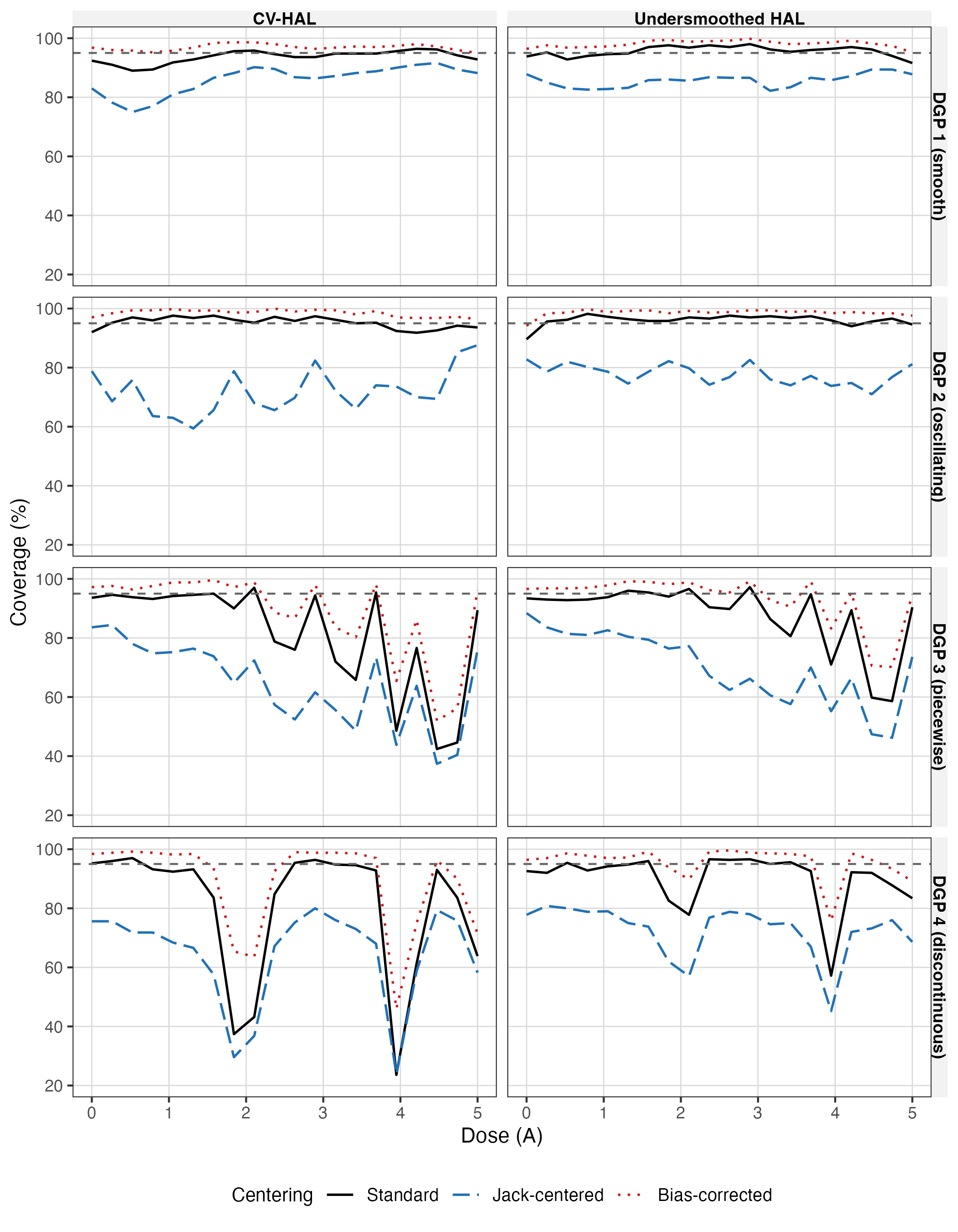}
\caption{Pointwise coverage (\%) of 95\% confidence intervals for the HAL dose-response curve, comparing three centering strategies ($V = 20$, $n = 500$, 500 replications). Standard: centered at $\hat\Psi_j(P_n)$; Jack-centered (JC): centered at $\hat\Psi_{Jack}^{(j)}(P_n) = \hat\Psi_j(P_n) - \hat b_j$; Bias-corrected (BC): centered at $\hat\Psi_j(P_n)$ with interval widened by $|\hat b_j|$. The JC intervals have lower coverage than the standard intervals in all settings, while BC achieves the highest coverage.}
\label{fig:hal_jc_pointwise}
\end{figure}

\begin{figure}[htbp]
\centering
\includegraphics[width=\textwidth,height=0.78\textheight,keepaspectratio]{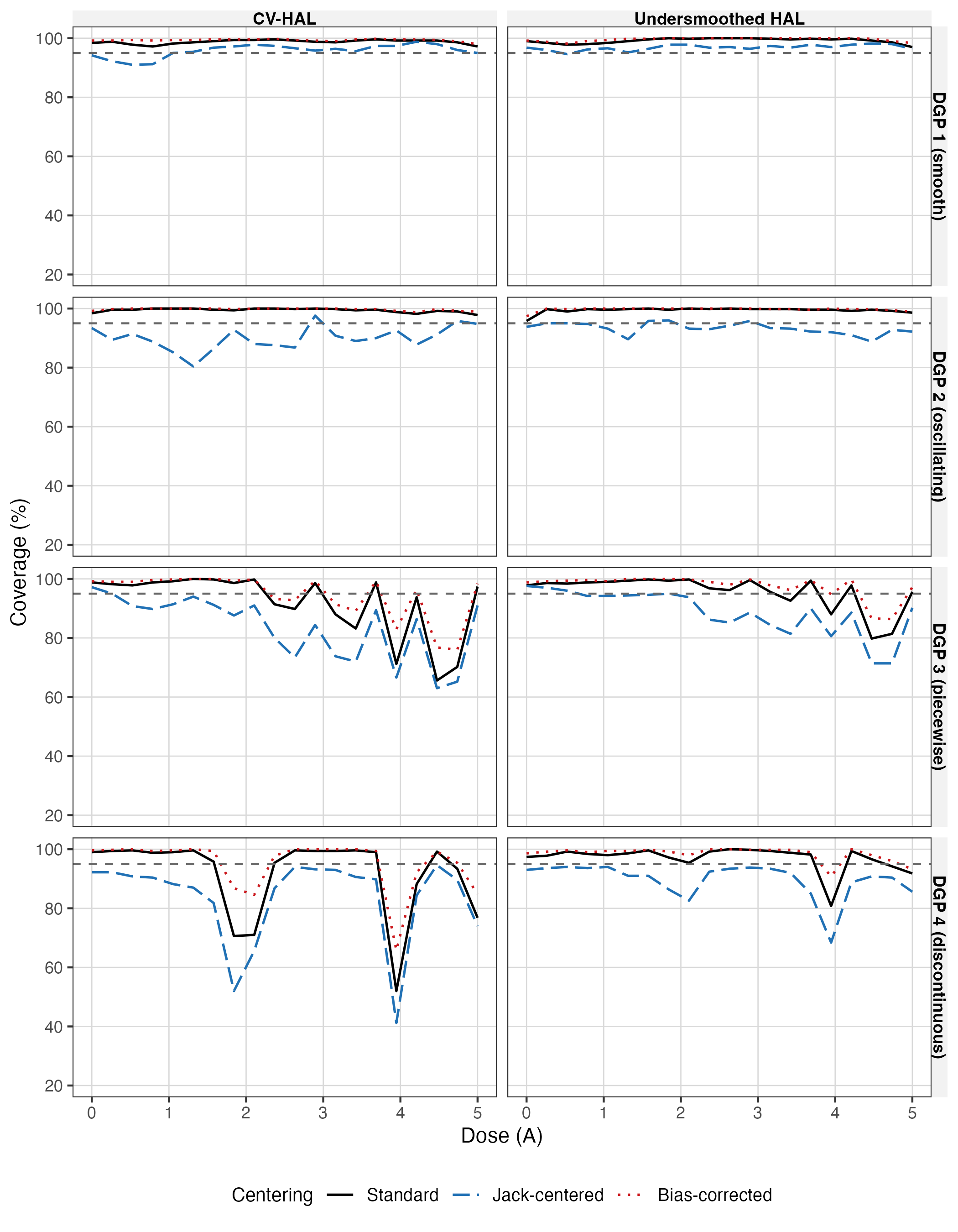}
\caption{Pointwise coverage (\%) of simultaneous 95\% confidence bands for the HAL dose-response curve, comparing three centering strategies ($V = 20$, $n = 500$, 500 replications). Same conventions as \cref{fig:hal_jc_pointwise} but using the simultaneous critical value $\hat q_\alpha$ in place of $t_{V-1,\,0.975}$. The coverage degradation from JC centering is even more pronounced for simultaneous bands.}
\label{fig:hal_jc_simult_pointwise}
\end{figure}

The failure of JC centering can be understood through the lens of \cref{lem:centering}. For pathwise differentiable parameters, the centering mismatch $\hat b_j = \hat\Psi_j(P_n) - \hat\Psi_{Jack}^{(j)}(P_n)$ satisfies $\hat b_j = o_p(\widehat{SE}_j)$, so the JC and standard intervals are asymptotically equivalent. However, for the HAL dose-response estimator with slower convergence rate $\sqrt{n/J_n}$, the bias $b_j$ can be of the same order as $\widehat{SE}_j$ or larger, as \cref{fig:hal_bias_diagnosis} confirms. The jackknife bias estimate $\hat b_j$ is computed from only $V = 20$ pseudo-values and is therefore noisy: its estimation error $\hat b_j - b_j$ is not negligible relative to $\widehat{SE}_j$. Shifting the center by $\hat b_j$ introduces a first-order noise term that inflates the effective variance of the centered estimator without reliably reducing its bias.  In contrast, the BC approach widens the interval by $|\hat b_j|$, which is valid by the triangle inequality argument in \cref{thm:bias_corrected}: the widening accommodates bias regardless of the sign of the estimation error $\hat b_j - b_j$. These results confirm the paper's recommendation to center confidence intervals at the full-sample estimator $\hat\Psi_j(P_n)$ and address bias through widening (BC) rather than shifting (JC).

\section{Diagnostic: Highlighting Unreliable Grid Points in a Simultaneous Band}\label{sec:app_diagnostic}

The simultaneous confidence band \eqref{eq:simul_band} presumes that the asymptotically linear expansion is equally reliable at every grid point $j$. In finite samples this may fail: at some $j$ the remainder $R_{n,j}$ can be large relative to the standard error, so the pointwise interval --- and hence its contribution to the simultaneous band --- is suspect. This is a property of the underlying \emph{estimator}, not of the $V$-fold construction; it would be present under any inferential procedure that uses the same estimator.

Because the jackknife already produces a fold-level bias estimate $\hat b_j = (V-1)\big(\bar\Psi_{j,(-v)} - \hat\Psi_j(P_n)\big)$ (see \eqref{eq:jack_bias}) and a fold-level standard error $\widehat{SE}_j$, the ratio
\[
\rho_j = \frac{|\hat b_j|}{\widehat{SE}_j}
\]
is computed at no extra cost and serves as a natural point-level reliability diagnostic: large $\rho_j$ indicates grid points where the linearization is under stress, and small $\rho_j$ indicates well-targeted points.

We therefore recommend plotting the simultaneous band with a \emph{color overlay} proportional to $\rho_j$ --- for example, shading grid points with $\rho_j \ge \tau$ (e.g.\ $\tau = 1$) to warn the reader that coverage may be unreliable there. This is a visualization aid, not a modification of the simultaneous band: the band \eqref{eq:simul_band} remains defined on the full grid $\{1,\dots,m\}$, and the overall $(1-\alpha)$-simultaneous coverage guarantee is unaffected. Analysts who prefer to report a confidence band over a restricted set --- e.g.\ to describe the dose-response curve only at well-targeted doses --- may use the flagged points as an \emph{informal} data-adaptive restriction, with the understanding that post-selection inference issues then apply and the simultaneous coverage over the selected set is not formally controlled at level $1-\alpha$.

The bias-corrected bands from \cref{sec:bias_corrected_bands} and this diagnostic serve complementary roles: the former widens intervals by $|\hat b_j|$ to provide asymptotic coverage when $\hat b_j$ estimates $b_j$ well, while the latter transparently communicates to the user where the linearization assumption is most strained.

\section{Isolating the Effect of Correlation Matrix Estimation}\label{sec:app_corr_isolation}

In the simultaneous confidence band simulations of Section~\ref{sec:sim_km}, changing $V$ simultaneously alters three components of the inference procedure: (i)~the quality of the estimated correlation matrix $\hat R_{\mathrm{Jack}}$ (more folds yield a higher-rank, more precise estimate), (ii)~the degrees of freedom $\nu = V - 1$ in the $t$-distribution used for the Monte Carlo critical value, and (iii)~the standard error estimates. This confounding makes it difficult to attribute coverage differences across $V$ to any single factor. To disentangle these effects, we conduct a supplementary simulation that \emph{isolates} the correlation matrix by fixing the other two components.

Specifically, we fix the degrees of freedom at $\nu = 19$ (corresponding to $V = 20$), use the standard errors from a $V = 20$ jackknife, and center all confidence bands at the full-data Kaplan--Meier estimate. The only quantity that varies is the source of the correlation matrix: $\hat R_{\mathrm{Jack}}$ is estimated from $V_{\mathrm{corr}} \in \{5, 10, 20\}$ folds. The simultaneous confidence band is then
\[
\hat S(t) \pm \hat q_\alpha\!\bigl(\hat R_{V_{\mathrm{corr}}},\, \nu = 19\bigr) \cdot \widehat{SE}_{20}(t),
\]
where $\hat q_\alpha(\hat R_{V_{\mathrm{corr}}}, \nu = 19)$ denotes the Monte Carlo critical value computed from Algorithm~1 using $\hat R_{V_{\mathrm{corr}}}$ as the correlation matrix and $\nu = 19$ degrees of freedom. The data generating process, sample size ($n = 200$), evaluation grid ($m = 91$ points on $[2, 20]$), and number of replications ($500$) are identical to Section~\ref{sec:sim_km}. We note that the ``Standard'' and ``Isolation'' rows in \cref{tab:corr_isolation} come from separate simulation runs with independent fold assignments, though both use the same data-generating mechanism and random seed for dataset generation.

\begin{table}[htbp]
\centering
\caption{Decomposing the effect of $V$ on simultaneous coverage for the Kaplan--Meier survival function ($n=200$, $m=91$, 500 replications). ``Standard'' rows use $V$ folds for all components (correlation matrix $\hat R_{\mathrm{Jack}}$, degrees of freedom $\nu = V{-}1$, and standard errors). ``Isolation'' rows fix $\nu = 19$ and SE from $V{=}20$, varying only the correlation matrix source $\hat R_{\mathrm{Jack}}$ estimated from $V_{\mathrm{corr}}$ folds.}
\label{tab:corr_isolation}
\begin{tabular}{llccccc}
\toprule
Setting & $V$ & $\nu$ & Simult.\ Cov.\ (\%) & Mean $q_\alpha$ & Mean Width & $\hat r(\hat R)$ \\
\midrule
Standard & 5 & 4 & 93.6 & 6.52 & 0.454 & 1.6 \\
Standard & 10 & 9 & 93.4 & 3.91 & 0.282 & 1.7 \\
Standard & 20 & 19 & 94.6 & 3.28 & 0.239 & 1.7 \\
\addlinespace[3pt]
Isolation & 5 & 19 & 92.8 & 3.11 & 0.229 & 1.6 \\
Isolation & 10 & 19 & 93.8 & 3.23 & 0.237 & 1.7 \\
Isolation & 20 & 19 & 95.2 & 3.28 & 0.240 & 1.7 \\
\bottomrule
\end{tabular}
\end{table}

\cref{tab:corr_isolation} reveals that the correlation matrix quality and the $t$-distribution tail behavior act in \emph{opposite directions} and partially cancel. In the isolation rows, where only the correlation matrix varies, the coverage gap between $V_{\mathrm{corr}} = 5$ and $V_{\mathrm{corr}} = 20$ is $2.4$ percentage points ($92.8\%$ versus $95.2\%$): the rank-deficient $\hat R$ from $V_{\mathrm{corr}} = 5$ underestimates the $\max|T_j^\infty|$ quantile, yielding a smaller critical value ($\bar q_\alpha = 3.11$ versus $3.28$) and narrower bands. In the standard simulation, however, the total gap between $V = 5$ and $V = 20$ is only $1.0$ percentage point ($93.6\%$ versus $94.6\%$). The difference is explained by the heavy tails of $t_4$: for the standard $V = 5$ procedure, the $t_4$-distributed Monte Carlo draws inflate the critical value to $\bar q_\alpha = 6.52$---more than double the isolation value---producing much wider bands ($0.454$ versus $0.229$) that compensate for the imprecise $\hat R$. Thus, the heavier $t_{V-1}$ tails for small $V$ serve as a natural safeguard, inflating the critical value enough to partially offset the less precise correlation matrix estimate. The estimated effective rank $\hat r(\hat R) \approx 1.6$--$1.7$ is stable across all settings, confirming the low intrinsic dimensionality of the Kaplan--Meier correlation structure discussed in \cref{sec:eigen_validity}. Even $V = 5$, which yields a correlation matrix of rank at most~4, far exceeds this effective rank, and the resulting simultaneous coverage remains close to nominal.

\section{\textcolor{black}{Derivation of the Growth-Rate Condition on $V$}}\label{app:V_rate}
\begingroup\color{black}

The diverging-$V$ results (Theorems~\ref{thm:divergingV} and~\ref{thm:divergingV_slower}) assume a fold-level RMS bound on the remainder differences. This appendix gives sufficient TMLE, one-step, and AIPW-style structural conditions under which that assumption follows from an explicit upper bound on the growth rate of $V$.

\textcolor{black}{This derivation proceeds under an oracle formulation treating $D_n$ as conditionally fixed. If $D_n$ is fully data-adaptive, rigorous verification requires standard stochastic equicontinuity conditions in addition to the rate calculations below.}

{
Before imposing the structural decomposition below, a coarse direct-rate diagnostic is useful. Suppose that, for some total-remainder rate $r_n$,
\[
\|d_n\|_{V,2}=O_p(r_n),
\qquad
d_{n,v}=R_n(P_n,P_0)-R_n(P_{n,-v},P_0).
\]
This follows, for example, from the crude bound
$|d_{n,v}|\le |R_n(P_n,P_0)|+|R_n(P_{n,-v},P_0)|$ if the full-sample and leave-fold-out total remainders are uniformly $O_p(r_n)$. The stability condition in Theorem~\ref{thm:divergingV_slower} then requires
\[
r_n=o\!\left(\frac{\sigma_n}{\sqrt n\,V}\right),
\qquad\text{or equivalently}\qquad
V=o\!\left(\frac{\sigma_n}{\sqrt n\,r_n}\right).
\]
For a standard TMLE construction, where the target influence-curve variance is bounded so that $\sigma_n=O(1)$, the total remainder rate $r_n$ typically combines the exact second-order remainder and the empirical-process term involving the estimated efficient influence curve. If the nuisance estimators are fit by $0$-order HAL, these two pieces are controlled at the HAL rate $r_n=O_p(n^{-2/3})$, and the direct diagnostic gives the sufficient growth restriction $V=o(n^{1/6})$. If the nuisance estimators are fit by $1$st-order HAL, the corresponding rate is $r_n=O_p(n^{-4/5})$, giving $V=o(n^{3/10})$. Throughout this appendix, $0$-order HAL enters only as a \emph{nuisance} estimator, for which only convergence rates (and the heuristic oracle linearization scale $\tilde\sigma_n \asymp n^{1/6}$) are required; the pointwise-normality theory invoked for the \emph{target} dose-response curve requires $k \ge 1$ \citep[Corollary~1]{vanderLaan2023}. This direct diagnostic is mathematically valid but conservative, because it does not exploit that $d_{n,v}$ is a fold perturbation. The structured calculation below uses this perturbation to obtain the sharper sufficient bound.
}

\subsection*{G.1 Bounding the leading linear difference}

Let $P_{n,-v}$ be the empirical measure after deleting fold $v$, with $|I_v|=n/V$ and $n_v=n(V-1)/V$. For notational simplicity, write fold $v$ as the last block. Then the difference between the full-sample and leave-fold-out empirical averages of an arbitrary function $f$ is
\begin{align*}
P_n f - P_{n,-v} f
&= \Big(\tfrac{1}{n} - \tfrac{1}{n_v}\Big)\sum_{i=1}^{n_v}f(O_i) + \frac{1}{n}\sum_{i=n_v+1}^{n}f(O_i) \\
&= \underbrace{\frac{n_v - n}{n}\cdot\frac{1}{n_v}\sum_{i=1}^{n_v}f(O_i)}_{A}
 + \underbrace{\frac{1}{V}\cdot\frac{1}{n/V}\sum_{i=n_v+1}^{n}f(O_i)}_{B}.
\end{align*}
Because $(n_v-n)/n = -1/V$, direct variance calculations yield $\Var(A) \asymp \Var_{P_0}(f) / (nV^2)$ and $\Var(B) = \Var_{P_0}(f) / (nV)$. Since $\Var(B)$ dominates by a factor of $V$, we obtain the following foldwise rate.

\begin{lemma}[Linear-term rate]
\label{lem:linear_rate}
For a function $f$ with $\Var_{P_0}(f) = \sigma^2 < \infty$,
\[
|P_n f - P_{n,-v} f|
= O_p\!\left(\frac{\sigma}{\sqrt{n}\,\sqrt{V}}\right).
\]
Consequently,
\[
\left\{\frac{1}{V}\sum_{v=1}^V (P_n f-P_{n,-v}f)^2\right\}^{1/2}
=
O_p\!\left(\frac{\sigma}{\sqrt{n}\,\sqrt{V}}\right).
\]
\end{lemma}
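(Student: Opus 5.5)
The plan is a second-moment computation followed by Chebyshev's inequality for the pointwise statement, and then Markov's inequality for the RMS statement. First reduce to a mean-zero function. The difference $P_n f - P_{n,-v} f$ is invariant under replacing $f$ by $f - P_0 f$, because both empirical measures have total mass one. So we may assume $P_0 f = 0$ and $\E_{P_0} f^2 = \sigma^2$.

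Next, rewrite the difference in a form where the two pieces are independent. Write $P_n = \frac{V-1}{V}P_{n,-v} + \frac1V P^1_{n,v}$. This gives
\[
P_n f - P_{n,-v} f = \frac1V\big(P^1_{n,v} f - P_{n,-v} f\big),
\]
which agrees with the decomposition $A+B$ above. The two averages are built from disjoint sets of i.i.d.\ observations, so they are independent and mean zero. Their variances are $V\sigma^2/n$ and $V\sigma^2/\{n(V-1)\}$. Hence
\[
\E\big(P_n f - P_{n,-v} f\big)^2 = \frac{1}{V^2}\cdot\frac{V\sigma^2}{n}\Big(1+\frac{1}{V-1}\Big) = \frac{\sigma^2}{n(V-1)} \le \frac{2\sigma^2}{nV}
\]
for $V\ge2$. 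Chebyshev's inequality then yields the first claim, $|P_n f - P_{n,-v} f| = O_p(\sigma/\sqrt{nV})$. The bound holds uniformly in $v$ and in $(n,V)$, so it applies equally for fixed $V$ and for $V=V_n\to\infty$.

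For the RMS statement, note that the bound above is an exact second moment that is identical for every fold. Averaging therefore gives
\[
\E\Big\{\frac1V\sum_{v=1}^V (P_nf-P_{n,-v}f)^2\Big\} = \frac{\sigma^2}{n(V-1)}.
\]
Markov's inequality applied to this nonnegative average then gives that the average is $O_p(\sigma^2/(nV))$. Taking square roots yields the stated rate.

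The only point that needs care, and it is a mild one, is the second claim. It cannot be obtained by bounding each $v$ separately and taking a maximum, because a union bound over $V$ folds would cost a factor depending on $V$ when $V$ diverges. Working with the exact expectation of the fold average avoids this, since linearity of expectation gives a $V$-free constant. No moment beyond the second is needed, and the mean-zero reduction ensures that $\sigma^2$ here means $\Var_{P_0}(f)$, as in the statement.
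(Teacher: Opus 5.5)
Your proof is correct and follows the paper's route: the paper's split $A+B$ is exactly your identity $P_nf-P_{n,-v}f=\frac{1}{V}\bigl(P^1_{n,v}f-P_{n,-v}f\bigr)$, and it uses the same independent-variance computation, in which the held-out fold term dominates, followed by Chebyshev. Your averaged-second-moment plus Markov argument for the RMS bound makes explicit a step the paper only asserts with ``Consequently,'' and it correctly avoids any union bound over a diverging number of folds $V$.
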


\subsection*{G.2 Formalizing the remainder and nuisance expansion}

We formalize the properties of second-order remainders into the following sufficient structural assumption, explicitly separating the variance rate of the target parameter from the variance rate of the nuisance estimation error. Let $\sigma_n^2 = \Var_{P_0}(\varphi_n)$ denote the variance of the target parameter's influence curve, and let $\tilde\sigma_{n}^2$ denote the variance scale of the nuisance estimator's leading linear approximation.

\begin{assumption}[Nuisance Linearization and Remainder Structure]\label{assump:remainder_structure}
Let $\hat{\eta}_n$ denote the nuisance estimator based on a sample of size $n$, let $\hat{\eta}_{n,-v}$ denote the corresponding leave-fold-out nuisance estimator, and let $\|\cdot\|_{\eta}$ be a nuisance-error norm, such as an $L_2(P_0)$ norm or a product norm over nuisance components. We assume:
\begin{enumerate}
    \item \textbf{Nuisance Linearization:} The difference $\hat{\eta}_n - \eta_0$ admits the expansion
    \[
    \hat{\eta}_n - \eta_0 = P_n D_n + r_n,
    \qquad
    \hat{\eta}_{n,-v} - \eta_0 = P_{n,-v}D_n + r_{n,-v},
    \]
    where $D_n$ is a mean-zero score-like term satisfying $\E_{P_0}[D_n^2] \asymp \tilde\sigma_{n}^2$, \textcolor{black}{and the residual terms satisfy}
    {\color{black}
    \[
    \left\{ \frac{1}{V}\sum_{v=1}^V
    \|r_n-r_{n,-v}\|_\eta^2
    \right\}^{1/2}
    =
    o_p\!\left(\frac{\tilde\sigma_n}{\sqrt{nV}}\right)
    \]
    and
    \[
    \|r_n\|_\eta+
    \left\{\frac{1}{V}\sum_{v=1}^V
    \|r_{n,-v}\|_\eta^2
    \right\}^{1/2}
    =
    o_p\!\left(\frac{\tilde\sigma_n}{\sqrt n}\right).
    \]}
    \item \textbf{Remainder Structure:} The remainder admits the exact symmetric bilinear expansion
    \[
    R(P_n,P_0)
    =
    \mathcal B_n(\hat\eta_n-\eta_0,\hat\eta_n-\eta_0)+\rho_n,
    \]
    with the analogous expression for $R(P_{n,-v},P_0)$, where $\mathcal B_n$ is a bounded symmetric bilinear form satisfying
    \[
    |\mathcal B_n(f,g)|\lesssim \|f\|_{\eta}\|g\|_{\eta},
    \]
    and \textcolor{black}{the residual remainder difference satisfies}
    {\color{black}
    \[
    \left\{\frac{1}{V}\sum_{v=1}^V
    (\rho_n-\rho_{n,-v})^2
    \right\}^{1/2}
    =
    o_p\!\left(\frac{\tilde\sigma_n^2}{n\sqrt V}\right).
    \]}
\end{enumerate}
This assumption is a sufficient condition for common TMLE, one-step, and AIPW second-order remainders, whose leading terms are controlled by products of nuisance estimation errors. It is not a generic consequence of regular asymptotic linearity.
\end{assumption}

{
\begin{remark}[Interpretation of the residual term]
For TMLE, one-step, and AIPW-style estimators, $R_n$ is the total remainder in the asymptotic-linear expansion. The bilinear term above represents the exact second-order remainder. The residual $\rho_n$ should therefore be read as the part not captured by that exact bilinear form, typically an empirical-process term involving the estimated efficient influence curve, such as
$(P_n-P_0)\{D^*(\hat P_n)-D^*(P_0)\}$, together with its leave-fold-out analogue.

For $0$-order HAL nuisance estimators in a TMLE construction, $\tilde\sigma_n\asymp n^{1/6}$, so the residual condition in Assumption~\ref{assump:remainder_structure} becomes
\[
\left\{\frac{1}{V}\sum_{v=1}^V(\rho_n-\rho_{n,-v})^2\right\}^{1/2}
=o_p\!\left(\frac{n^{-2/3}}{\sqrt V}\right).
\]
For $1$st-order HAL nuisance estimators, $\tilde\sigma_n\asymp n^{1/10}$, and the same condition becomes
\[
\left\{\frac{1}{V}\sum_{v=1}^V(\rho_n-\rho_{n,-v})^2\right\}^{1/2}
=o_p\!\left(\frac{n^{-4/5}}{\sqrt V}\right).
\]
Thus the intended HAL verification is that the empirical-process residual $\rho_n$ has the HAL rate, and the full-versus-leave-fold-out comparison supplies the same fold-perturbation factor $V^{-1/2}$. With the required little-$o_p$ margin, this gives the fold-RMS residual-difference rates $o_p(n^{-2/3}/\sqrt V)$ for $0$-order HAL and $o_p(n^{-4/5}/\sqrt V)$ for $1$st-order HAL, so the residual condition is satisfied. The rate statement should be understood as applying to the fold-RMS difference $\rho_n-\rho_{n,-v}$, rather than only to the marginal size of a single residual term.
\end{remark}
}

\subsection*{G.3 Bounding the remainder difference}

Define
\[
d_{n,v}=R(P_n,P_0)-R(P_{n,-v},P_0),
\qquad
\|d_n\|_{V,2}=\left(\frac{1}{V}\sum_{v=1}^V d_{n,v}^2\right)^{1/2}.
\]
Under Assumption~\ref{assump:remainder_structure}, introduce the two nuisance-error factors
\[
A_{n,v}=\hat\eta_n-\hat\eta_{n,-v},
\qquad
C_{n,v}=\hat\eta_n+\hat\eta_{n,-v}-2\eta_0 .
\]
Using the symmetric bilinear form, the leading part of the remainder difference can be written as
\begin{align*}
d_{n,v}
&=
\mathcal B_n(\hat\eta_n-\eta_0,\hat\eta_n-\eta_0)
-
\mathcal B_n(\hat\eta_{n,-v}-\eta_0,\hat\eta_{n,-v}-\eta_0)
\\
&\quad +(\rho_n-\rho_{n,-v})\\
&=
\mathcal B_n(A_{n,v},C_{n,v})
+(\rho_n-\rho_{n,-v}).
\end{align*}

The two factors have different orders. The first factor is the leave-fold-out perturbation of the nuisance estimator. By the linear expansion in Assumption~\ref{assump:remainder_structure},
\[
A_{n,v}
=
(P_n-P_{n,-v})D_n+(r_n-r_{n,-v}).
\]
Thus Lemma~\ref{lem:linear_rate}, applied to $D_n$ with variance scale $\tilde\sigma_n^2$, and the fold RMS residual bound give
\[
\left\{\frac{1}{V}\sum_{v=1}^V
\|A_{n,v}\|_{\eta}^2\right\}^{1/2}
=
O_p\!\left(\frac{\tilde\sigma_n}{\sqrt n\,\sqrt V}\right)
+o_p\!\left(\frac{\tilde\sigma_n}{\sqrt{nV}}\right)
=
O_p\!\left(\frac{\tilde\sigma_n}{\sqrt n\,\sqrt V}\right),
\]
where the $V^{-1/2}$ improvement comes from comparing the full-sample estimator with a leave-fold-out estimator.

The second factor is the overall nuisance estimation error on the two samples:
\[
C_{n,v}
=
(P_n+P_{n,-v})D_n+(r_n+r_{n,-v}).
\]
It therefore has the ordinary nuisance-error size, rather than the leave-fold-out difference size:
\[
\left\{\frac{1}{V}\sum_{v=1}^V
\|C_{n,v}\|_{\eta}^2\right\}^{1/2}
=
O_p\!\left(\frac{\tilde\sigma_n}{\sqrt n}\right).
\]
Equivalently, since $C_{n,v}=2(\hat\eta_n-\eta_0)-A_{n,v}$, the common part of $C_{n,v}$ is of order $\tilde\sigma_n/\sqrt n$ and its fold-varying part is no larger than the already controlled $A_{n,v}$ term. In the oracle rate calculation below, we use this ordinary-size control foldwise:
\[
\max_{1\le v\le V}\|C_{n,v}\|_\eta
=
O_p\!\left(\frac{\tilde\sigma_n}{\sqrt n}\right).
\]

Now apply the boundedness of the bilinear form fold by fold:
\[
|\mathcal B_n(A_{n,v},C_{n,v})|
\lesssim
\|A_{n,v}\|_\eta \|C_{n,v}\|_\eta .
\]
Combining the fold RMS bound for $A_{n,v}$ with the ordinary nuisance-error size of $C_{n,v}$ gives
\[
\left\{\frac{1}{V}\sum_{v=1}^V
|\mathcal B_n(A_{n,v},C_{n,v})|^2
\right\}^{1/2}
\lesssim
\left(\max_{1\le v\le V}\|C_{n,v}\|_\eta\right)
\left\{\frac{1}{V}\sum_{v=1}^V
\|A_{n,v}\|_\eta^2
\right\}^{1/2}
=
O_p\!\left(\frac{\tilde\sigma_n^2}{n\sqrt V}\right).
\]
The residual remainder difference is smaller by Assumption~\ref{assump:remainder_structure}:
\[
\left\{\frac{1}{V}\sum_{v=1}^V
(\rho_n-\rho_{n,-v})^2
\right\}^{1/2}
=
o_p\!\left(\frac{\tilde\sigma_n^2}{n\sqrt V}\right).
\]
Therefore,
\[
\|d_n\|_{V,2}
\le
\left\{\frac{1}{V}\sum_{v=1}^V
|\mathcal B_n(A_{n,v},C_{n,v})|^2
\right\}^{1/2}
+
\left\{\frac{1}{V}\sum_{v=1}^V
(\rho_n-\rho_{n,-v})^2
\right\}^{1/2}
=
O_p\!\left(\frac{\tilde\sigma_n^2}{n\sqrt V}\right).
\]
We therefore conclude:

\begin{lemma}[Remainder-difference rate]
\label{lem:remainder_rate}
Under Assumption~\ref{assump:remainder_structure},
\[
\|d_n\|_{V,2}
= O_p\!\left(\frac{\tilde\sigma_{n}^{2}}{n\sqrt{V}}\right).
\]
\end{lemma}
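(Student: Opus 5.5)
The plan is to organize the argument already sketched in G.3 into a clean chain of three inequalities. Throughout, we work under the oracle formulation in which $D_n$ is treated as conditionally fixed. The first step is purely algebraic. Write the bilinear parts of $R(P_n,P_0)$ and $R(P_{n,-v},P_0)$ with $x=\hat\eta_n-\eta_0$ and $y=\hat\eta_{n,-v}-\eta_0$. Symmetry and bilinearity of $\mathcal B_n$ give the polarization identity $\mathcal B_n(x,x)-\mathcal B_n(y,y)=\mathcal B_n(x-y,x+y)$. Hence
\[
d_{n,v}=\mathcal B_n(A_{n,v},C_{n,v})+(\rho_n-\rho_{n,-v}).
\]
Minkowski's inequality for the fold RMS norm then splits $\|d_n\|_{V,2}$ into a bilinear piece and a residual piece. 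The residual piece is $o_p(\tilde\sigma_n^2/(n\sqrt V))$ directly by Assumption~\ref{assump:remainder_structure}(2).

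The second step controls the perturbation factor $A_{n,v}=(P_n-P_{n,-v})D_n+(r_n-r_{n,-v})$ in fold RMS. We apply Lemma~\ref{lem:linear_rate} to $D_n$ with variance scale $\tilde\sigma_n^2$. The natural route is an exact second-moment calculation: the decomposition into the terms $A$ and $B$ of G.1 gives $\E\{(P_n-P_{n,-v})D_n\}^2\asymp \tilde\sigma_n^2/(nV)$ uniformly in $v$. Averaging over $v$ and applying Markov's inequality then yields the fold-RMS bound $O_p(\tilde\sigma_n/\sqrt{nV})$, with no maximal inequality needed. In the $\|\cdot\|_\eta$ norm we use $\E\|(P_n-P_{n,-v})D_n\|_\eta^2\lesssim \E_{P_0}\|D_n\|_\eta^2/(nV)$, which holds by the same calculation since $D_n$ is mean zero. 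The residual $r_n-r_{n,-v}$ contributes $o_p(\tilde\sigma_n/\sqrt{nV})$ in fold RMS by assumption, and the triangle inequality combines the two contributions.

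The third step, which I expect to be the main obstacle, is the uniform ordinary-size bound $\max_v\|C_{n,v}\|_\eta=O_p(\tilde\sigma_n/\sqrt n)$. To obtain it, I would write $C_{n,v}=2(\hat\eta_n-\eta_0)-A_{n,v}$. The common term $\|P_nD_n+r_n\|_\eta$ is $O_p(\tilde\sigma_n/\sqrt n)$ by Chebyshev and the assumption on $\|r_n\|_\eta$. For the fold-varying part, a maximum over $v$ of $\|A_{n,v}\|_\eta$ is crude but sufficient: $\max_v\|A_{n,v}\|_\eta\le \sqrt V\{V^{-1}\sum_v\|A_{n,v}\|_\eta^2\}^{1/2}=O_p(\tilde\sigma_n/\sqrt n)$. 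This gives the needed uniform bound without any tail assumption on $D_n$. One could avoid the maximum altogether by applying Cauchy--Schwarz to $\{V^{-1}\sum_v\|A_{n,v}\|_\eta^2\|C_{n,v}\|_\eta^2\}^{1/2}$, but that route would require fourth moments, so the $\sqrt V$ trick is preferable.

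Finally, boundedness of $\mathcal B_n$ gives
\[
\Big\{V^{-1}\sum_v|\mathcal B_n(A_{n,v},C_{n,v})|^2\Big\}^{1/2}\lesssim \max_v\|C_{n,v}\|_\eta\,\Big\{V^{-1}\sum_v\|A_{n,v}\|_\eta^2\Big\}^{1/2}=O_p\Big(\frac{\tilde\sigma_n^2}{n\sqrt V}\Big).
\]
Adding the residual piece from the first step yields the claim.
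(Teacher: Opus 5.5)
Your proposal is correct and follows the paper's argument in G.3 step for step. Both use the polarization identity giving $d_{n,v}=\mathcal B_n(A_{n,v},C_{n,v})+(\rho_n-\rho_{n,-v})$, the fold-RMS bound $O_p(\tilde\sigma_n/\sqrt{nV})$ on $A_{n,v}$ via Lemma~\ref{lem:linear_rate}, the uniform bound $\max_v\|C_{n,v}\|_\eta=O_p(\tilde\sigma_n/\sqrt n)$, and the product bound from boundedness of $\mathcal B_n$. Your inequality $\max_v\|A_{n,v}\|_\eta\le\sqrt V\{V^{-1}\sum_v\|A_{n,v}\|_\eta^2\}^{1/2}$ is a useful addition, because it rigorously justifies the uniform control of $C_{n,v}$, which the paper only asserts by saying the fold-varying part is ``no larger than the already controlled $A_{n,v}$ term.''
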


\subsection*{G.4 Translating into a growth rate on $V$}

To satisfy the stability requirement of Theorem~\ref{thm:divergingV_slower}, we match the rate from Lemma~\ref{lem:remainder_rate} against the required $o_p\!\left(\frac{\sigma_n}{\sqrt{n}V}\right)$:
\[
\frac{\tilde\sigma_{n}^{2}}{n\sqrt{V}} = o\!\left(\frac{\sigma_n}{\sqrt{n}V}\right)
\;\Longleftrightarrow\;
\sqrt{V} = o\!\left(\frac{\sqrt{n}\,\sigma_n}{\tilde\sigma_{n}^{2}}\right)
\;\Longleftrightarrow\;
\boxed{\,V = o\!\left(n\frac{\sigma_n^2}{\tilde\sigma_{n}^{4}}\right).\,}
\]
This explicitly restricts how fast $V$ can grow to infinity, cleanly separating the variance of the target parameter's influence curve ($\sigma_n^2$) from the variance of the nuisance parameter's leading term ($\tilde\sigma_{n}^2$). For standard $\sqrt{n}$-rate RAL estimators (Theorem~\ref{thm:divergingV}), we have $\sigma_n = O(1)$, so the requirement reduces to $V = o(n / \tilde\sigma_n^4)$.

\subsection*{G.5 Examples}

Substituting specific rates yields several sufficient regimes depending on the complexity of the target and nuisance estimators:

\begin{itemize}
    \item \textbf{Standard RAL estimators with bounded-complexity nuisance linearization:} If $\sigma_n=O(1)$ and $\tilde\sigma_n=O(1)$, then the sufficient upper bound is $V=o(n)$. Thus standard RAL estimators can allow $V$ to grow nearly linearly, for example $V=\log n$ or $V=n^{\gamma}$ for any $\gamma<1$.
    \item \textbf{Standard RAL estimators with adaptive machine-learning nuisances:} If the target parameter is pathwise differentiable ($\sigma_n = O(1)$), but the nuisance functions have a diverging linearization scale $\tilde\sigma_n$, the sufficient upper bound tightens to $V = o(n / \tilde\sigma_n^4)$.
    \begin{itemize}
        \item If a $0$-order HAL-type nuisance linearization has $\tilde\sigma_n \asymp n^{1/6}$, then
        \[
        V = o\!\left(\frac{n}{(n^{1/6})^4}\right) \implies \boxed{\,V = o\!\bigl(n^{1/3}\bigr).\,}
        \]
        \item If a $1$st-order HAL-type nuisance linearization has $\tilde\sigma_n \asymp n^{1/10}$, then
        \[
        V = o\!\left(\frac{n}{(n^{1/10})^4}\right) \implies \boxed{\,V = o\!\bigl(n^{3/5}\bigr).\,}
        \]
    \end{itemize}
    
    \item \textbf{Generalized RAL estimators (e.g., HAL dose-response):} If the target parameter itself has a diverging influence-curve variance scale $\sigma_n \to \infty$, and if the nuisance and target linearization scales are comparable ($\tilde\sigma_n \asymp \sigma_n$), then the general sufficient bound $V = o\left(n \frac{\sigma_n^2}{\tilde\sigma_n^4}\right)$ simplifies to $V = o(n/\sigma_n^2)$. The following HAL exponents should be read as sufficient examples under this comparability assumption. They are illustrative oracle-rate consequences of the comparability assumption $\tilde\sigma_n \asymp \sigma_n$, not properties demonstrated in our simulations, which use first-order HAL only (Section~\ref{sec:sim_hal}).
    \begin{itemize}
        \item For first-order HAL (one degree of smoothness, $k=1$), if $\sigma_n \asymp \tilde\sigma_n \asymp n^{1/10}$, then
        \[
        V = o\!\left( \frac{n}{(n^{1/10})^2} \right) \implies \boxed{\,V = o\!\bigl(n^{4/5}\bigr).\,}
        \]
        \item For second-order HAL ($k=2$, so $k^{*}=k+1=3$ and the oracle dimension $J_n \asymp n^{1/(2k^{*}+1)} = n^{1/7}$ gives $\sigma_n \asymp \sqrt{J_n} \asymp n^{1/14}$), if $\sigma_n \asymp \tilde\sigma_n \asymp n^{1/14}$, then
        \[
        V = o\!\left( \frac{n}{(n^{1/14})^2} \right) \implies \boxed{\,V = o\!\bigl(n^{6/7}\bigr).\,}
        \]
	    \end{itemize}
	\end{itemize}
	For these polynomial-rate examples, a slow divergence such as $V \asymp \log n$ satisfies the required upper bounds.
\endgroup

\end{document}